\newtheorem{theorem}{Theorem}
\newtheorem{lemma}{Lemma}
\newtheorem{remark}{Remark}
\newtheorem{definition}{Definition}
\newcommand{\bunderline}[1]{\underline{#1\mkern-4mu}\mkern4mu }
\DeclareMathOperator*{\argmin}{arg\,min}
\DeclareMathOperator*{\argmax}{arg\,max}
\begin{document}
%
\title{Reliability function for streaming over a DMC with feedback}
%
%
%

\author{Nian~Guo,~\IEEEmembership{Member,~IEEE,}
        Victoria~Kostina,~\IEEEmembership{Senior Member,~IEEE}
\thanks{N. Guo and V. Kostina are with the Department
of Electrical Engineering, California Institute of Technology, Pasadena, CA, 91125 USA. E-mail: \{nguo,vkostina\}@caltech.edu. This work was supported in part by the National Science Foundation (NSF) under grants CCF-1751356 and CCF-1956386.}}

%
%

\markboth{}%
{Shell \MakeLowercase{\textit{et al.}}: Bare Demo of IEEEtran.cls for IEEE Journals}
%



\maketitle


\begin{abstract}
Conventionally, posterior matching is investigated in channel coding and block encoding contexts -- the source symbols are equiprobably distributed and are entirely known by the encoder before the transmission. In this paper, we consider a streaming source, whose symbols progressively arrive at the encoder at a sequence of deterministic times. We derive the joint source-channel coding (JSCC) reliability function for streaming over a discrete memoryless channel (DMC) with feedback. We propose a novel \emph{instantaneous encoding phase} that operates during the symbol arriving period and achieves the JSCC reliability function for streaming when followed by a block encoding scheme that achieves the JSCC reliability function for a classical source whose symbols are fully accessible before the transmission. During the instantaneous encoding phase, the evolving message alphabet is partitioned into groups whose priors are close to the capacity-achieving distribution, and the encoder determines the group index of the actual sequence of symbols arrived so far and applies randomization to exactly match the distribution of the transmitted index to the capacity-achieving one. Surprisingly, the JSCC reliability function for streaming is equal to that for a fully accessible source, implying that the knowledge of the entire symbol sequence before the transmission offers no advantage in terms of the reliability function. For streaming over a symmetric binary-input DMC, we propose a one-phase \emph{instantaneous small-enough difference (SED) code} that not only achieves the JSCC reliability function, but also, thanks to its single-phase time-invariant coding rule, can be used to stabilize an unstable linear system over a noisy channel.
For equiprobably distributed source symbols, we design low complexity algorithms to implement both the instantaneous encoding phase and the instantaneous SED code. The algorithms group the source sequences into sets we call types, which enable the encoder and the decoder to track the priors and the posteriors of source sequences jointly, leading to a log-linear complexity in time. While the reliability function is derived for non-degenerate DMCs, i.e., DMCs whose transition probability matrix has all positive entries, for degenerate DMCs, we design a code with instantaneous encoding that achieves zero error for all rates below Shannon's joint source-channel coding limit.
\end{abstract}

\begin{IEEEkeywords}
Channels with feedback, reliability function, joint source-channel coding, variable-length codes, streaming, causal coding, posterior matching, anytime codes, control over noisy channels.
\end{IEEEkeywords}

\section{Introduction}\label{Sec_intro}
This paper considers joint source-channel coding of streaming data over a DMC with full feedback using variable-length feedback codes. With the emergence of the Internet of Things, communication systems, such as those employed in distributed control and tracking scenarios, are becoming increasingly dynamic, interactive, and delay-sensitive. The source symbols in such real-time systems arrive at the encoder in a streaming fashion. For example, the height and the speed data of an unmanned aerial vehicle stream into the encoder in real time. An intriguing question is: What codes can transmit streaming data with both high reliability and low latency over a channel with feedback? Classical posterior matching schemes \cite{Horstein}--\cite{Yang} can reliably transmit messages over a channel with feedback but under the assumption that the source sequence is fully accessible to the encoder before the transmission. One can simply buffer the arriving data into a block and then transmit the data block using a classical posterior matching scheme. Intuitively, the buffer-then-transmit code is a good choice if the buffering time is negligibly short, i.e., if data packets arrive at the encoder at an extremely fast rate. However, if data packets arrive at the encoder steadily rather than in a burst, the buffer-then-transmit code becomes ill-suited due to the delay introduced by collecting data into a block before the transmission \cite{Nian}. The encoder in this paper performs \emph{instantaneous} encoding: it starts transmitting as soon as the first message symbol arrives and incorporates new message symbols into the continuing transmission on the fly. Like classical posterior matching schemes, it relies on full channel feedback.

Designing good channel block encoding schemes with feedback is a classical problem in information theory \cite{Horstein}--\cite{Yang}, since feedback, though unable to increase the capacity of a memoryless channel \cite{Shannon}, can simplify the design of capacity-achieving codes \cite{Horstein}--\cite{Shayevitz} and improve achievable delay-reliability tradeoffs \cite{Burnashev}\cite{PPV}. The underlying principle behind capacity-achieving block encoding schemes with feedback \cite{Horstein}--\cite{Yang}, termed posterior matching \cite{Shayevitz}, is to transmit a channel input that has two features. First, the channel input is independent of the past channel outputs, representing the new information in the message that the decoder has not yet observed. Second, the probability distribution of the channel input is matched to the capacity-achieving one using the posterior of the message. 

While asymptotically achieving the channel capacity ensures the best possible transmission rates in the limit of large delay, optimizing the tradeoff between delay and reliability is critical for time-sensitive applications. 
The delay-reliability tradeoff is often measured by the reliability function (a.k.a. optimal error exponent), which is defined as the maximum rate of the exponential decay of the error probability at a rate strictly below the channel capacity as the blocklength is taken to infinity. It is a classical fundamental limit that helps to gain insight into the finite blocklength performance of codes via large deviations theorems in probability.  In the context of channel coding, the reliability function of a DMC with feedback was first shown by Burnashev \cite{Burnashev}.  Variable-length channel codes with block encoding that achieve Burnashev's reliability function are proposed in \cite{Burnashev}--\cite{Naghshvar2}, \cite{Yang}. 
Burnashev's \cite{Burnashev} and Yamamoto and Itoh (Y-I)'s schemes \cite{Yamamoto} are structurally similar in that they both have two phases. In the communication phase, the encoder matches the distribution of its output to the capacity-achieving input distribution, while aiming to increase the decoder's belief about the true message. In the confirmation phase, the encoder repeatedly transmits one of two symbols indicating whether or not the decoder's estimate at the end of the communication phase is correct. 
Caire et al. \cite{Caire} showed that the code transmitted in the communication phase of the Y-I scheme can be replaced by any non-feedback block channel code, provided that the error probability of the block code is less than a constant determined by the code rate as the blocklength goes to infinity. Naghshvar et al. \cite{Naghshvar2} challenged the convention of using a two-phase code \cite{Burnashev}--\cite{Caire} to achieve Burnashev's reliability function by proposing the MaxEJS code, which searches for the deterministic encoding function that maximizes an extrinsic Jensen-Shannon (EJS) divergence at each time. Since the MaxEJS code has a double exponential complexity in the length of the message sequence $k$, for symmetric binary-input DMCs, Naghshvar et al.~\cite{Naghshvar2} proposed a simplified encoding function that is referred to as the \emph{small-enough difference} (SED) rule in \cite{Antonini}. The SED encoder partitions the message alphabet into two groups such that the difference between group posteriors and the Bernoulli$\left(\frac{1}{2}\right)$ capacity-achieving distribution is small.  While the SED rule still has an exponential complexity in the length of the message, Antonini et al. \cite{Antonini} designed a systematic variable-length code for transmitting $k$ bits over a binary symmetric channel (BSC) with feedback that has complexity $O(k^2)$. The complexity reduction is realized by grouping messages with the same posterior. Yang et al. \cite{Yang} generalized Naghshvar et al.'s SED rule-based code \cite{Naghshvar2} to binary-input binary-output asymmetric channels.

While the message in \cite{Burnashev}--\cite{Yang} is equiprobably distributed on its alphabet, the JSCC reliability function for transmitting a non-equiprobable discrete-memoryless source (DMS) over a DMC has also been studied \cite{Gallager}--\cite{Truong}. For fixed-length almost lossless coding without feedback, Gallager \cite{Gallager} derived an achievability bound on the JSCC reliability function, which indicates that JSCC leads to a strictly larger error exponent than separate source and channel coding in some cases; Csisz\`{a}r \cite{Csiszar} provided achievability and converse bounds on the JSCC reliability function using random coding and type counting; Zhong et al. \cite{Zhong2} showed that Csisz\`{a}r's achievability bound \cite{Csiszar} is tighter than Gallager's bound \cite{Gallager} and provided sufficient conditions for the JSCC reliability function to be strictly larger than the separate source and channel coding reliability function. For variable-length lossy coding with feedback, Truong and Tan \cite{Truong} derived the JSCC excess-distortion reliability function under the assumption that $1$ source symbol is transmitted per channel use on average.  To achieve the excess-distortion reliability function, Truong and Tan \cite{Truong} used separate source and channel codes: the source is compressed down to its rate-distortion function, and the compressed symbols are transmitted using the Y-I communication phase, while the Y-I confirmation phase is modified to compare the uncompressed source and its lossy estimate instead of the compressed symbol and the estimate thereof. Due to the modification, some channel coding errors bear no effect on the overall decoding error, and the overall decoding error is dominated by the decoding error of the repetition code in the confirmation phase.

While most feedback coding schemes in the literature considered block encoding of a source whose outputs are accessible in their entirety before the transmission \cite{Horstein}--\cite{Yang}, \cite{Truong}, several existing works considered instantaneous encoding of a streaming source \cite{Sahai}--\cite{Antonini2}, \cite{Nian}. A large portion of them \cite{Sahai}--\cite{Lalitha} explores instantaneous (causal) encoding schemes for stabilizing a control system. The evolving system state is considered as a streaming data source, the observer instantaneously transmits information about the state to the controller, and the controller injects control signals into the plant.
Sahai and Mitter \cite{Sahai} defined the \emph{anytime} capacity at anytime reliability $\alpha$ as the maximum transmission rate $R$ (nats per channel use) such that the decoding error of the first $k$ $R$-nat symbols at time $t$ decays as $e^{-\alpha(t-k)}$ for any $k\leq t$; they showed that the scalar linear system can be stabilized provided that the logarithm of its unstable coefficient is less than the anytime capacity; they suggested that codes that lead to an exponentially decaying error have a natural tree structure (similar to Schulman's code \cite{Schulman} for interactive computing) that tracks the state evolution over time. Tree coding schemes for stabilizing control systems have been studied in \cite{RT}--\cite{Khina}. Assuming that the inter-arrival times of message bits are known by the decoder and that the channel is a BSC, Lalitha et al.~\cite{Lalitha} proposed an anytime code \cite{Sahai} that achieves a positive anytime reliability and derived a lower bound on the maximum rate that leads to an exponentially vanishing error probability. Instantaneous encoding schemes have also been studied in pure communication settings, where one may evaluate the error exponent \cite{Chang}\cite{Nian}, consider a streaming source with finite length \cite{Antonini2}\cite{Nian}, and allow non-periodic deterministic \cite{Lalitha} or random \cite{Nian} streaming times. Chang and Sahai \cite{Chang} considered instantaneous encoding of i.i.d. message symbols that arrive at the encoder at consecutive times for the transmission over a binary erasure channel (BEC) with feedback, and showed the zero-rate JSCC error exponent of erroneously decoding the $k$-th message symbol at time $t$ for fixed $k$ and $t\rightarrow\infty$. Antonini et al. \cite{Antonini2} designed a causal encoding scheme for $k<\infty$ streaming bits with a fixed arrival rate over a BSC and showed by simulation that the code rate approaches the channel capacity as the bit arrival rate approaches the transmission rate. In our previous work~\cite{Nian}, we proposed a code that uses an adapted SED rule \cite{Naghshvar2} to instantaneously transmit $k<\infty$ randomly arriving bits and that leads to an achievability bound on the reliability function for binary-input DMCs with instantaneous encoding, and we designed a polynomial-time version of it. While the instantaneous encoding schemes in \cite{Sahai}--\cite{Antonini2}, \cite{Nian} employ feedback, transmission schemes for streaming data \emph{without} feedback have been investigated for finite memory encoders \cite{Draper}, for distributed sources \cite{Draper2}, and for point-to-point channels in the moderate deviations \cite{Lee2} and the central limit theorem \cite{Lee1} regimes. 

In this paper, we propose a novel coding phase -- the instantaneous encoding phase -- for transmitting a sequence of $k$ source symbols over a DMC with feedback. It performs instantaneous encoding during the arriving period of the symbols. At time $t$, the encoder and the decoder calculate the priors of all possible symbol sequences using the source distribution and the posteriors at time $t-1$. Then, they partition the evolving message alphabet into groups, so that the group priors are close to the capacity-achieving distribution. In contrast to Naghshvar et al.'s SED rule \cite{Naghshvar2} for symmetric binary-input channels, our partitioning rule applies to any DMCs, and it uses group priors instead of group posteriors for the partitioning. Using group priors is necessary because if a new symbol arrives at time $t$, the posteriors at time $t-1$ are insufficient to describe the symbol sequences at time $t$. Feedback codes with block encoding \cite{Horstein}--\cite{Yang}, \cite{Truong} only need to consider the posteriors, since block encoding implies that the priors at time $t$ are equal to the posteriors at time $t-1$. Once the groups are partitioned, the encoder determines the index of the group that contains the true symbol sequence it received so far and applies randomization to match the distribution of the transmitted index to the capacity-achieving one.

We derive the JSCC reliability function for the almost lossless transmission of a discrete streaming source over a DMC with feedback. Since allowing the encoder to know the entire source sequence before the transmission will not decrease the reliability function, converse bounds for a classical fully accessible source pertain. We extend Berlin et al.'s converse bound \cite{Berlin} for Burnashev's reliability function to JSCC. For fully accessible sources, we show that the converse is achievable by a variable-length joint source-channel code with block encoding
-- the MaxEJS code \cite{Naghshvar2}. For a source whose symbols arrive at the encoder with an infinite arriving rate (symbols per channel use) as the source length goes to infinity, we show that the converse is achievable by the buffer-then-transmit code that buffers the arriving symbols during the symbol arriving period and implements a block encoding scheme that achieves the JSCC reliability function for a fully accessible source after the arriving period. For example, a classical fully accessible source 
has an infinite symbol arriving rate because its symbols arrive all at once. Yet, this buffer-then-transmit code fails to achieve the JSCC reliability function for streaming if the source symbols arrive at the encoder with a finite arriving rate of symbols per channel use. For streaming symbols with an arriving rate greater than $\frac{1}{\bunderline{H}}\left(H(P_{Y}^*) - \log\frac{1}{p_{\max}}\right)$, we show that preceding any code with block encoding that achieves the JSCC reliability function for a fully accessible source by our instantaneous encoding phase will make it achieve the block encoding error exponent as if the encoder knew the entire source sequence before the transmission. Here $\bunderline{H}$ is a lower bound on the information in the streaming source and is equal to the source entropy rate if the source is information stable, $H(P_Y^*)$ is the entropy of the channel output distribution induced by the capacity-achieving channel input distribution, and $p_{\max}$ is the maximum channel transition probability. Thus, surprisingly, the JSCC reliability function for streaming is equal to that for a fully accessible source. Furthermore, we show via simulations that the reliability function gives a surprisingly good approximation to the delay-reliability tradeoffs attained by the JSCC reliability function-achieving codes in the ultra-short blocklength regime. 

The above discussion highlights the existence of a sequence of codes with instantaneous encoding indexed by the length of the source sequence $k$ that achieves the JSCC reliability function as $k\rightarrow\infty$. However, in the remote tracking and control scenarios, a single code that can choose to decode any $k$ symbols of a streaming source at any time $t$ with an error probability that decays exponentially with the decoding delay (i.e., an anytime code \cite{Sahai}) is desired. To this end, we design the \emph{instantaneous small-enough difference (SED) code}. The instantaneous SED code is similar to the instantaneous encoding phase except that it continues the transmissions after the symbol arriving period, drops the randomization step, and specifies the group partitioning rule to the instantaneous SED rule. The instantaneous SED code is also similar to the instantaneous encoding scheme in our previous work \cite{Nian} designed for transmitting a streaming source with random symbol arriving times unknown to the decoder, except that \cite{Nian} used an instantaneous \emph{smallest-difference} rule. The instantaneous smallest-difference rule minimizes the difference between the group priors and the capacity-achieving probabilities, whereas the instantaneous SED rule only drives their difference small enough. The instantaneous SED rule reduces to Naghshvar et al.'s \cite{Naghshvar2} SED rule if the source is fully accessible before the transmission. In contrast to the instantaneous encoding phase followed by a block encoding scheme, the instantaneous SED code only has one phase, namely, it follows the same transmission strategy at each time. For transmitting i.i.d. Bernoulli$\left(\frac{1}{2}\right)$ bits that arrive at the encoder at consecutive times over a BSC($0.05$), simulations of the instantaneous SED code show that the error probability of decoding the first $k=[4\colon4\colon16]$ bits at times $t\in [4,64]$, $t\geq k$, decreases exponentially with an anytime reliability $\alpha\simeq 0.172$, outperforming the theoretical anytime reliability of Lalitha et al's anytime code \cite{Lalitha}. This implies that the binary instantaneous SED code can be used to stabilize an unstable linear system with bounded noise \cite{Sahai}--\cite{Lalitha}. Although the achievability of a positive anytime reliability is evidenced by the simulation, it is difficult to prove this analytically since one cannot leverage the submartingales and the bounds on the expected decoding time of a block encoding scheme in \cite{Burnashev}, \cite{Naghshvar2}. Nevertheless, we show that a sequence of instantaneous SED codes indexed by the length of the symbol sequence $k$ achieves the JSCC reliability function for streaming over a Gallager-symmetric \cite[p. 94]{Gallager} binary-input DMC. This result is based on our finding that, after dropping the randomization step, the instantaneous encoding phase continues to achieve the JSCC reliability function when followed by a reliability function-achieving block encoding scheme, but at a cost of increasing the lower bound on the symbol arriving rate to $\frac{1}{\log\frac{1}{p_{S,\max}}}\left(\log\frac{1}{p_{\min}}-\log\frac{1}{p_{\max}}\right)$. Here, $p_{S,\max}$ is the maximum symbol arriving probability and $p_{\min}$ is the minimum channel transition probability.

Since the size of the evolving source alphabet grows exponentially in time $t$, the complexities of the instantaneous encoding phase and the instantaneous SED code are exponential in time $t$. For the source symbols that are equiprobably distributed, we design low-complexity algorithms for both codes that we term \emph{type-based} codes. The complexity reduction is achieved by judiciously partitioning the evolving source alphabet into \emph{types}. The cardinality of the partition is $O(t)$, i.e., it is exponentially smaller than the size of the source alphabet. The type partitioning enables the encoder and the decoder to update the priors and the posteriors of the source sequences as well as to partition source sequences in terms of types rather than individual sequences. Since the prior and the posterior updates have a linear complexity in the number of types, and the type-based group partitioning rule has a log-linear complexity in the number of types due to type sorting, our type-based codes only have a log-linear complexity $O(t\log t)$. Although Antonini et al.'s block encoding scheme for BSCs \cite{Antonini} attains a reduction in complexity also by grouping message sequences, the types in Antonini et al.'s scheme \cite{Antonini} are generated all at once by grouping the message sequences that have the same Hamming distance to the received channel outputs, while the types in our type-based codes evolve with the arrival of source symbols. 

For the transmission over a degenerate DMC, i.e., a DMC whose transition matrix contains a zero, we propose a code with instantaneous encoding that achieves zero error for all rates asymptotically below Shannon's JSCC limit. While feedback codes in most prior literature \cite{Yamamoto}--\cite{Yang}, \cite{Truong} are designed for non-degenerate DMCs, i.e., a DMC whose transition probability matrix has all positive entries, Burnashev \cite[Sec. 6]{Burnashev} constructed a channel code for degenerate DMCs that achieves zero error for all rates asymptotically below the channel capacity. Our code extends Burnashev's code \cite[Sec. 6]{Burnashev} to JSCC and to the streaming source. Similar to \cite{Burnashev}--\cite{Caire}, \cite{Truong}, our code is divided into blocks, and each block consists of a communication phase and a confirmation phase. Burnashev's \cite[Sec. 6]{Burnashev} communication phases use a block encoding scheme that can transmit reliably for all rates below the channel capacity. The communication phase in the first block of our scheme uses a code with instantaneous encoding that can transmit reliably for all rates below Shannon's JSCC limit; our $\ell$-th communication phase transmits the uncompressed source sequence to avoid compression errors, and uses random coding to establish an analyzable probability distribution of the decoding time. Our confirmation phase is the same as that of Burnashev's code \cite[Sec. 6]{Burnashev}: the encoder repeatedly transmits a pre-selected symbol that never leads to channel output $y$ if the decoder's estimate at the end of the communication phase is wrong, and transmits another symbol that can lead to $y$ if the estimate is correct. The confirmation phases rely on the degenerate nature of the channel to ensure zero error: receiving a $y$ secures an error-free estimate of the source.

The rest of the paper is organized as follows. In Section~\ref{Sec_II_A}, we formulate the problem and define the variable-length joint source-channel codes with instantaneous encoding. In Section~\ref{belief_phase}, we present the instantaneous encoding phase. In Section~\ref{Sec_reliability2}, we show the JSCC reliability function for streaming. In Section~\ref{Sec_instantaneous_SED}, we present the instantaneous SED code. In Section~\ref{Sec_practical}, we present the type-based codes with log-linear complexity. In Section~\ref{Sec_simulation}, we display the simulations of the instantaneous encoding phase, the instantaneous SED code, and their corresponding type-based codes. In Section~\ref{Sec_degenerate}, we present the zero-error code for degenerate DMCs. 

A part of this work is presented at the 2022 IEEE International Symposium on Information Theory \cite{Guo}. The conference version does not contain Sections~\ref{Sec_instantaneous_SED}--\ref{Sec_degenerate} or any proofs.

\textit{Notation:} $\log(\cdot)$ is the natural logarithm. Notation $X \leftarrow Y$ reads ``replace $X$ by $Y$''.  For any positive integer $q$, we denote $[q] \triangleq \{1,2,\dots,q\}$. We denote by $[q]^k$ the set of all $q$-ary sequences of length equal to $k$. For a possibly infinite sequence $x=\{x_1,x_2,\dots\}$, we write $x^n=\{x_1,x_2,\dots,x_n\}$ to denote the vector of its first $n$ elements, and we write $\{x_n\}_{n=n_1}^{n_2} = \{x_{n_1},x_{n_1+1},\dots,x_{n_2}\}$ to denote the vector formed by its $n_1, n_1+1,\dots, n_2$-th elements.  For a sequence of random variables $X_k$, $k=1,2,\dots$ and a real number $a\in\mathbb R$, we write $X_k\xrightarrow{\mathrm{i.p.}} a$ to denote that $X_k$ converges to $a$ in probability, i.e., $\lim_{k\rightarrow\infty}\mathbb P[|X_k-a|\geq \epsilon] = 0, ~\forall \epsilon >0$. For any set $\mathcal A$, we denote by $\mathbbm{1}_{\mathcal A}(x)$ an indicator function that is equal to $1$ if and only if $x\in\mathcal A$. For two positive functions $f,g\colon \mathbb Z_+\rightarrow \mathbb R_+$, we write $f(k)=o(g(k))$ to denote $\lim_{k\rightarrow \infty}\frac{f(k)}{g(k)}=0$; we write $f(k)=O(g(k))$ to denote $\limsup_{k\rightarrow\infty}\frac{f(k)}{g(k)}<\infty$; we write $f(k)=\Omega(g(k))$ to denote $\liminf_{k\rightarrow\infty}\frac{f(k)}{g(k)}>0$.

\section{Problem statement}\label{Sec_II_A}
Consider the setup in Fig.~\ref{problem2}. We formally define the discrete source that streams into the encoder as follows. 
\begin{figure}[h!]
\centering
\includegraphics[trim = 23mm 222mm 58mm 40mm, clip, width=9cm]{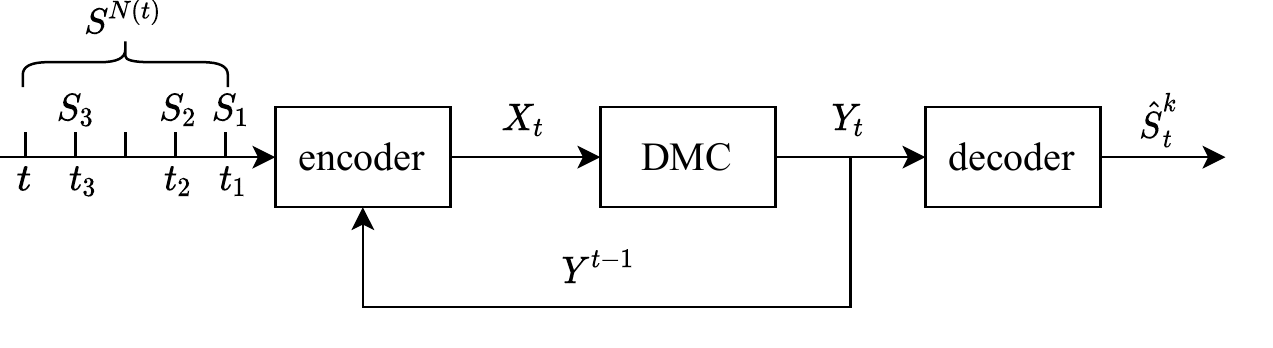}
\caption{Real-time feedback communication system with a streaming source.}
\label{problem2}
\end{figure}
\begin{definition}[A $(q,\{t_n\}_{n=1}^\infty)$ discrete streaming source (DSS)]\label{def_dss} We say that a source is a DSS if it emits a sequence of discrete source symbols $S_n\in[q]$, $n=1,2,\dots$ at times $t_1\leq t_2\leq \dots$, where symbol $S_n$ that arrives at the encoder at time $t_n$ is distributed according to the source distribution
\begin{align}\label{bit_prob}
    P_{S_n|S^{n-1}}, n =1,2,\dots
\end{align}
\end{definition}
Throughout, we assume that the entropy rate of the DSS 
    \begin{align}\label{entropy_rate}
        H \triangleq \lim_{n\rightarrow\infty}\frac{H(S^n)}{n}~\text{(nats per symbol)}
    \end{align}
     is well-defined and positive;
the first symbol $S_1$ arrives at the encoder at time $t_1 \triangleq 1$;
both the encoder and the decoder know the symbol alphabet $[q]$, the arrival times $t_1,t_2,\dots$, and the source distribution~\eqref{bit_prob}. The DSS reduces to the classical \emph{discrete source} (DS) that is fully accessible to the encoder before the transmission if
 \begin{align}\label{DS}
     t_n=1,~\forall n=1,2,\dots
 \end{align} 
Fig.~\ref{Fig_full_vs_stream} displays a fully accessible source and a streaming source.
\begin{figure}%
    \centering
    \subfigure[fully accessible: $1=t_1=t_2=\dots$ ]{{\includegraphics[trim = 30mm 240mm 130mm 20mm, clip, width=5cm]{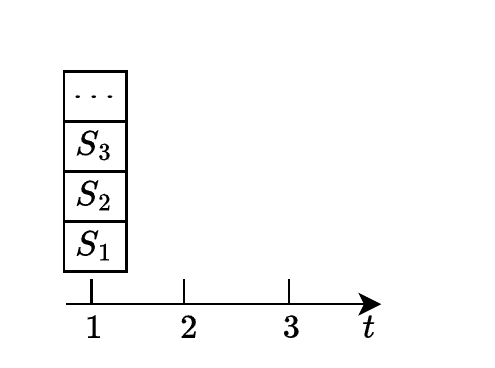}}}%
    \quad
    \subfigure[streaming: $t_1=1, t_2=2,t_3=4, t_4=t_5=6,\dots$]{{\includegraphics[trim = 35mm 242mm 105mm 30mm, clip, width=7.2cm]{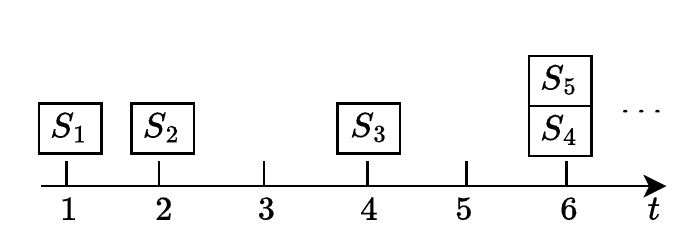}}}%
    \caption{A fully accessible source vs. a streaming source. A fully accessible source emits all symbols at $t=1$. A streaming source emits symbols progressively.}%
    \label{Fig_full_vs_stream}%
\end{figure}
Operationally, symbol $S_n$ represents a data packet. 
We denote the number of symbols that the encoder has received by time $t$ by
\begin{align}\label{Nt}
    N(t)\triangleq \max\{n\colon t_n\leq t, n =1,2,\dots\}.
\end{align}
Given a DSS (Definition~\ref{def_dss}) with symbol arriving times $t_1,t_2,\dots$, we denote its \emph{symbol arriving rate} by, assuming that the limit exists
\begin{align}\label{freq}
     f\triangleq \lim_{n\rightarrow\infty} \frac{n}{t_n}~\text{(symbols per unit time)}\in[0,\infty].
\end{align}
The symbol arriving rate $f=\infty$ implies that the source symbols arrive at the encoder so frequently that the number of channel uses increases slower than the source length. For example, the DS \eqref{DS} has $f=\infty$. The symbol arriving rate $f<\infty$ implies that the number of channel uses goes to infinity as the source length goes to infinity. For example, if one source symbol arrives at the encoder every $\lambda\geq 1$ channel uses, $\lambda\in \mathbb Z_+$, i.e., 
\begin{align}\label{periodic}
    t_n = \lambda(n-1)+1,
\end{align}
then 
\begin{align}\label{freq_lambda}
    f =\frac{1}{\lambda}.
\end{align}

We assume that the channel is a DMC with a single-letter transition probability distribution $P_{Y|X}\colon \mathcal X\rightarrow \mathcal Y$.
\begin{definition}[Non-degenerate and degenerate DMCs]\label{def_channel}
A DMC is non-degenerate if it satisfies
\begin{align}\label{non-degenerate}
    P_{Y|X}(y|x)>0, \forall x\in\mathcal X, y\in\mathcal Y.
\end{align}
A DMC is degenerate if there exist $y\in\mathcal Y$, $x\in\mathcal X$, $x'\in\mathcal X$, such that
\begin{subequations}\label{degenerate}
\begin{align}\label{degenerate_a}
    &P_{Y|X}(y|x)  >0,\\ \label{degenerate_b}
    &P_{Y|X}(y|x') =0.
\end{align}
\end{subequations}
\end{definition}
A non-degenerate DMC is considered in \cite{Burnashev}--\cite{Yang}, e.g., a BSC. A degenerate DMC is considered in \cite[Sec.6]{Burnashev}, e.g., a BEC. Fig.~\ref{Fig_DMC} display examples of DMCs. We denote the capacity of the DMC by
\begin{align}\label{capacity}
        C \triangleq \max_{P_{X}}I(X;Y),
    \end{align}
and we denote the maximum Kullback–Leibler (KL) divergence between its transition probabilities by
\begin{align}\label{C1}
        C_1 \triangleq \max_{x,x'\in\mathcal X}D(P_{Y|X=x}||P_{Y|X=x'}).
\end{align}
Assumption \eqref{non-degenerate} posits that $C_1$ \eqref{C1} is finite. 

A DMC is \emph{symmetric} (Gallager-symmetric \cite[p. 94]{Gallager}) if the columns in its channel transition probability matrix can be partitioned so that within each partition, all rows are permutations of each other, and all columns are permutations of each other. 

\begin{figure}%
    \centering
    \subfigure[non-degenerate ]{{\includegraphics[trim = 40mm 80mm 40mm 50mm, clip, width=3cm]{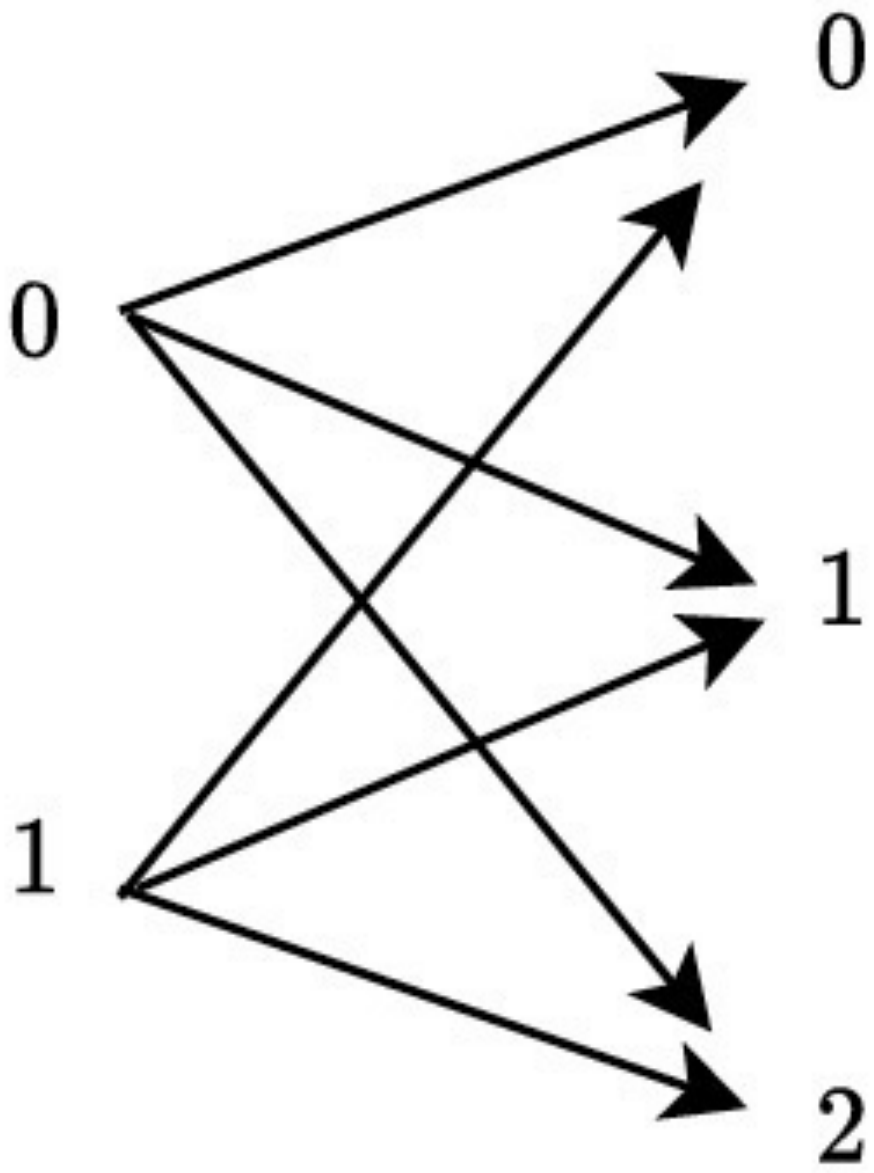}}}%
    \subfigure[degenerate]{{\includegraphics[trim = 40mm 80mm 40mm 50mm, clip, width=3cm]{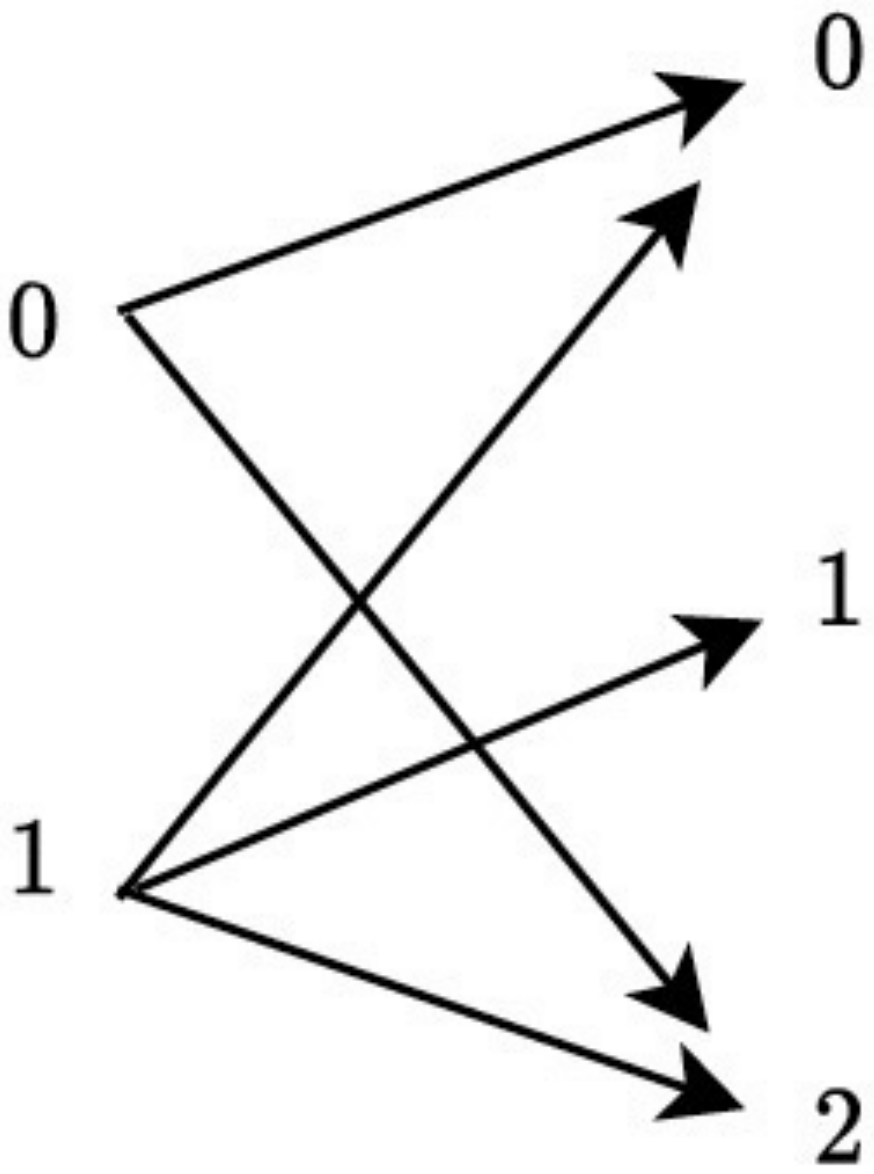}}}%
    \subfigure[neither non-degenerate nor degenerate]{{\includegraphics[trim = 40mm 80mm 40mm 50mm, clip, width=3cm]{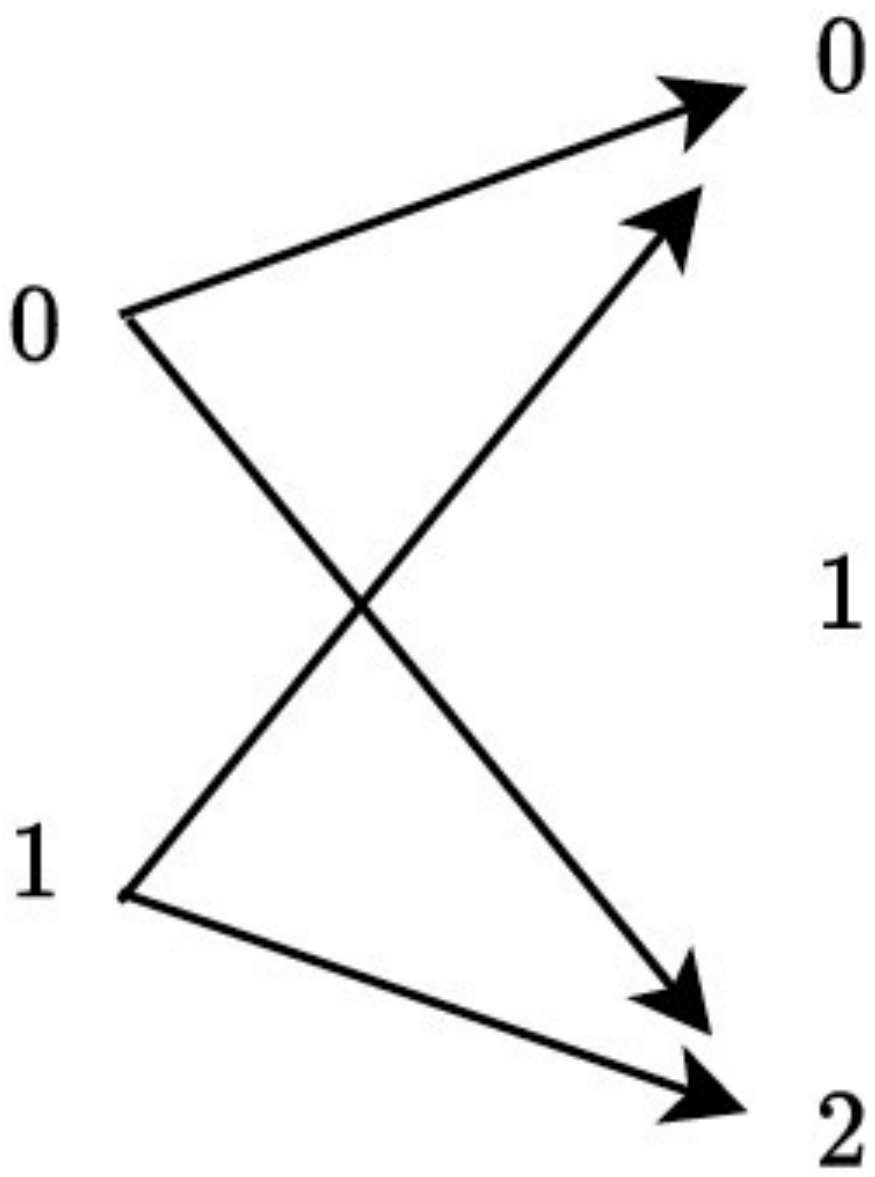}}}%
    \caption{A DMC $P_{Y|X}\colon \{0,1\}\rightarrow \{0,1,2\}$. An arrow from channel input $x\in\{0,1\}$ to channel output $y\in\{0,1,2\}$ signifies $P_{Y|X}(y|x)>0$. Channel (a) is a non-degenerate DMC that satisfies \eqref{non-degenerate}. Channel (b) is a degenerate DMC that satisfies \eqref{degenerate} with $y=1$, $x=1$, $x'=0$. Channel (c) does not satisfy \eqref{non-degenerate}--\eqref{degenerate} since $y=1$ is not reachable.}%
    \label{Fig_DMC}%
\end{figure}

Throughout, we measure the symbol arriving rate \eqref{freq} with a unit time equal to a channel use.

We proceed to define the codes that we use to transmit a DSS over a DMC with feedback. All the codes in this paper are \emph{variable-length joint source-channel codes with feedback}. We distinguish two classes of codes, one is called a code with instantaneous encoding, and the other is called a code with block encoding. Next, we define the code with instantaneous encoding designed to recover the first $k$ symbols of a DSS at rate $R$ symbols per channel use and error probability $\epsilon$.
\begin{definition}[A $(k,R, \epsilon)$ code with instantaneous encoding]\label{def2} Fix a $(q,\{t_n\}_{n=1}^\infty)$ DSS and fix a DMC with a single-letter transition probability distribution $P_{Y|X}\colon \mathcal X\rightarrow\mathcal Y$. An $(k,R, \epsilon)$ code with instantaneous encoding consists of:\\
1. a sequence of (possibly randomized) encoding functions $\mathsf f_t\colon [q]^{N(t)}\times \mathcal Y^{t-1}\rightarrow \mathcal X$, $t=1,2,\dots$ that the encoder uses to form the channel input
\begin{align}\label{enc}
    X_t \triangleq \mathsf f_t\left(S^{N(t)}, Y^{t-1}\right);
\end{align}
2. a sequence of decoding functions $\mathsf g_{t}\colon \mathcal Y^t \rightarrow [q]^k$, $t=1,2,\dots$ that the decoder uses to form the estimate
\begin{align}\label{Skt_g}
    \hat S^k_t\triangleq \mathsf g_t(Y^t);
\end{align}
3. a stopping time $\eta_k$ adapted to the filtration generated by the channel output $Y_1,Y_2,\dots$ that determines when the transmission stops and that satisfies
\begin{align}\label{time_constraint}
    & \frac{k}{\mathbb E[\eta_k]} \geq R~\text{(symbols per channel use)},\\ \label{error_constraint}
    & \mathbb P[\hat S^k_{\eta_k}\neq  S^k]\leq  \epsilon.
\end{align}
\end{definition} 

For any rate $R>0$, the minimum error probability achievable by rate-$R$ codes with instantaneous encoding and message length $k$ is given by
\begin{equation}
\begin{aligned}
  \epsilon^*(k, R) \triangleq \inf\{\epsilon\colon\exists~&\left(k, R, \epsilon\right)~\text{code}\\ &\text{with instantaneous encoding}\}.
\end{aligned}
\end{equation}

For transmitting a DSS over a non-degenerate DMC with noiseless feedback via a code with instantaneous encoding, we define the \emph{JSCC reliability function for streaming} as
\begin{align}\label{reliabilityfunc2}
     E(R) \triangleq \lim_{k\rightarrow\infty}\frac{R}{k}\log\frac{1}{\epsilon^*(k,R)}.
\end{align}

If a DSS satisfies \eqref{DS}, i.e., a DS, a code with instantaneous encoding (i.e., causal code) in Definition~\ref{def2} reduces to a code with block encoding (i.e., non-causal code), and the JSCC reliability function for streaming \eqref{reliabilityfunc2} reduces to the JSCC reliability function for a fully accessible source.

We use $E(R)$ \eqref{reliabilityfunc2} to quantify the fundamental delay-reliability tradeoff achieved by codes with instantaneous encoding. The reliability function is a classical performance metric that can be used to approximate that tradeoff as $\epsilon \simeq e^{-\frac{k}{R}E(R)}$. Although this approximation ignores the subexponential terms, it still sheds light on the finite-blocklength performance as our numerical simulations in Section~\ref{Sec_simulation} demonstrate.

Similar to classical codes with block encoding, a $(k,R,\epsilon)$ code with instantaneous encoding in Definition~\ref{def2} is designed to recover only the first $k$ symbols of a DSS, and $E(R)$ \eqref{reliabilityfunc2} is achieved by a sequence of codes with instantaneous encoding indexed by the length of the symbol sequence $k$ as $k\rightarrow\infty$. We proceed to define a code with instantaneous encoding that decodes the first $k$ symbols at a time $t\geq t_k$ with an error probability that decays exponentially with delay $t-t_k$, for all $k$ and $t$. Because the decoding time and the number of symbols to decode can be chosen on the fly, this code is referred to as an \emph{anytime} code and can be used to stabilize an unstable linear system with bounded noise over a noisy channel with feedback \cite{Sahai}. We formally define anytime codes as follows.

\begin{definition}[A $(\kappa, \alpha)$ anytime code]\label{def_anytime} Fix a $(q,\{t_n\}_{n=1}^{\infty})$ DSS and fix a DMC with a single-letter transition probability distribution $P_{Y|X}\colon \mathcal X\rightarrow\mathcal Y$. A $(\kappa,\alpha)$ anytime code consists of:\\
1. a sequence of (possibly randomized) encoding functions defined in Definition~\ref{def2}-1;\\
2. a sequence of decoding functions $\mathsf g_{t,k}\colon \mathcal Y^t\rightarrow [q]^k$ indexed both by the decoding time $t$ and the length of the decoded symbol sequence $k$ that the decoder uses to form an estimate $\hat S^k_t\triangleq \mathsf g_{t,k}(Y^t)$ of the first $k$ symbols at time $t$.\\
For all $k=1,2,\dots$, $t=1,2,\dots$, $t\geq t_k$, the error probability of decoding the first $k$ symbols at time $t$ must satisfy
\begin{align}\label{anytime_error}
    \mathbb P[\hat S^k_t \neq S^k] \leq \kappa e^{-\alpha(t-t_k)}
\end{align}
for some $\kappa,\alpha\in\mathbb R_+$.
\end{definition}
The exponentially decaying rate $\alpha$ of the error probability in \eqref{anytime_error} is referred to as the anytime reliability. While Sahai and Mitter's anytime code in \cite[Definition 3.1]{Sahai} is defined to transmit a DSS that emits source symbols one by one at consecutive times, Definition~\ref{def_anytime} slightly extends \cite[Definition 3.1]{Sahai} to a general DSS in Definition~\ref{def_dss}.

In this paper, we aim to find $E(R)$ \eqref{reliabilityfunc2}, the codes with instantaneous encoding that achieve $E(R)$, and an anytime code.

\section{Instantaneous encoding phase}\label{belief_phase}
With the aim of transmitting the first $k$ source symbols of a DSS, we present our instantaneous encoding phase, which specifies the encoding functions $\{\mathsf f_t\}_{t=1}^{t_k}$ in Definition~\ref{def2}.
We fix a DMC with a single-letter transition probability distribution $P_{Y|X}\colon \mathcal X\rightarrow\mathcal Y$ and capacity-achieving distribution $P_{X}^*$, and we fix a $(q,\{t_n\}_{n=1}^\infty)$ DSS with distribution \eqref{bit_prob}. We denote the following functions of the channel outputs,
\begin{align}\label{rho2}
    &\rho_i(Y^t) \triangleq P_{S^{N(t)}|Y^t}(i|Y^t),\\
    &\theta_i(Y^{t-1}) \triangleq P_{S^{N(t)}|Y^{t-1}}(i|Y^{t-1}),\\\label{pi2}
    &\pi_{x}(Y^{t-1}) \triangleq \sum_{i\in\mathcal G_x(Y^{t-1})}\theta_i(Y^{t-1}),
\end{align}
where we refer to $\rho_i(Y^t)$ and $\theta_i(Y^t)$ as the posterior and the prior of source sequence $i\in[q]^{N(t)}$, respectively; we refer to $\pi_{x}(Y^{t-1})$ as the prior of the group $\mathcal G_x(Y^{{t-1}})$ corresponding to channel input $x\in\mathcal X$ that we specify in \eqref{SD} below. The probability distributions $P_{S^{N(t)}|Y^t}$ and $P_{S^{N(t)}|Y^{t-1}}$ are determined by the code below.

\emph{Algorithm}: The instantaneous encoding phase operates during times $t=1,2,\dots,t_k$.

At each time $t$, the encoder and the decoder first update the priors $\theta_i(y^{t-1})$ for all $i\in [q]^{N(t)}$. At symbol arriving times $t = t_n$, $n=1,2,\dots,k$ the prior $\theta_i(y^{t-1})$, $i\in [q]^{N(t)}$ is updated using the posterior $\rho_{i^{N(t-1)}}(y^{t-1})$ and the source distribution \eqref{bit_prob}, i.e.,
\begin{align}\label{post_prior}
\theta_i(y^{t-1}) = P_{S^{N(t)}|S^{N(t-1)}}\left(i|i^{N(t-1)}\right)\rho_{i^{N(t-1)}}(y^{t-1}),
\end{align}
where $i^{N(t-1)}$ is the length-$N(t-1)$ prefix of sequence $i$.
At times in-between arrivals, i.e., at $t\in (t_n,t_{n+1})$, $n=1,2,\dots,k-1$, the prior $\theta_i(y^{t-1})$ is equal to the posterior $\rho_i(y^{t-1})$ for all $i\in[q]^{N(t)}$, i.e.,
\begin{align}\label{theta_eq_rho}
  \theta_i(y^{t-1}) = \rho_i(y^{t-1}).   
\end{align}

At each time $t$, once the priors are updated, the encoder and the decoder partition the message alphabet $[q]^{N(t)}$ into $|\mathcal X|$ disjoint groups $\{\mathcal G_x(y^{t-1})\}_{x\in\mathcal X}$ such that for all $x\in\mathcal X$,
\begin{equation}\label{SD}
\begin{aligned}
   \pi_{x}(y^{t-1}) - P_{X}^*(x) \leq \min_{i\in\mathcal G_x(y^{t-1})}\theta_i(y^{t-1}).
\end{aligned}
\end{equation}
The partitioning rule \eqref{SD} ensures that the group priors $\{\pi_{ x}(y^{t-1})\}_{x\in\mathcal X}$  are close enough to the capacity-achieving distribution $\{P_{X}^*(x)\}_{x\in\mathcal X}$. There always exists a partition $\{\mathcal G_x(y^{t-1})\}_{x\in\mathcal X}$ of $[q]^{N(t)}$ that satisfies the partitioning rule \eqref{SD}, since the partition given by the \emph{greedy heuristic} algorithm \cite{Korf} satisfies it, see the algorithm and the proof in Appendix~\ref{SD_exists}.

Using the partition $\{\mathcal G_x(y^{t-1})\}_{x\in\mathcal X}$, the encoder and the decoder construct two sets by comparing the group priors $\{\pi_{x}(y^{t-1})\}_{x\in\mathcal X}$ with the capacity-achieving distribution $\{P_{X}^*(x)\}_{x\in\mathcal X}$:
\begin{align}\label{XbarXunder}
    &\underline{\mathcal X}(y^{t-1}) \triangleq \{x\in\mathcal X:\pi_{x}(y^{t-1})\leq P_{X}^*(x)\},\\\label{XbarXunder2}
    &\overline {\mathcal X}(y^{t-1}) \triangleq \{x\in\mathcal X: \pi_{x}(y^{t-1})> P_{X}^*(x)\}.
\end{align}

Then, the encoder and the decoder determine a set of probabilities $\{p_{\overline x\rightarrow \underline{x}}\}_{\overline x\in \overline{\mathcal X}(y^{t-1}), \underline{x}\in\underline{\mathcal X}(y^{t-1})}$ for randomizing the channel input, such that for all $\overline x\in\overline{\mathcal X}(y^{t-1})$, $\underline{x}\in\underline{\mathcal X}(y^{t-1})$, it holds that
\begin{align}\label{pxy}
    &\pi_{\overline x}(y^{t-1}) - \sum_{\underline x\in\underline{\mathcal X}(y^{t-1})}p_{\overline x\rightarrow \underline x} = P_{X}^*(\overline x),\\ \label{pxy2}
    &\pi_{\underline x}(y^{t-1}) + \sum_{\overline x\in\overline{\mathcal X}(y^{t-1})}p_{\overline x\rightarrow \underline x} = P_{X}^*(\underline x).
\end{align}
The set $\{p_{\overline x\rightarrow \underline{x}}\}_{\overline x\in \overline{\mathcal X}(y^{t-1}), \underline{x}\in\underline{\mathcal X}(y^{t-1})}$ can be determined by the algorithm in Appendix~\ref{set_pxx}.

The output of the encoder is formed via randomization as follows. The encoder first determines the group that contains the sequence $S^{N(t)}$ it received so far:
\begin{align}\label{Zt1}
    Z_t  \triangleq \sum_{x\in\mathcal X}x\mathbbm{1}_{\mathcal G_x(y^{t-1})}\left(S^{N(t)} \right).
\end{align}
Then, the encoder outputs $X_t$ according to
\begin{align}\nonumber
    &P_{X_t|Z_t,Y^{t-1}}(x|z,y^{t-1})\\\label{PXZY}
    =& \begin{cases}
    \frac{P_{X}^*(z)}{\pi_{z}(y^{t-1})}, &\text{if}~x=z, z\in\overline{\mathcal X}(y^{t-1}),\\
    \frac{p_{z\rightarrow x}}{\pi_{z}(y^{t-1})}, &\text{if}~x \in\underline{\mathcal X}(y^{t-1}), z\in\overline{\mathcal X}(y^{t-1})\\
    \mathbbm{1}_{\{z\}}(x),~ &\text{if}~ z\in\underline{\mathcal X}(y^{t-1}),\\
    0,~&\text{otherwise}.
    \end{cases}
\end{align}

The decoder also knows the randomization distribution $P_{X_t|Z_t,Y^{t-1}}$ \eqref{PXZY}, since it knows group priors $\{\pi_x(y^{t-1})\}_{x\in\mathcal X}$ \eqref{SD}, sets $\overline {\mathcal X}(y^{t-1})$ and $\underline{\mathcal X}(y^{t-1})$ \eqref{XbarXunder}--\eqref{XbarXunder2}, and probabilities $\{p_{\overline x\rightarrow \underline{x}}\}_{\overline x\in \overline{\mathcal X}(y^{t-1}), \underline{x}\in\underline{\mathcal X}(y^{t-1})}$ \eqref{pxy}--\eqref{pxy2}.
Due to \eqref{XbarXunder}--\eqref{PXZY}, the channel input distribution at time $t=1,2,\dots,t_k$, is equal to the capacity-achieving channel input distribution, i.e., for all $y^{t-1}\in\mathcal Y^{t-1}$,
\begin{align}\label{PXYPX*}
    P_{X_t|Y^{t-1}}(x|y^{t-1}) = P_{X}^*(x).
\end{align}
See the proof of \eqref{PXYPX*} in Appendix~\ref{pf_PXYPX*}. Fig.~\ref{Fig_rand} below provides an example of group partitioning and channel input randomization.
\begin{figure}[h!]
\centering
\includegraphics[trim = 50mm 232mm 55mm 33mm, clip, width=13cm]{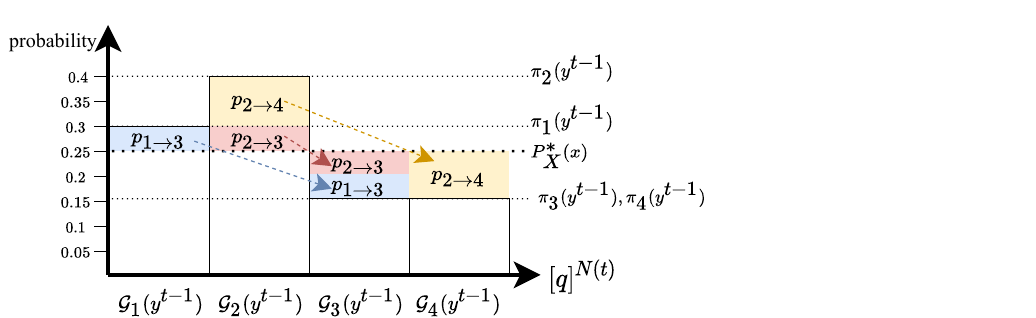}
\caption{An example of group partitioning and channel input randomization for a DMC with uniform capacity-achieving distribution $P_{X}^*(x)=0.25$, $\mathcal X=[4]$. The horizontal axis represents a partition of $4$ groups. The vertical axis represents the prior probabilities of the groups. The source alphabet $[q]^{N(t)}$ is partitioned into $\{\mathcal G_{x}(y^{t-1})\}_{x\in[4]}$ such that the partitioning rule \eqref{SD} is satisfied. Groups $\mathcal G_{x}(y^{t-1}), x\in\{1,2\}$ constitute $\overline{\mathcal X}(y^{t-1})$ \eqref{XbarXunder2} and groups $\mathcal G_{x}(y^{t-1}), x\in\{3,4\}$ constitute $\underline{\mathcal X}(y^{t-1})$ \eqref{XbarXunder}. The probabilities $\{p_{\overline x\rightarrow \underline{x}}\}_{\overline x\in \{1,2\}, \underline{x}\in\{3,4\}}$ \eqref{pxy}--\eqref{pxy2} used to randomize transmitted group indices are colored. The randomization matches the probability of transmitting group index $x\in[4]$ to $P_{X}^*(x)$.} 
\label{Fig_rand}
\end{figure}

Upon receiving the channel output $Y_t=y_t$ at time $t$, the encoder and the decoder update the posteriors $\rho_i(y^t)$ for all possible sequences of source symbols $i\in[q]^{N(t)}$ using the prior $\theta_i(y^{t-1})$, the channel output $y_t$, and the randomization probability \eqref{PXZY}, i.e.,
\begin{align}\label{prior_post}
    &\rho_i(y^{t})=\frac{\sum_{x\in\mathcal X} P_{Y|X}(y_t|x)P_{X_t|Z_t,Y^{t-1}}(x|z(i),y^{t-1}) }{P_{Y}^*(y_t)}\theta_i(y^{t-1}),
\end{align}
where $z(i)$ is the index of the group that contains sequence $i$, i.e., it is equal to the right side of \eqref{Zt1} with $S^{N(t)}\leftarrow i$; $P_{Y}^*$ is the channel output distribution induced by the capacity-achieving distribution $P_{X}^*$; \eqref{prior_post} holds due to \eqref{PXYPX*} and the Markov chain $Y_t-X_t - (Z_t,Y^{t-1})-S^{N(t)}$. 

We conclude the presentation of the instantaneous encoding phase with several remarks.

The randomization \eqref{XbarXunder}--\eqref{PXZY} of the instantaneous encoding phase is only used for analysis: Theorem~\ref{thm_2} in Section~\ref{Sec_reliability2} continues to hold if the randomization step \eqref{XbarXunder}--\eqref{PXZY} is dropped  and the deterministic group index $Z_t$ \eqref{Zt1} is transmitted, but at a cost of imposing assumptions on the DSS that are stricter than assumptions $(\mathrm{a})$--$(\mathrm{b})$ in Theorem~\ref{thm_2}. See Remark~\ref{rmk_drop_random} in Section~\ref{Sec_reliability2} for details.
From the perspective of encoding, the randomization \eqref{PXZY} turns the encoding function $\mathsf f_t$ into a stochastic kernel $P_{X_t|S^{N(t)}, Y^{t-1}}$. From the perspective of the channel, the randomization $P_{X_t|Z_t, Y^{t-1}}$ \eqref{PXZY} together with the DMC $P_{Y|X}$ can be viewed as a cascaded DMC with channel input $(Z_t, Y^{t-1})$. The randomness in \eqref{Zt1} is not common with the decoder as it only needs to know the distribution $P_{X_t|Z_t, Y^{t-1}}$ to update posterior $\rho_i(y^t)$ in \eqref{prior_post}.

The complexity of the instantaneous encoding phase is $O\left(q^{N(t)}\log q^{N(t)}\right)$ if the classical greedy heuristic algorithm (Appendix~\ref{SD_exists}) is used for group partitioning \eqref{SD}. For equiprobably distributed source symbols, we design an efficient algorithm that reduces the complexity down to $O(t\log t)$ in Section~\ref{Sec_practical}. 

\section{JSCC reliability function}\label{Sec_reliability2}
In this section, we show the JSCC reliability function for streaming $E(R)$ \eqref{reliabilityfunc2} using the instantaneous encoding phase introduced in Section~\ref{belief_phase}. For brevity, we denote the maximum and the minimum channel transition probabilities of a DMC $P_{Y|X}\colon \mathcal X\rightarrow\mathcal Y$ by
\begin{align}\label{P_max}
    p_{\max} \triangleq \max_{x\in\mathcal X,y\in\mathcal Y} P_{Y|X}(y|x),\\\label{def_lambda}
    p_{\min} \triangleq \min_{x\in\mathcal X,y\in\mathcal Y} P_{Y|X}(y|x),
\end{align}
and we denote the maximum symbol arriving probability of the DSS \eqref{bit_prob} by
\begin{align}\label{PSMAX}
    p_{S,\max} \triangleq \max_{n\in\mathbb N, s\in[q], s'\in[q]^{n-1}}P_{S_n|S^{n-1}}(s|s').
\end{align}

\begin{theorem}\label{thm_2}
Fix a non-degenerate DMC with capacity $C$ \eqref{capacity}, maximum KL divergence $C_1$ \eqref{C1}, and maximum channel transition probability $p_{\max}$ \eqref{P_max}. Fix a $(q,\{t_n\}_{n=1}^{\infty})$ DSS with entropy rate $H>0$ \eqref{entropy_rate} and symbol arriving rate $f$ \eqref{freq}. If the DSS has $f<\infty$ \eqref{freq}, then we further assume that
\begin{itemize}
    \item[$(\mathrm{a})$] the information in the DSS is asymptotically lower bounded as
    \begin{align}\label{assump_a}
        \lim_{n\rightarrow\infty}\mathbb P\left[\frac{1}{n}\log\frac{1}{P_{S^n}(S^n)}\geq \bunderline{H} \right]=1
    \end{align}
    for some $\bunderline{H}\in(0,\infty)$;
    \item[$(\mathrm{b})$] the symbol arriving rate is large enough:
    \begin{align}\label{assump_b}
     f>\frac{1}{\bunderline{H}}\left(H(P_{Y}^*) - \log\frac{1}{p_{\max}}\right).
    \end{align}
\end{itemize}
Then, the JSCC reliability function for streaming \eqref{reliabilityfunc2} is equal to
\begin{align}\label{converse2}
    E(R) = C_1\left(1-\frac{H}{C}R\right),~ 0< R <\frac{C}{H}.
\end{align}
\end{theorem}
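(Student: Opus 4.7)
The plan is to prove the theorem by establishing matching converse ($E(R)\leq C_1(1-HR/C)$) and achievability ($E(R)\geq C_1(1-HR/C)$) bounds, using the instantaneous encoding phase of Section~\ref{belief_phase} on the achievability side.

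\textbf{Converse.} First I observe that any $(k,R,\epsilon)$ code with instantaneous encoding for a DSS is also a valid code with block encoding in the special case \eqref{DS} of a fully accessible source, so any converse on $E(R)$ derived for the fully accessible case immediately pertains to streaming. I would then extend Berlin et al.'s converse \cite{Berlin} for Burnashev's channel coding reliability function to JSCC. The core argument tracks the drift of the log-posterior $-\log \rho_{S^k}(Y^t)$ of the true source sequence, implicitly splitting the transmission into a ``communication regime'' and a ``confirmation regime.'' Using the data-processing inequality, the drift of $-\log \rho_{S^k}(Y^t)$ is bounded by $C$ per channel use whenever the posterior is not yet concentrated, while the drift of $\log\frac{\rho_{S^k}(Y^t)}{1-\rho_{S^k}(Y^t)}$ is bounded by $C_1$ per channel use once the posterior concentrates, by \eqref{C1}. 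Combining Doob's optional stopping with Fano's inequality and the lower bound $\mathbb{E}[-\log \rho_{S^k}(Y^{\eta_k})]\geq H(S^k)$ yields $\mathbb{E}[\eta_k] \geq H(S^k)/C + \log(1/\epsilon)/C_1 - o(k)$. Using $H(S^k)=kH(1+o(1))$ and $\mathbb{E}[\eta_k]\leq k/R$ gives $\log(1/\epsilon) \leq \frac{k}{R}C_1(1-HR/C)+o(k)$, which is the desired converse.

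\textbf{Achievability.} I would split into two cases according to $f$.

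\emph{Case $f=\infty$:} $t_k/k\to 0$, so I simply buffer the symbols during $[1,t_k]$ and apply a JSCC block-encoding code (e.g., the MaxEJS code \cite{Naghshvar2}) achieving the reliability function $C_1(1-HR/C)$ for a fully accessible source over the remaining $k/R-t_k$ channel uses. Since $t_k/k\to 0$, the effective block-encoding rate converges to $R$, and the buffering overhead does not affect the reliability function.

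\emph{Case $f<\infty$:} I prepend the instantaneous encoding phase of Section~\ref{belief_phase} on $[1,t_k]$ to any block encoding scheme that achieves $C_1(1-HR/C)$ for the fully accessible source on the remaining $n_2\triangleq \mathbb{E}[\eta_k]-t_k$ channel uses. The goal is to show the composite code achieves $C_1(1-HR/C)$. The crucial step is to control the posterior $\rho_i(Y^{t_k})$ at the end of Phase~1: property \eqref{PXYPX*} together with the group partition rule \eqref{SD} guarantee that each instantaneous channel use operates at capacity in the sense $H(Y_t\mid Y^{t-1})=H(P_Y^*)$. A martingale decomposition of $-\log \rho_{S^k}(Y^t)$ into (i) the increment from a newly arriving symbol (bounded by $\log(1/p_{S,\max})$ via assumption (a)) and (ii) the channel gain (at least $\log(1/p_{\max})$ in expectation per channel use), combined with the arrival-rate assumption \eqref{assump_b}, gives the high-probability bound $-\log \rho_{S^k}(Y^{t_k})\leq kH-t_kC + o(k)$. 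Conditioned on this event, running a block encoding scheme achieving $C_1(1-HR/C)$ for the fully accessible source on the posterior $\rho_i(Y^{t_k})$ (which has ``effective entropy'' at most $kH-t_kC+o(k)$) yields $\epsilon_2 \leq \exp\{-n_2 C_1 + C_1(kH-t_kC)/C + o(k)\}$. Substituting $n_2=\mathbb{E}[\eta_k]-t_k$ and $\mathbb{E}[\eta_k]=k/R$ collapses this to $\exp\{-(k/R)C_1(1-HR/C) + o(k)\}$, matching the desired exponent.

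\textbf{Main obstacle.} The hardest part is the high-probability concentration of $-\log\rho_{S^k}(Y^{t_k})$ in Phase~1. Because the encoder applies randomization \eqref{PXZY}, the standard identity $I(S^k;Y^{t_k})=t_kC$ does not hold, so the information extracted during Phase~1 cannot be read off from the capacity directly. Instead, the argument must track the log-posterior via a martingale decomposition that separately accounts for (a) the entropy added when a new source symbol arrives and (b) the conditional drift under the mixed kernel $P_{X_t|S^{N(t)},Y^{t-1}}$; the partition rule \eqref{SD} is used to show that the latter drift is still of order $\log(1/p_{\max})$ per channel use. Assumption~\eqref{assump_b} enters precisely as the condition that the channel-side drift dominates the source-side innovations, ensuring that the posterior concentrates on the true sequence strongly enough that the subsequent block encoding phase is not penalized relative to the fully accessible case.
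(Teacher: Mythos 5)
Your overall structure matches the paper's: converse by extending Berlin et al., buffer-then-transmit for $f=\infty$, and instantaneous phase followed by a reliability-function-achieving block code (MaxEJS/SED via Lemma~\ref{lemma_s1}) for $f<\infty$. The converse and the $f=\infty$ case are essentially the paper's argument. For $f<\infty$, however, your ``Main obstacle'' paragraph inverts the role of the randomization. The paper's whole point is that the randomization \eqref{PXZY} is what makes $P_{X_t|Y^{t-1}}=P_X^*$ hold \emph{exactly} \eqref{PXYPX*}, so $I(X_t;Y_t|Y^{t-1})=C$ for every $t$ and one gets the clean identity $I(S^k;Y^{t_k})=t_kC - I(X^{t_k}\to Y^{t_k}|S^k)$ (Lemma~\ref{lemma_s2}). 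The residual $I(X^{t_k}\to Y^{t_k}|S^k)$ is the \emph{cost} of randomizing, and the content of Lemma~\ref{lemma_s4} is to bound it by $\sum_t H(X_t|Z_t,Y^{t-1})=o(t_k)$, each term controlled via the partition rule \eqref{SD} by the source prior $\theta_{S^{N(t)}}(Y^{t-1})$, which is shown to vanish in probability under assumptions (a)--(b). Remark~\ref{rmk_drop_random} confirms the inversion: it is when you \emph{drop} the randomization that $I(X_t;Y_t|Y^{t-1})$ only converges to $C$, requiring the stronger (b').

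Two secondary points. First, the paper works in expectation: it applies Lemma~\ref{lemma_s1} conditionally on each $Y^{t_k}=y^{t_k}$ with prior $P_{S^k|Y^{t_k}=y^{t_k}}$ and then averages, so it only needs $H(S^k|Y^{t_k})=H(S^k)-I(S^k;Y^{t_k})$ and never forms a high-probability event for $-\log\rho_{S^k}(Y^{t_k})$; your high-probability-then-condition route would additionally need a deterministic polynomial-in-$t_k$ bound on $-\log\rho_{S^k}(Y^{t_k})$ to absorb the bad event, which is fixable but extraneous. Second, the object that assumption~(b) is calibrated to control in the paper is $\theta_{S^{N(t)}}(Y^{t-1})$ (via \eqref{theta_ub2}--\eqref{PSNT5}), not $-\log\rho_{S^k}(Y^t)$; these are related but not the same, and tracking the latter directly, as you propose, would require redoing the arrival/channel-drift balance from scratch rather than reusing the bound that feeds into $H(X_t|Z_t,Y^{t-1})$.
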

\begin{proof}[Proof sketch]
The converse proof is in Appendix~\ref{pf_converse}: allowing the encoder to know the entire source sequence before the transmission will not reduce the JSCC reliability function, therefore converse bounds for a (fully accessible) DS apply. Namely, we lower bound the expected decoding time for any code with block encoding to attain a target error probability using Fano's inequality and a binary hypothesis test. This extends Berlin et al.'s \cite{Berlin} converse bound on Burnashev's reliability function applicable to the channel coding setting to the JSCC setting.

The achievability proof is in Appendix~\ref{pf_achieve}: for any (fully accessible) DS, the JSCC reliability function \eqref{converse2} is achievable by the MaxEJS code \cite[Sec. IV-C]{Naghshvar2}, and is achievable by the SED code \cite[Sec. V-B]{Naghshvar2} if the channel is a symmetric binary-input DMC (Appendix~\ref{pf_achieve_A}). 

For any DSS with $f=\infty$, including the DS \eqref{DS}, the buffer-then-transmit code for $k$ source symbols that achieves $E(R)$ \eqref{converse2} operates as follows. It waits until the $k$-th symbol arrives at time $t_k$, and at times $t\geq t_k+1$, applies a JSCC reliability function-achieving code with block encoding for $k$ symbols $S^k$ of a (fully accessible) DS with prior $P_{S^k}$ \eqref{bit_prob} (e.g., the MaxEJS code or the SED code \cite{Naghshvar2}). The buffer-then-transmit code achieves (see details in Appendix~\ref{pf_achieve_B})
\begin{align}\label{NC_reliability} 
    E(R) \geq C_1\left(1 - \left(\frac{H}{C}+\frac{1}{f}\right)R\right),
\end{align}
which reduces to $E(R)$ \eqref{converse2} for $f=\infty$. Indeed, $f=\infty$ means that the arrival time $t_k$ is negligible compared to the blocklength. The buffer-then-transmit code fails to achieve $E(R)$ \eqref{converse2} if $f<\infty$.

For any DSS with $f<\infty$ that satisfies the assumptions $(\mathrm{a})$--$(\mathrm{b})$ in Theorem~\ref{thm_2}, the code with instantaneous encoding for $k$ source symbols that achieves $E(R)$ \eqref{converse2} implements the instantaneous encoding phase (Section~\ref{belief_phase}) at times $t=1,2,\dots,t_k$ and operates as a JSCC reliability function-achieving code with block encoding for $k$ symbols $S^k$ of a (fully accessible) DS with prior $P_{S^k|Y^{t_k}}$ at times $t\geq t_k+1$, where $Y_1,\dots, Y_{t_k}$ are the channel outputs generated in the instantaneous encoding phase. For example, we can insert the instantaneous encoding phase before the MaxEJS code (or the SED code for symmetric binary-input DMCs). See Appendix~\ref{pf_achieve_C}.
\end{proof}

Assumption $(\mathrm{a})$ holds with $\bunderline{H} = H$ for any information stable source since such sources satisfy $\frac{1}{n}\log \frac{1}{P_{S^n}(S^n)}\xrightarrow{\mathrm{i.p.}} H$ \cite{KD}. For example, $\bunderline{H}=H(S)$ if the source emits i.i.d. symbols. Assumption $(\mathrm{b})$ in Theorem~\ref{thm_2} implies
\begin{align}\label{freq_large}
     f\geq \frac{C}{H}
\end{align}
 since $H(Y|X)\geq \log\frac{1}{p_{\max}}$ and $H\geq \bunderline{H}$. The symbol arriving rate constraint \eqref{freq_large} ensures that all coding rates $R<\frac{C}{H}$ are achievable. Otherwise, if \eqref{freq_large} is not satisfied and the DSS has $p_{S,\max}<1$,
 the rate region achievable by any code with instantaneous encoding is limited to $R\leq f$. The limitation arises because decoding $S^k$ before the final arrival time $t_k$ results in a non-vanishing error probability (Appendix~\ref{decoding_after_tk}). For example, if the DSS emits i.i.d. symbols with entropy rate $H=1$ nat per symbol arriving at the encoder every $1000$ channel uses, and the DMC has capacity $C=1$ nat per channel use, then the achievable rate is limited by $\frac{1}{1000}$ symbols per channel use, which is far less than Shannon's JSCC limit $\frac{C}{H} = 1$ symbol per channel use. While \eqref{freq_large} gives a converse bound on the symbol arriving rate and \eqref{assump_b} guarantees achievability of $E(R)$ \eqref{converse2}, the existence of a critical symbol arriving rate $f_{\mathrm{cr}}$ such that for all $f>f_{\mathrm{cr}}$, $E(R)$ \eqref{converse2} is achievable, and for all $f<f_{\mathrm{cr}}$, $E(R)$ \eqref{converse2} is not achievable, remains open. While $E(R)$ \eqref{converse2} is not a function of $f$, it is conceivable that for $f<f_{\mathrm{cr}}$, the reliability function \eqref{reliabilityfunc2} will depend on $f$. This is reminiscent of the channel reliability function for transmitting over a DMC without feedback via a fixed-length block code, which is known only for rates greater than a critical value where its converse bound (sphere-packing exponent \cite{SGB}) coincides with its achievability bound (random-coding exponent \cite{Gallager_e}). 

Since the (fully accessible) DS \eqref{DS} is a special DSS, Theorem~\ref{thm_2} gives the JSCC reliability function \eqref{converse2} for a fully accessible source. It generalizes Burnashev's reliability function \cite{Burnashev} to the classical JSCC context, and generalizes Truong and Tan's excess-distortion reliability function \cite{Truong} at zero distortion to the DS with memory and to all rates $R<\frac{C}{H}$. 

Remarkably, Theorem~\ref{thm_2} establishes that the JSCC reliability function for a streaming source (satisfying assumptions $(\mathrm{a})$--$(\mathrm{b})$) is equal to that for a fully accessible source. This is surprising as this means that revealing source symbols only causally to the encoder has no detrimental effect on the reliability function.

While the instantaneous encoding phase in Section~\ref{belief_phase} achieves $E(R)$ \eqref{converse2}, in fact, any coding strategy during the symbol arriving period that satisfies
\begin{align}\label{pre_coding_condition}
    \lim_{k\rightarrow\infty} \frac{I(S^k; Y^{t_k})}{t_k} = C
\end{align}
achieves $E(R)$ \eqref{converse2} when followed by a JSCC reliability function-achieving code with block encoding.  This is because \eqref{achieve_sub1}--\eqref{achieve_sub2} in the achievability proof in Appendix~\ref{pf_achieve_C} always hold for such a coding strategy.
For equiprobably distributed $q$-ary source symbols that arrive at the encoder one by one at consecutive times $t=1,2,\dots,k$ and a symmetric $q$-input DMC, uncoded transmission during the symbol arriving period $t=1,2,\dots,k$ satisfies \eqref{pre_coding_condition} and thus constitutes an appropriate instantaneous encoding phase for that scenario. If $q=2$, this corresponds to the systematic transmission phase in \cite{Antonini}. Furthermore, even if the instantaneous encoding phase in Section~\ref{belief_phase} drops the randomization \eqref{XbarXunder}--\eqref{PXZY} and transmits $Z_t$ \eqref{Zt1} as the channel input, it continues to satisfy the sufficient condition \eqref{pre_coding_condition} under a more conservative condition than \eqref{assump_b} (see Remark~\ref{rmk_drop_random} below).

\begin{remark}\label{rmk_drop_random}
Fix a non-degenerate DMC with the maximum and the minimum channel transition probabilities $p_{\max}$ and $p_{\min}$, and fix a $(q,\{t_n\}_{n=1}^{\infty})$ DSS with maximum symbol arriving probability $p_{S,\max}<1$ and symbol arriving rate $f<\infty$. If the DSS satisfies
\begin{itemize}
    \item [$(\mathrm{b}^\prime)$] the symbol arriving rate is large enough:
    \begin{align}\label{freq_new}
        f>\frac{1}{\log\frac{1}{p_{S,\max}}}\left(\log\frac{1}{p_{\min}}-\log\frac{1}{p_{\max}}\right),
    \end{align}
\end{itemize}
then the instantaneous encoding phase in Section~\ref{belief_phase} that transmits the non-randomized $Z_t$ \eqref{Zt1} as the channel input at each time $t=1,2,\dots,t_k$ satisfies \eqref{pre_coding_condition}, which means that it achieves $E(R)$ \eqref{converse2}, the JSCC reliability function for streaming, when followed by a JSCC reliability function-achieving code with block encoding.
\end{remark}
\begin{proof}[Proof sketch] We show that under assumption $(\mathrm{b}^\prime)$, all source priors $\theta_i(y^{t-1})$, $i\in[q]^{N(t)}$, converge pointwise to zero in $t$ during the symbol arriving period $t\in [1,t_k]$ as $k\rightarrow\infty$. The convergent source priors and the partitioning rule \eqref{SD} imply that the group priors converge pointwise to the capacity-achieving distribution $P_X^*$. Since the encoder transmits a group index without randomization as the channel input, the channel input distribution converges to the capacity-achieving distribution, yielding \eqref{pre_coding_condition}. See Appendix~\ref{pf_drop_random} for details.
\end{proof}
Note that the result of Remark~\ref{rmk_drop_random} does not require assumption $(\mathrm{a})$ since $P_{S^n}(s^n)\leq (p_{S,\max})^n$, $\forall s^n\in[q]^n$, already implies that it holds with $\bunderline{H}\leftarrow \log\frac{1}{p_{S,\max}}$.
 
Since $\bunderline{H}\geq \log\frac{1}{p_{S,\max}}$ and $\log\frac{1}{p_{\min}}\geq H(P_Y^*)$, assumption $(\mathrm{b}^\prime)$ is stricter than assumption $(\mathrm{b})$. The increase of the threshold is because 1) the channel output distribution $P_{Y}^*$ in \eqref{theta_ub2} is replaced by $P_{Y_t|Y^{t-1}}(\cdot|\cdot)\geq p_{\min}$ \eqref{sp_ub}; 2) in the proof of Remark~\ref{rmk_drop_random}, we show that all the source priors converge \emph{pointwise} to zero \eqref{as_convergence} during the symbol arriving period as $k\rightarrow\infty$ using the upper bound $P_{S_n|S^{n-1}}(\cdot|\cdot)\leq p_{S,\max}$, whereas in Theorem~\ref{thm_2}, we only need that the the source prior of the true symbol sequence converges \emph{in probability} to zero \eqref{PSNT0_m}.

\section{Instantaneous SED code}\label{Sec_instantaneous_SED}
While the JSCC reliability function-achieving codes with instantaneous encoding in Section~\ref{belief_phase} are designed to transmit the first $k$ symbols of a DSS, and a sequence of such codes indexed by the source length $k$ achieves $E(R)$ \eqref{converse2} as $k\rightarrow\infty$, we now show an anytime code (Definition~\ref{def_anytime}) termed the instantaneous SED code. In Section~\ref{Sec_alg_anytime}, we present the algorithm of the instantaneous SED code for a symmetric binary-input DMC. In Section~\ref{Sec_pos_reliability}, we show by simulations that the instantaneous SED code empirically achieves a positive anytime reliability, and thus can be used to stabilize an unstable linear system with bounded noise over a noisy channel. In Section~\ref{Sec_SED_ER}, we show that if the instantaneous SED code is restricted to transmit the first $k$ symbols of a DSS, a sequence of instantaneous SED codes indexed by the length of the symbol sequence $k$ also achieves $E(R)$ \eqref{converse2} for streaming over a symmetric binary-input DMC.

\subsection{Algorithm of the instantaneous SED code}\label{Sec_alg_anytime}
The instantaneous SED code is almost the same as the instantaneous encoding phase in Section~\ref{belief_phase}, expect that 1) it particularizes the partitioning rule \eqref{SD} to the instantaneous SED rule in \eqref{pXinequal}--\eqref{Gxinequal} below; 2) its encoder does not randomize the channel input and transmits $Z_t$ \eqref{Zt1} at each time $t$; 3) it continues to operate after the symbol arriving period. Fixing a symmetric binary-input DMC $P_{Y|X}\colon\mathcal \{0,1\}\rightarrow \mathcal Y$ and fixing a $(q,\{t_n\}_{n=1}^{\infty})$ DSS, we present the algorithm of the instantaneous SED code.

\emph{Algorithm:} The instantaneous SED code operates at times $t=1,2,\dots$

At each time $t$, the encoder and the decoder first update the priors $\theta_i\left(y^{t-1}\right)$ for all possible sequences $i\in[q]^{N(t)}$ that the source could have emitted by time $t$. If $t=t_n$, $n=1,2,\dots$, the prior is updated using \eqref{post_prior}; otherwise, the prior is equal to the posterior \eqref{theta_eq_rho}.

Once the priors are updated, the encoder and the decoder partition the source alphabet $[q]^{N(t)}$ into $2$ disjoint groups $\{\mathcal G_x\}_{x\in\{0,1\}}$ according to the \emph{instantaneous SED rule}, which says the following: if $x,x'\in\mathcal \{0,1\}$ satisfy
\begin{align}\label{pXinequal}
 \pi_{x}(y^{t-1}) \geq   \pi_{x'}(y^{t-1}),
\end{align}
then they must also satisfy
\begin{align}\label{Gxinequal}
 \pi_{x}(y^{t-1}) - \pi_{x'}(y^{t-1}) \leq \min_{i\in\mathcal G_x(y^{t-1})}\theta_i(y^{t-1}).
\end{align}
There always exists a partition $\{\mathcal G_x(y^{t-1})\}_{x\in\mathcal \{0,1\}}$ that satisfies the instantaneous SED rule \eqref{pXinequal}--\eqref{Gxinequal} since the partition that attains the smallest difference $|\pi_0(y^{t-1})-\pi_1(y^{t-1})|$  satisfies it \cite[Appendix~III-E]{Naghshvar2}.

Once the source alphabet is partitioned, the encoder transmits the index $Z_t$ \eqref{Zt1} of the group that contains the true source sequence $S^{N(t)}$ as the channel input.

Upon receiving the channel output $Y_t = y_t$ at time $t$, the encoder and the decoder update the posteriors $\rho_i\left(y^{t}\right)$ for all $i\in[q]^{N(t)}$ using the priors $\theta_i\left(y^{t-1}\right)$ and the channel output $y_t$, i.e.,
\begin{align}\label{prior_post2}
    \rho_i\left(y^{t-1}\right) = \frac{P_{Y|X}(y_t|z(i))}{\sum_{x\in\mathcal X}P_{Y|X}(y|x)\pi_x(y^{t-1})}\theta_i(y^{t-1}),
\end{align}
where $z(i)$ is the index of the group that contains sequence $i$, i.e., it is equal to the right side of \eqref{Zt1} with $S^{N(t)}\leftarrow i$.

The maximum a posteriori (MAP) decoder estimates the first $k$ symbols at time $t$ as
    \begin{align}\label{hatSNT}
        \hat S^k_t \triangleq \argmax_{i\in [q]^k}P_{S^k|Y^t}(i|Y^t).
    \end{align}

We conclude the presentation of the algorithm with several remarks.

We call the group partitioning rule in \eqref{pXinequal}--\eqref{Gxinequal} instantaneous small-enough difference (SED) rule since it reduces to Naghshvar et al.'s SED rule \cite{Naghshvar2} if the source is fully accessible to the encoder before the transmission. The rule ensures that the difference between a group prior $\pi_x(y^{t-1})$ and its corresponding capacity-achieving probability $P_X^*(x)=\frac{1}{2}$, $x\in\{0,1\}$ is bounded by the source prior on the right side of \eqref{Gxinequal}. 

\begin{remark}
  Even though the algorithm of the instantaneous SED code is presented for a DSS with deterministic symbol arriving times, it can be used to transmit a DSS with \emph{random} symbol arriving times. In that case, the number of symbols $N(t)$ that have arrived by time $t$ is a random variable, and the decoder only knows the symbol arriving distribution $\{P_{S^{N(t)}|S^{N(t-1)}}\}_{t=1}^{\infty}$ rather than the exact symbol arriving times. The instantaneous SED code can be used to transmit such a streaming source as long as the encoder and the decoder keep updating the source priors, partitioning the groups, and updating the posteriors at times $t=1,2,\dots$ for \emph{all possible} source sequences that can arrive at the encoder by time $t$, see our work \cite{Nian} for the algorithm with the instantaneous SED rule replaced by an instantaneous smallest-difference rule. 
  
  Designing a code with instantaneous encoding whose decoder knows neither the symbol arriving times nor the symbol arriving distribution remains an open problem. In this case, the decoder needs to learn the symbol arriving distribution online using the past symbol arriving times. Online learning of the distribution of source symbols is also an interesting research direction.
\end{remark}

\subsection{Instantaneous SED code is an anytime code}\label{Sec_pos_reliability}
We first provide numerical evidence showing that the instantaneous SED code is an anytime code: it empirically attains an error probability that decreases exponentially as \eqref{anytime_error}. We then determine which unstable scalar linear systems can be stabilized by the instantaneous SED code.

In Fig.~\ref{Fig_infinite}, we display the error probability \eqref{anytime_error} of the instantaneous SED code, where the $y$-axis corresponds to the error probability of decoding the length-$k$ prefix of a DSS at time $t$ \eqref{anytime_error}. At each time $t$, we generate a Bernoulli$\left(\frac{1}{2}\right)$ source bit and a realization of a BSC($0.05$), run these experiments for $10^5$ trials, and obtain the error probability \eqref{anytime_error} by dividing the total number of errors by the total number of trials. To reduce the implementation complexity, we simulate the type-based version of the instantaneous SED code in Section~\ref{Type_based_SED}, which has a log-linear complexity. The type-based version is an approximation of the exact instantaneous SED code since it uses an approximating instantaneous SED rule and an approximating decoding rule to mimic the instantaneous SED rule \eqref{pXinequal}--\eqref{Gxinequal} and the MAP decoder \eqref{hatSNT}, respectively, however, it performs remarkably close to the original instantaneous SED code. See Section~\ref{Type_based_SED} for details.
\begin{figure}[h!]
\centering
\includegraphics[trim = 0mm 0mm 0mm 0mm, clip, width=8.5cm]{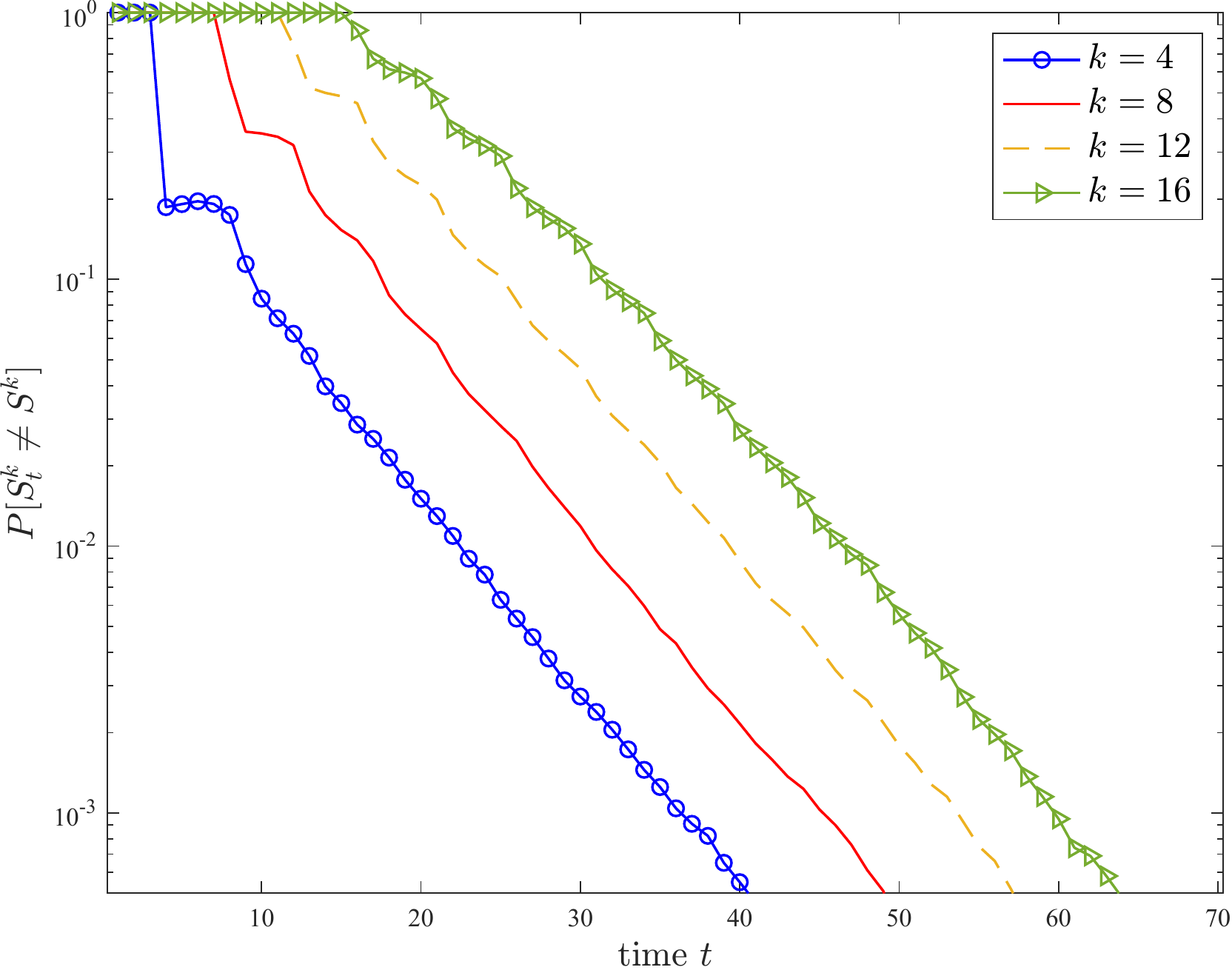}
\caption{The error probability $\mathbb P[\hat S^k_t\neq S^k]$ of decoding the first $k$ symbols of a DSS at time $t$ achieved by the type-based instantaneous SED code (Section~\ref{Type_based_SED}). The DSS emits a Bernoulli$\left(\frac{1}{2}\right)$ bit at times $t=1,2,\dots$. The channel is a BSC($0.05$).}
\label{Fig_infinite}
\end{figure}
The slope of the curves corresponds to the anytime reliability $\alpha$ \eqref{anytime_error} of the instantaneous SED code. The anytime reliability for the source and the channel in Fig.~\ref{Fig_infinite} is approximately equal to $\alpha\approx 0.172$.
The simulation results in Fig.~\ref{Fig_infinite} align with our expectation: the error probability decays exponentially with delay $t-k$ \eqref{anytime_error}, implying that the instantaneous SED code is an anytime code.

We proceed to display the unstable scalar linear system that can be stabilized by the instantaneous SED code. Consider the scalar linear system in Fig.~\ref{Fig_anytime_sys}, $Z_{t+1}=\lambda Z_t + U_t + W_t$,
where $\lambda>1$, $Z_t$ is the real-valued state, $U_t$ is the real-valued control signal, $|W_t|\leq \frac{\Omega}{2}$ is the bounded noise, and the initial state is $Z_1\triangleq 0$. At time $t$, the observer uses the observed states $Z^t$ as well as the past channel feedback $Y^{t-1}$ to form a channel input $X_t$; the controller uses the received channel outputs $Y^t$ to form a control signal $U_t$. For a $(q,\{t_n\}_{n=1}^\infty)$ DSS that emits source symbols one by one at consecutive times $t_n=n$, $n=1,2,\dots$, the anytime rate of a $(\kappa,\alpha)$ anytime code in Definition~\ref{def_anytime} is defined as $R_{\mathrm{any}} = \log q$ nats per channel use, e.g., for the DSS in Fig.~\ref{Fig_infinite}, $R_{\mathrm{any}}=\log 2$; the $\alpha$-anytime capacity $C_{\mathrm{any}}(\alpha)$ is defined as the least upper bound on the anytime rates $R_{\mathrm{any}}$ such that the anytime reliability $\alpha$ is achievable \cite{Sahai}. For such a DSS, Sahai and Mitter \cite[Lemma~4.1 in Sec.~IV-D]{Sahai} showed that the unstable scalar linear system with bounded noise in Fig.~\ref{Fig_anytime_sys} can be stabilized so that $\eta$-th moment $\mathbb E[|Z_t|^{\eta}]$ stays finite at all times, provided that 
    $C_{\mathrm{any}}(\alpha) > \log \lambda$,
    $\alpha>\eta\log \lambda$.
Thus, the instantaneous SED code can be used to stabilize the $\eta$-th moment of the unstable scalar linear system in Fig.~\ref{Fig_anytime_sys} over a BSC(0.05) for any coefficient
\begin{align}
    \lambda &<e^{\min\{R_{\mathrm{any}},\frac{\alpha}{\eta}\}}\\
          &= \min\left\{2, e^{\frac{0.172}{\eta}}\right\}.
\end{align}
E.g., if $\eta=2$, then $\lambda<1.09$. In comparison, the theoretical results in \cite[Corollary~1, Fig.~2]{Lalitha} with $n=1$ show that, for the control over a BSC($0.05$) in Fig.~\ref{Fig_anytime_sys}, Lalitha et al.'s anytime code is only guaranteed to stabilize the $\eta$-th moment of a linear system with $\lambda = 1$.

The control scheme \cite[Sec.~IV]{Sahai} that stabilizes the system in Fig.~\ref{Fig_anytime_sys} employs an anytime code and operates as follows. At each time $t$, the observer computes an $R_{\mathrm{any}}$-nat virtual control signal $\bar U_t$ and acts as an anytime encoder to transmit $\bar U_t$ as the $t$-th symbol of a DSS over a noisy channel with feedback. Here, $\bar U_t$ controls a virtual state $\bar Z_{t+1} = \lambda \bar Z_t + W_t  + \bar U_t$, and is equal to the negative of the $R_{\mathrm{any}}$-nat quantization of $\lambda \bar Z_t$. It ensures the boundedness of $\bar Z_{t+1}$. Upon receiving the channel output, the controller acts as an anytime decoder to refresh its estimate $\hat{\bar U}^t_t$ of $\bar U^t$ and forms a control signal $U_t$ that compensates the past estimation errors of the virtual control signals as if the plant $\{Z_s\}_{s=1}^{t+1}$ was controlled by $\hat{\bar U}^t_t$ heretofore. As a result of applying $U_t$, the actual state $Z_{t+1}$ is forced close to the bounded virtual state $\bar Z_{t+1}$ with the difference $|Z_{t+1}-\bar Z_{t+1}|$ governed by the difference between $\bar U^t_t$ and $\hat {\bar U}^t_t$. The exponentially decaying with $t-k$ error probability of decoding $\bar U^k_t$ achieved by the anytime code together with the bounded $\bar Z_{t+1}$ ensures a finite $\mathbb E[|Z_{t+1}|^{\eta}]$. In fact, the full feedback channel in Fig.~\ref{Fig_anytime_sys} can be replaced by a channel that only feeds the control signal from the controller to the observer, since $Z_t, Z_{t-1}, U_{t-1}$ suffice to compute $W_{t-1}$ and thereby to compute $\bar U_t$ at each time $t$.

\begin{figure}[h!]
\centering
\includegraphics[trim = 20mm 230mm 30mm 20mm, clip, width=9cm]{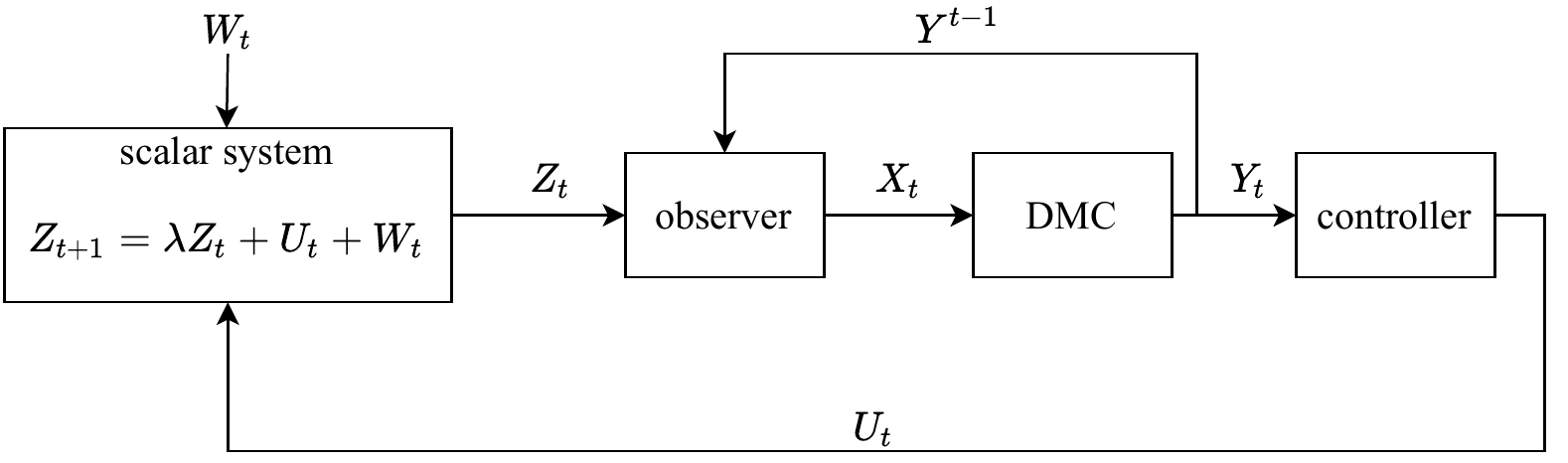}
\caption{A scalar linear system controlled over a noisy channel with noiseless feedback.}
\label{Fig_anytime_sys}
\end{figure}


As verified by the simulations in Fig.~\ref{Fig_infinite}, the instantaneous SED code achieves a positive anytime reliability, however, it is difficult to extend our analysis for $E(R)$ \eqref{converse2} to show that the instantaneous SED code satisfies \eqref{anytime_error} analytically. The submartingales in \cite{Burnashev}\cite{Naghshvar2} used to compute the upper bound on the expected decoding time for a block encoding scheme to attain a target error probability fail to hold if the encoder keeps incorporating newly arrived symbols after time $t_k$. Therefore, we cannot directly use Lemma~\ref{lemma_s1} in Appendix~\ref{pf_achieve_A} to upper bound the expected decoding time, and different tools are needed for the analysis of anytime reliability.

\subsection{Instantaneous SED code achieves $E(R)$}\label{Sec_SED_ER}
We first restrict the instantaneous SED code in Section~\ref{Sec_alg_anytime} to transmit only the first $k$ source symbols of a DSS, and we form a sequence of instantaneous SED codes indexed by the length of the symbol sequence $k$. We then show that the code sequence achieves the JSCC reliability function \eqref{converse2} for streaming over a symmetric binary-input DMC as $k\rightarrow\infty$. 

We restrict the instantaneous SED code in Section~\ref{Sec_alg_anytime} to transmit the first $k$ symbols of a $(q,\{t_n\}_{n=1}^{\infty})$ DSS as follows.
\begin{itemize}
\item[1)] The alphabet $[q]^{N(t)}$ that contains all possible sequences that could have arrived by time $t$ is replaced by the alphabet $[q]^{\min\{N(t),k\}}$ that stops evolving and reduces to $[q]^k$ after all $k$ symbols arrive at time $t_k$. As a consequence, for $t\geq t_k+1$ and all $i\in[q]^k$, the priors $\theta_i(y^{t-1})$ are equal to the corresponding posteriors $\rho_i(y^{t-1})$, the encoder and the decoder partition $[q]^k$ to obtain $\{\mathcal G_x(y^{t-1})\}_{x\in\{0,1\}}$, the encoder transmits the index of the group \eqref{Zt1} that contains $S^k$, and only the posteriors $\rho_i(y^t)$ are updated.
\item [2)] The transmission is stopped and the MAP estimate \eqref{hatSNT} of $S^k$ is produced at the stopping time
\begin{align}\label{eta_k_sub}
    \eta_k \triangleq \min\left\{t\colon \max_{i\in[q]^k} P_{S^k|Y^t}(i|Y^t) \geq 1- \epsilon\right\},  \epsilon\in(0,1).
\end{align}
\end{itemize}

The MAP decoder \eqref{hatSNT} together with the stopping rule \eqref{eta_k_sub} ensures the error constraint in \eqref{error_constraint}, since the MAP decoder \eqref{hatSNT} implies $\mathbb P[\hat S^k_{\eta_k}= S^k] = \mathbb E\left[\mathbb E\left[\mathbbm{1}_{\left\{\hat S^k_{\eta_k}\right\}}(S^k)\middle | Y^{\eta_k}\right] \right]=\mathbb E\left[\max_{i\in[q]^k} P_{S^k|Y^{\eta_k}}(i|Y^{\eta_k})\right]$, which is lower bounded by $1-\epsilon$ due to the stopping time \eqref{eta_k_sub}.

\begin{theorem}\label{Prop_SED}
Fix a non-degenerate symmetric binary-input DMC and a $(q,\{t_n\}_{n=1}^{\infty})$ DSS satisfying assumption $(\mathrm{b}^\prime)$ in Remark~\ref{rmk_drop_random}. The sequence of instantaneous SED codes for transmitting the first $k$ symbols of the DSS achieves $E(R)$ \eqref{converse2} as $k\rightarrow\infty$.
\end{theorem}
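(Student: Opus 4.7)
The plan is to recognize the restricted instantaneous SED code as a concatenation of (i) the non-randomized instantaneous encoding phase with SED partitioning during $t\in[1,t_k]$ and (ii) Naghshvar et al.'s block SED code applied to the induced prior $P_{S^k|Y^{t_k}}$ during $t\in[t_k+1,\eta_k]$. I would then invoke Remark~\ref{rmk_drop_random} to handle phase (i) and the JSCC-achievability argument from Appendix~\ref{pf_achieve_A} (which underlies Theorem~\ref{thm_2}) to handle phase (ii), finally combining them via the concatenation argument of Appendix~\ref{pf_achieve_C}.

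First I would verify that for a binary-input channel with uniform capacity-achieving distribution $P_X^*(0)=P_X^*(1)=\tfrac12$, the instantaneous SED rule \eqref{pXinequal}--\eqref{Gxinequal} is stronger than the group-partitioning rule \eqref{SD} used in the instantaneous encoding phase. Indeed, since $\pi_0(y^{t-1})+\pi_1(y^{t-1})=1$, whenever $\pi_x(y^{t-1})\ge\pi_{x'}(y^{t-1})$ we have
\begin{equation}
\pi_x(y^{t-1})-\tfrac12=\tfrac12\bigl(\pi_x(y^{t-1})-\pi_{x'}(y^{t-1})\bigr)\le\tfrac12\min_{i\in\mathcal G_x(y^{t-1})}\theta_i(y^{t-1}),
\end{equation}
so \eqref{SD} holds. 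Consequently, the phase-(i) operation of the restricted instantaneous SED code is a valid non-randomized instance of the instantaneous encoding phase of Section~\ref{belief_phase}, and Remark~\ref{rmk_drop_random} applies: under assumption $(\mathrm{b}')$, the channel input distribution $P_{X_t|Y^{t-1}}$ converges pointwise to $P_X^*$ on $t\in[1,t_k]$, which yields $\lim_{k\to\infty}I(S^k;Y^{t_k})/t_k=C$, i.e., the sufficient condition \eqref{pre_coding_condition}.

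Next I would analyze phase (ii). By restriction 1) in Section~\ref{Sec_SED_ER}, for $t\ge t_k+1$ the alphabet freezes to $[q]^k$, priors equal posteriors, and the partitioning reduces exactly to Naghshvar et al.'s SED rule applied to the (in general non-uniform) prior $P_{S^k|Y^{t_k}}$; the stopping rule \eqref{eta_k_sub} and MAP decoder \eqref{hatSNT} are the standard ones. Hence phase (ii), conditioned on $Y^{t_k}$, is precisely the SED code of \cite{Naghshvar2} operating on a fully accessible source $S^k$ with prior $P_{S^k|Y^{t_k}}$ and target error probability $\epsilon$. The achievability analysis in Appendix~\ref{pf_achieve_A} (which establishes that the SED code achieves the JSCC reliability function $E(R)$ for an arbitrary prior on $[q]^k$ over a symmetric binary-input DMC) then yields an upper bound on the conditional expected stopping time of the Burnashev form $\mathbb E[\eta_k - t_k \mid Y^{t_k}]\le \frac{H(S^k|Y^{t_k})}{C}+\frac{\log(1/\epsilon)}{C_1}+o(\log(1/\epsilon))$.

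Finally, I would stitch the two phases together exactly as in Appendix~\ref{pf_achieve_C}: taking expectations and using \eqref{pre_coding_condition} to replace $H(S^k|Y^{t_k})/k$ by $H - Ct_k/k + o(1)$ (by the chain rule and Remark~\ref{rmk_drop_random}), and choosing $\epsilon$ so that $\log(1/\epsilon)=\tfrac{k}{R}E(R)+o(k)$, yields $k/\mathbb E[\eta_k]\ge R-o(1)$ while $\mathbb P[\hat S^k_{\eta_k}\ne S^k]\le\epsilon$, so the code sequence achieves the converse \eqref{converse2}, which is the JSCC reliability function by Theorem~\ref{thm_2}. The main obstacle I anticipate is purely bookkeeping: ensuring that the SED-code analysis of Appendix~\ref{pf_achieve_A}, which is phrased for a deterministic prior on $[q]^k$, carries over when the prior is the random distribution $P_{S^k|Y^{t_k}}$. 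This is handled by applying the bound conditionally on $Y^{t_k}$ and then taking expectation, provided the submartingale drift inequalities underlying the SED analysis depend on the prior only through its entropy, which is the case because the SED partitioning is defined in terms of the priors themselves.
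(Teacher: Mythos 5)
Your proposal is correct and follows essentially the same two-observation decomposition as the paper's proof: (i) for a symmetric binary-input DMC with $P_X^*\equiv\tfrac12$, the instantaneous SED rule implies the partitioning rule \eqref{SD}, so Remark~\ref{rmk_drop_random} certifies the sufficient condition \eqref{pre_coding_condition}; and (ii) after $t_k$ the code is exactly Naghshvar et al.'s SED code on the prior $P_{S^k|Y^{t_k}}$, whose achievability was established in Appendix~\ref{pf_achieve_A} and combined with phase (i) in Appendix~\ref{pf_achieve_C}. Your explicit inequality $\pi_x-\tfrac12=\tfrac12(\pi_x-\pi_{x'})\le\tfrac12\min_i\theta_i$ and your remark on conditioning on $Y^{t_k}$ merely spell out steps the paper leaves implicit, and your anticipated obstacle is indeed resolved exactly as the paper does via \eqref{lambda_k_ub0} with $P_{S^k}\leftarrow P_{S^k|Y^{t_k}=y^{t_k}}$.
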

\begin{proof}
First, we observe that after the symbol arriving period $t\geq t_k+1$, the instantaneous SED code reduces to the SED code \cite[Sec. V-B]{Naghshvar2} because the instantaneous SED rule \eqref{pXinequal}--\eqref{Gxinequal} reduces to the SED rule \cite[Eq. (50)]{Naghshvar2} if all $k$ source symbols are fully accessible \eqref{DS} to the encoder. The SED code \cite{Naghshvar2} achieves the JSCC reliability function \eqref{converse2} for transmitting a fully accessible source over a non-degenerate symmetric binary-input DMC (Appendix~\ref{pf_achieve_A}).

Second, we observe that during the symbol arriving period $t=1,2\dots,t_k$, the instantaneous SED code corresponds to dropping the randomization step of the instantaneous encoding phase in Section~\ref{belief_phase}. This is because for a symmetric binary-input DMC, \eqref{pXinequal} implies $\pi_{x'}(y^{t-1})\leq P_{X}^*(x')= \frac{1}{2}$, thus any partition $\{\mathcal G_x(y^{t-1})\}_{x\in\mathcal \{0,1\}}$ that satisfies the instantaneous SED rule \eqref{pXinequal}--\eqref{Gxinequal} also satisfies the partitioning rule in \eqref{SD}. Therefore, Remark~\ref{rmk_drop_random} implies that the instantaneous SED code at times $t=1,2,\dots,t_k$ satisfies the sufficient condition \eqref{pre_coding_condition} under assumption $(\mathrm{b}^\prime)$.

As we have discussed in the proof sketch of Theorem~\ref{thm_2}, a JSCC reliability function-achieving code with instantaneous encoding can be obtained by preceding a JSCC reliability function-achieving code with block encoding by an instantaneous encoding phase that satisfies \eqref{pre_coding_condition}. The two observations above imply that the instantaneous SED code achieves $E(R)$ \eqref{converse2} in the setting of Theorem~\ref{Prop_SED}.
\end{proof}

\section{Low-complexity codes with instantaneous encoding}\label{Sec_practical}
We present the type-based algorithms for the instantaneous encoding phase in Section~\ref{belief_phase}, for the instantaneous SED code as an anytime code in Section~\ref{Sec_alg_anytime}, and for the instantaneous SED code restricted to transmit $k$ symbols only in Section~\ref{Sec_SED_ER}. The type-based instantaneous encoding phase is the exact phase in Section~\ref{belief_phase}, whereas the type-based instantaneous SED codes are approximations of the original codes in Sections~\ref{Sec_alg_anytime}~and~\ref{Sec_SED_ER}. All the type-based codes that we are about to see have a log-linear complexity $O(t\log t)$ in time $t$.

We assume that the source symbols of the DSS are equiprobably distributed, i.e., the source distribution \eqref{bit_prob} satisfies
\begin{align}\label{channel-source}
    P_{S_n|S^{n-1}}(a|b) = \frac{1}{q},
\end{align}
for all $a\in[q]$, $b\in[q]^{n-1}$, $n=1,2,\dots$

In our type-based codes, the evolving source alphabet is judiciously divided into disjoint sets that we call \emph{types}, so that the source sequences in each type share the same prior and the same posterior. Here, the same prior is guaranteed by the equiprobably distributed symbols \eqref{channel-source}, and the same posterior is guaranteed by moving a whole type to a group during the group partitioning process (see step (iii) below). As a consequence of classifying source sequences into types, the prior update, the group partitioning, and the posterior update can be implemented in terms of types rather than individual source sequences, which results in an exponential reduction of complexity.


We denote by $\mathcal S_1,\mathcal S_2,\dots$ a sequence of types. We slightly abuse the notation to denote by $\theta_{\mathcal S_j}(Y^{t-1})$ and $\rho_{\mathcal S_j}(Y^t)$ the prior and the posterior of a single source sequence in type $\mathcal S_j$ at time $t$ rather than the prior and the posterior of the whole type. We fix a $(q,\{t_n\}_{n=1}^{\infty})$ DSS that satisfies \eqref{channel-source} and fix a DMC with a single-letter transition probability $P_{Y|X}\colon \mathcal X\rightarrow\mathcal Y$.

\subsection{Type-based instantaneous encoding phase}\label{type-based_encoding_phase}
The type-based instantaneous encoding phase operates at times $t=1,2,\dots, t_k$, where $k$ is the number of source symbols of a DSS that we aim to transmit.

(i) \emph{Type update:} At each time $t$, the algorithm first updates the types. At $t=1$, the algorithm is initialized with one type $\mathcal S_1\triangleq [q]^{N(1)}$. At $t=t_n$, $n=2,\dots,k$, the algorithm updates all the existing types by appending every sequence in $[q]^{N(t)-N(t-1)}$ to every sequence in the type. After the update, the length of the source sequences in each type is equal to $N(t)$; the cardinality of each type is multiplied by $q^{N(t)-N(t-1)}$; the total number of types remains unchanged. At $t\neq t_n$, $n=1,2,\dots,k$, the algorithm does not update the types.

(ii) \emph{Prior update:} Once the types are updated, the algorithm proceeds to update the prior of the source sequences in each existing type. The prior $\theta_{\mathcal S_j}(y^{t-1})$, $j=1,2,\dots$ of the source sequences in type $\mathcal S_j$ is fully determined by \eqref{post_prior} with $\theta_i(y^{t-1})\leftarrow \theta_{\mathcal S_j}(y^{t-1})$, $P_{S^{N(t)}|S^{N(t-1)}}(\cdot|\cdot)\leftarrow \left(\frac{1}{q}\right)^{N(t)-N(t-1)}$, and $\rho_{i^{N(t-1)}}(y^{t-1})\leftarrow \rho_{\mathcal S_j}(y^{t-1})$. If the types are not updated, the priors are equal to the posteriors, i.e., $\theta_{\mathcal S_j}(y^{t-1})\leftarrow \rho_{\mathcal S_j}(y^{t-1})$, $j=1,2,\dots$

(iii) \emph{Group partitioning:} Using all the existing types and their priors, the algorithm determines a partition $\{\mathcal G_x(y^{t-1})\}_{x\in\mathcal X}$ that satisfies the partitioning rule \eqref{SD} via a \emph{type-based greedy heuristic algorithm}. It operates as follows.  It initializes all the groups $\{\mathcal G_x(y^{t-1})\}_{x\in\mathcal X}$ by empty sets and initializes the group priors $\{\pi_x(y^{t-1})\}_{x\in\mathcal X}$ by zeros.
It forms a queue by sorting all the existing types according to priors $\theta_{\mathcal S_j}(y^{t-1})$, $j=1,2,\dots$ in a descending order. It moves the types in the queue one by one to one of the groups $\{\mathcal G_x(y^{t-1})\}_{x\in\mathcal X}$. Before each move, it first determines a group $\mathcal G_{x^*}(y^{t-1})$ whose current prior $\pi_{x^*}(y^{t-1})$ has the largest gap to the corresponding capacity-achieving probability $P_{X}^*(x^*)$, 
\begin{align}\label{xstar_max}
    x^*\triangleq \arg\max_{x\in\mathcal X} P_{X}^*(x)-\pi_x(y^{t-1}).
\end{align}
Suppose the first type in the sorted queue, i.e., the type whose sequences have the largest prior, is $\mathcal S_j$. It then proceeds to determine the number of sequences that are moved from type $\mathcal S_j$ to group $\mathcal G_{x^*}(y^{t-1})$ by calculating
\begin{align}\label{ceil_n}
    n \triangleq \left\lceil \frac{P_{X}^*(x^*)-\pi_{x^*}(y^{t-1})}{\theta_{\mathcal S_j}(y^{t-1})} \right\rceil.
\end{align}
If $n\geq |\mathcal S_j|$, then it moves the whole type $\mathcal S_j$ to group $\mathcal G_{x^*}(y^{t-1})$; otherwise, it splits $\mathcal S_j$ into two types by keeping the smallest or the largest $n$ consecutive\footnote{This step ensures that all sequences in a type are consecutive. Thus, as we will discuss in the last paragraph in Section~\ref{type-based_encoding_phase}, it is sufficient to store two sequences, one with the smallest and one with the largest lexicographic orders, in a type to fully specify that type.} (in lexicographic order) sequences in $\mathcal S_j$ and transferring the rest into a new type, and it moves type $\mathcal S_j$ to group $\mathcal G_{x^*}(y^{t-1})$ and moves the new type to the beginning of the queue. It updates the prior $\pi_{x^*}(y^{t-1})$ after each move.

(iv) \emph{Randomization:} The type-based instantaneous encoding algorithm implements the randomization in \eqref{XbarXunder}--\eqref{PXZY} with respect to a partition $\{\mathcal G_{x}(y^{t-1})\}_{x\in\mathcal X}$. 

(v) \emph{Posterior update:} Upon receiving the channel output $Y_t=y_t$, the algorithm updates the posterior of the source sequences in each existing type. The posterior $\rho_{\mathcal S_j}(y^t)$, $j=1,2,\dots$ of the source sequences in type $\mathcal S_j$ is fully determined by \eqref{prior_post} with $\rho_i(y^t)\leftarrow \rho_{\mathcal S_j}(y^t)$, $\theta_i(y^{t-1})\leftarrow \theta_{\mathcal S_j}(y^{t-1})$.

Using \eqref{ceil_n} and Appendix~\ref{SD_exists}, we conclude that the type-based greedy heuristic algorithm ensures \eqref{SD}. 

We show that the complexity of the type-based instantaneous encoding phase is log-linear $O(t\log t)$ at times $t=1,2,\dots, t_k$. We first show that the number of types grows linearly, i.e., $O(t)$. Since the type update in step (i) does not add new types, the number of types increases only due to the split of types during group partitioning in step (iii). At most $|\mathcal X|$ types are split at each time. This is because the ceiling in \eqref{ceil_n} ensures that the group that receives the $n$ sequences from a split type will have a group prior no smaller than the corresponding capacity-achieving probability, thus the group will no longer be the solution to the maximization problem \eqref{xstar_max} and will not cause the split of other types. We proceed to analyze the complexity of each step of the algorithm. Step (i) (type update) has a linear complexity in the number of types, i.e., $O(t)$. This is because the methods of updating and splitting a type in steps (i) and (iii) ensure that the sequences in any type are consecutive, thus it is sufficient to store the starting and the ending sequences in each type to fully specify all the sequences in that type. As a result, updating a type is equivalent to updating the starting and the ending sequences of that type. Step (ii) (prior update) and step (v) (posterior update) have a linear complexity in the number of types, i.e., $O(t)$. Step (iii) (group partitioning) has a log-linear complexity in the number of types due to type sorting, i.e., $O(t\log t)$. This is because the average complexity of sorting a sequence of numbers is log-linear in the size of the sequence \cite{Cormen}. Step (iv) (randomization) has complexity $O(1)$ due to determining $\{p_{\overline x\rightarrow \underline{x}}\}_{\overline x\in \overline{\mathcal X}(y^{t-1}), \underline{x}\in\underline{\mathcal X}(y^{t-1})}$ in \eqref{pxy}--\eqref{pxy2}.

\subsection{Type-based instantaneous SED codes}\label{Type_based_SED}
We present type-based codes for the anytime instantaneous SED code in Section~\ref{Sec_alg_anytime} and for the instantaneous SED code restricted to transmit $k$ symbols in Section~\ref{Sec_SED_ER}, respectively.

The type-based anytime instantaneous SED code for a symmetric binary-input DMC operates at times $t=1,2,\dots$: 

(i$^\prime$) \emph{Type update:} At each time $t$, the algorithm updates types as in step (i) with $k=\infty$.

(ii$^\prime$) \emph{Prior update:} The algorithm updates the prior of the source sequences in each existing type as in step (ii) with $k=\infty$.

(iii$^\prime$) \emph{Group partitioning:} Using all the existing types and their priors, the algorithm determines a partition $\{\mathcal G_x(y^{t-1})\}_{x\in\{0,1\}}$ using an \emph{approximating} instantaneous SED rule that mimics the exact rule in \eqref{pXinequal}--\eqref{Gxinequal} as follows. It forms a queue by sorting all the existing types according to priors $\theta_{\mathcal{S}_j}(y^{t-1})$, $j=1,2,\dots$ in a descending order. It moves the types in the queue one by one to $\mathcal G_0(y^{t-1})$ until $\pi_0(y^{t-1})\geq P_X^*(0)= 0.5$ for the first time. Suppose the last type moved to $\mathcal G_0(y^{t-1})$ is $\mathcal S_j$. To make the group priors more even, it then calculates the number of sequences $n$ to be moved away from $\mathcal S_j$ as
\begin{subequations}
\begin{align}\nonumber
n~&\triangleq \argmin_{n\in \{\bunderline{n},\bar n\}}\Big| \left(\pi_0(y^{t-1}) -n\theta_{\mathcal S_j}(y^{t-1})\right)\\\label{nstar}
&- \left(\pi_1(y^{t-1}) + n\theta_{\mathcal S_j}(y^{t-1})\right)\Big|,\\\label{under_n}
\underline{n}~&\triangleq \left\lfloor\frac{\pi_0(y^{t-1})-0.5}{\theta_{\mathcal S_j}(y^{t-1})}\right\rfloor,\\\label{upper_n}
\bar{n}~& \triangleq \left\lceil\frac{\pi_0(y^{t-1})-0.5}{\theta_{\mathcal S_j}(y^{t-1})}\right\rceil.
\end{align}
\end{subequations}
It splits $\mathcal S_j$ into two types by transferring the first or the last $n$ \eqref{nstar} lexicographically ordered sequences in $\mathcal S_j$ to a new type. It moves the new type and all the remaining types in the queue to $\mathcal G_1(y^{t-1})$. 

(iv$^\prime$) The randomization step in (iv) is dropped.

(v$^\prime$) \emph{Posterior update:} The algorithm updates the posteriors of the source sequences in each existing type. The posterior $\rho_{\mathcal S_j}(y^t)$, $j=1,2,\dots$, is fully determined by \eqref{prior_post2} with $\rho_i(y^t)\leftarrow \rho_{\mathcal S_j}(y^t)$, $\theta_i(y^{t-1})\leftarrow \theta_{\mathcal S_j}(y^{t-1})$.

(vi$^\prime$) \emph{Decoding at time $t$:} To decode the first $k$ symbols at time $t$, where $k$ can be any integer that satisfies $t_k\leq t$, the algorithm first finds the type whose source sequences have the largest posterior. Then, it searches for the most probable length-$k$ prefix in that type by relying on the fact that sequences in the same type share the same posterior; thus, the prefix shared by the maximum number of sequences is the most probable one. Namely, the algorithm extracts the length-$k$ prefixes of the starting and the ending sequences, denoted by $i_{\mathrm{start}}^k$ and $i_{\mathrm{end}}^k$, respectively. If $i_{\mathrm{start}}^k= i_{\mathrm{end}}^k$ (Fig.~\ref{Fig_type_prefix}-a), then the decoder outputs $\hat S^k_t = i_{\mathrm{start}}^k$. If $i_{\mathrm{start}}^k$ and $i_{\mathrm{end}}^k$ are not lexicographically consecutive (Fig.~\ref{Fig_type_prefix}-b), then the decoder outputs a length-$k$ prefix in between the two prefixes. If $i_{\mathrm{start}}^k$ and $i_{\mathrm{end}}^k$ are lexicographically consecutive (Fig.~\ref{Fig_type_prefix}-c), then the algorithm computes the number of sequences in the type that have prefix $i_{\mathrm{start}}^k$ and the number of sequences in the type that have prefix $i_{\mathrm{end}}^k$ using the last $N(t)-k$ symbols of the starting and the ending sequences; the decoder outputs the prefix that is shared by more source sequences.

\begin{figure}[h!]
\centering
\includegraphics[trim = 25mm 229mm 25mm 20mm, clip, width=9.5cm]{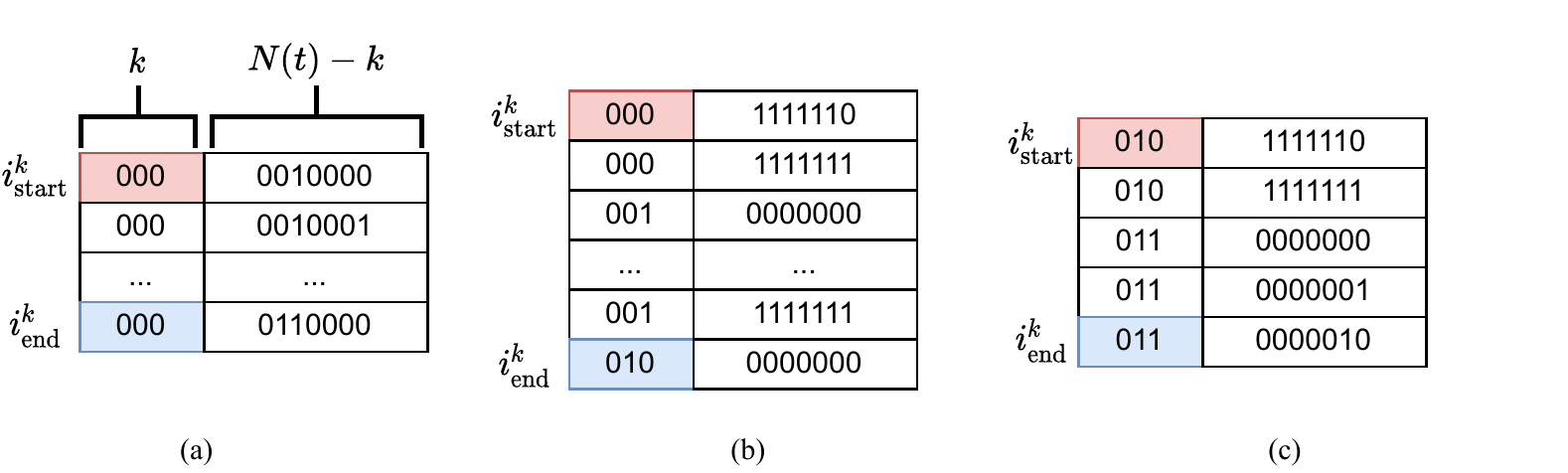}
\caption{Tables $(\mathrm{a})$, $(\mathrm{b})$, $(\mathrm{c})$ represent three types at time $t$. Each row represents a source sequence in the type. The first row and the last row in each type represent the starting sequence and the ending sequence in that type, respectively. The first column represents the length-$k$ prefix of sequences in the type. The source sequences in a type are lexicographically consecutive due to the methods of updating and splitting a type in steps~(i$^\prime$)~and~(iii$^\prime$). In $(\mathrm{a})$, since $i_{\mathrm{start}}^k = i_{\mathrm{end}}^k = 000$, the most probable sequence is $000$. In $(\mathrm{b})$, since $i_{\mathrm{start}}^k=000$ and $i_{\mathrm{end}}^k = 010$ are not lexicographically consecutive, the most probable prefix is $001$. In $(\mathrm{c})$, since $i_{\mathrm{start}}^k=010$ and $i_{\mathrm{end}}^k = 011$ are lexicographically consecutive, the number of sequences with prefix $i_{\mathrm{start}}^k$ can be computed by subtracting $1111110$, the last $N(t)-k$ symbols of the starting sequence, from $1111111$ and adding $1$; the number of sequences with prefix $i_{\mathrm{end}}^k$ is equal to the last $N(t)-k$ symbols of the ending sequence plus $1$. Since $(\mathrm{c})$ contains more sequences with prefix $011$, this is the most probable prefix.}
\label{Fig_type_prefix}
\end{figure}

We proceed to show that the complexity of the type-based anytime instantaneous SED code is $O(t\log t)$. Similar to the type-based instantaneous encoding phase in Section~\ref{type-based_encoding_phase}, the number of types grows linearly with time $t$ since the number of types increases only if a type is split in step~(iii$^\prime$), and at most $1$ type is split at each time $t$.  The complexities of steps~(i$^\prime$), (ii$^\prime$), (v$^\prime$) are all linear in the number of types $O(t)$ due to the discussion at the end of Section~\ref{type-based_encoding_phase}. The complexity of step~(iii$^\prime$) is log-linear in the number of types $O(t\log t)$ due to sorting the types. Since the sequences in a type are lexicographically consecutive due to the updating and the splitting methods in steps~(i$^\prime$)~and~(iii$^\prime$), it suffices to use the starting and the ending sequences in a type to determine the most probable prefix in that type. Thus, the complexity of step~(vi$^\prime$) is linear in the number of types due to searching for the type whose sequences have the largest posterior.

Restricting the type-based anytime instantaneous SED code described above to transmit only the first $k$ symbols of a DSS is equivalent to implementing steps (i), (ii), (iii$^\prime$), (v) one by one, and performing decoding as follows.

(vi$^{\prime\prime}$) \emph{Decoding and stopping:} If there exists a type $\mathcal S_j$ that satisfies $\rho_{\mathcal S_j}(y^t)\geq 1-\epsilon$ and contains a source sequence of length $k$, then the decoder stops and outputs a sequence in that type as the estimate $\hat S^k_{\eta_k}$. 

The complexity of the type-based instantaneous SED code for transmitting $k$ symbols remains log-linear, $O(t\log t)$, since the complexity of step (vi$^{\prime\prime}$) is $O(t)$ due to searching for the type that satisfies the requirements. 

While the type-based instantaneous encoding phase in Section~\ref{type-based_encoding_phase} is the exact algorithm of the instantaneous encoding phase in Section~\ref{belief_phase}, the type-based anytime instantaneous SED code and the type-based instantaneous SED code for transmitting $k$ symbols are \emph{approximations} of the original algorithms in Sections~\ref{Sec_alg_anytime}~and~\ref{Sec_SED_ER} due to two reasons below:

First, in step (iii$^\prime$) (group partitioning), we use the approximating instantaneous SED rule to mimic the exact rule in \eqref{pXinequal}--\eqref{Gxinequal}. The minimum of the objective function in \eqref{nstar} is equal to the difference $|\pi_0(y^{t-1})-\pi_1(y^{t-1})|$ between the group priors of the partition $\{\mathcal G_x(y^{t-1})\}_{x\in\{0,1\}}$ obtained by the approximating rule in step (iii$^\prime$). The difference is upper bounded as (Appendix~\ref{approximate_SED})
\begin{align}\label{pi0pi1_2}
    |\pi_0(y^{t-1}) - \pi_1(y^{t-1})| \leq \theta_{\mathcal S_j}(y^{t-1}),
\end{align}
where $\mathcal S_j$ is the last type moved to $\mathcal G_0(y^{t-1})$ so that its group prior exceeds $0.5$ for the first time. If $\pi_0(y^{t-1}) \geq \pi_1(y^{t-1})$, \eqref{pi0pi1_2} recovers \eqref{Gxinequal} since $\theta_{\mathcal S_j}(y^{t-1})$ is the smallest prior in $\mathcal G_0(y^{t-1})$, thus the approximating instantaneous SED rule recovers the exact rule. If $\pi_0(y^{t-1}) <\pi_1(y^{t-1})$, $\theta_{\mathcal S_j}(y^{t-1})$ on the right side of \eqref{pi0pi1_2} is the largest prior in $\mathcal G_1(y^{t-1})$, violating the right side of \eqref{Gxinequal}.

We use the approximating algorithm of the instantaneous SED rule \eqref{pXinequal}--\eqref{Gxinequal} since it is unclear how to implement the exact instantaneous SED rule with polynomial complexity. In the worst case, the complexity of the latter is as high as double exponential $O\left(2^{q^{N(t)}}\right)$ due to solving a minimization problem via an exhaustive search \cite[Algorithm~1]{Naghshvar2}. An exact algorithm for the SED rule with exponential complexity in the source length is given by \cite[Algorithm~2]{Naghshvar2}.

Second, in step (vi$^{\prime}$) (decoding at time $t$) of the type-based anytime instantaneous SED code, we only find the most likely length-$k$ prefix in the type that achieves $\max_j \rho_{\mathcal S_j}(y^t)$, yet it is possible that this prefix is not the one that has the globally largest posterior \eqref{hatSNT}. To search for the most probable length-$k$ prefix, one needs to compute the posteriors for all $q^k$ prefixes of length $k$ using $O(t)$ types, resulting in an exponential complexity $O(q^kt)$ in the length of the prefix $k$, whereas the complexity of step (vi$^{\prime}$) is only $O(t)$ independent of $k$.

Although the type-based instantaneous SED code is an approximation, as we are about to see in Fig.~\ref{Fig_type_based} Section~\ref{Sec_simulation}, it is almost as good as the exact code.

\section{Simulations}\label{Sec_simulation}

Fig.~\ref{Fig_R} shows the performance of our new instantaneous encoding schemes. Namely, we fix an error probability $\epsilon=10^{-6}$, a BSC($0.05$), and a DSS that emits i.i.d. Bernoulli$\left(\frac{1}{2}\right)$ bits one by one at consecutive times. We display the rate $R_k \triangleq \frac{k}{\mathbb E[\eta_k]}$ as a function of source length $k$ empirically attained by the instantaneous encoding phase followed by the SED code \cite[Algorithm~2]{Naghshvar2} and the instantaneous SED code in Section~\ref{Sec_SED_ER}, and we compare achievable rates to that of the SED code for a fully accessible source, as well as to that of a buffer-then-transmit code that implements the SED code during the block encoding phase. We also plot the rate $R_k$ obtained from the reliability function approximation \eqref{reliabilityfunc2}:
\begin{align}\label{solve_R}
    E(R_k) \simeq \frac{R_k}{k}\log\frac{1}{\epsilon}.
\end{align} 
Due to the discussions in the proof sketch of Theorem~\ref{thm_2}, the instantaneous encoding phase followed either by the MaxEJS code or by the SED code achieves the JSCC reliability function for streaming \eqref{converse2}. For the simulations in Fig.~\ref{Fig_R}, we choose the SED code since it applies to a BSC and its complexity, exponential in the source length, is lower than the double-exponential complexity of the MaxEJS code. To obtain the empirical rate in Fig.~\ref{Fig_R}, at each source length $k$, we run the experiments for every code for $10^5$ trials, and we obtain the denominator $\mathbb E[\eta_k]$ of the empirical rate by averaging the stopping times in all the experiments.

We observe from Fig.~\ref{Fig_R} that the achievable rate of the instantaneous encoding phase followed by the SED code is significantly larger than that of the buffer-then-transmit code, and approaches that of the SED code as $k$ increases even though the SED encoder knows the entire source sequence before the transmission. The instantaneous SED code demonstrates an even better performance: it is essentially as good as the SED code.
The rate obtained from reliability function approximation \eqref{solve_R} is remarkably close to the empirical achievable rates of our codes with instantaneous encoding even for very short source length $k\simeq 16$. For example, at $k=16$, the rate obtained from approximation \eqref{solve_R} is $0.58$ (symbols per channel use) and the empirical rate of the instantaneous SED code is $0.59$ (symbols per channel use). This means that the reliability function \eqref{reliabilityfunc2}, an inherently asymptotic notion, accurately reflects the delay-reliability tradeoffs attained by the JSCC reliability function-achieving codes in the ultra-short blocklength regime. The achievable rate corresponding to the buffer-then-transmit code is limited by \eqref{NC_reliability}.
\begin{figure}[h!]
\centering
\includegraphics[trim = 0mm 0mm 0mm 0mm, clip, width=8.5cm]{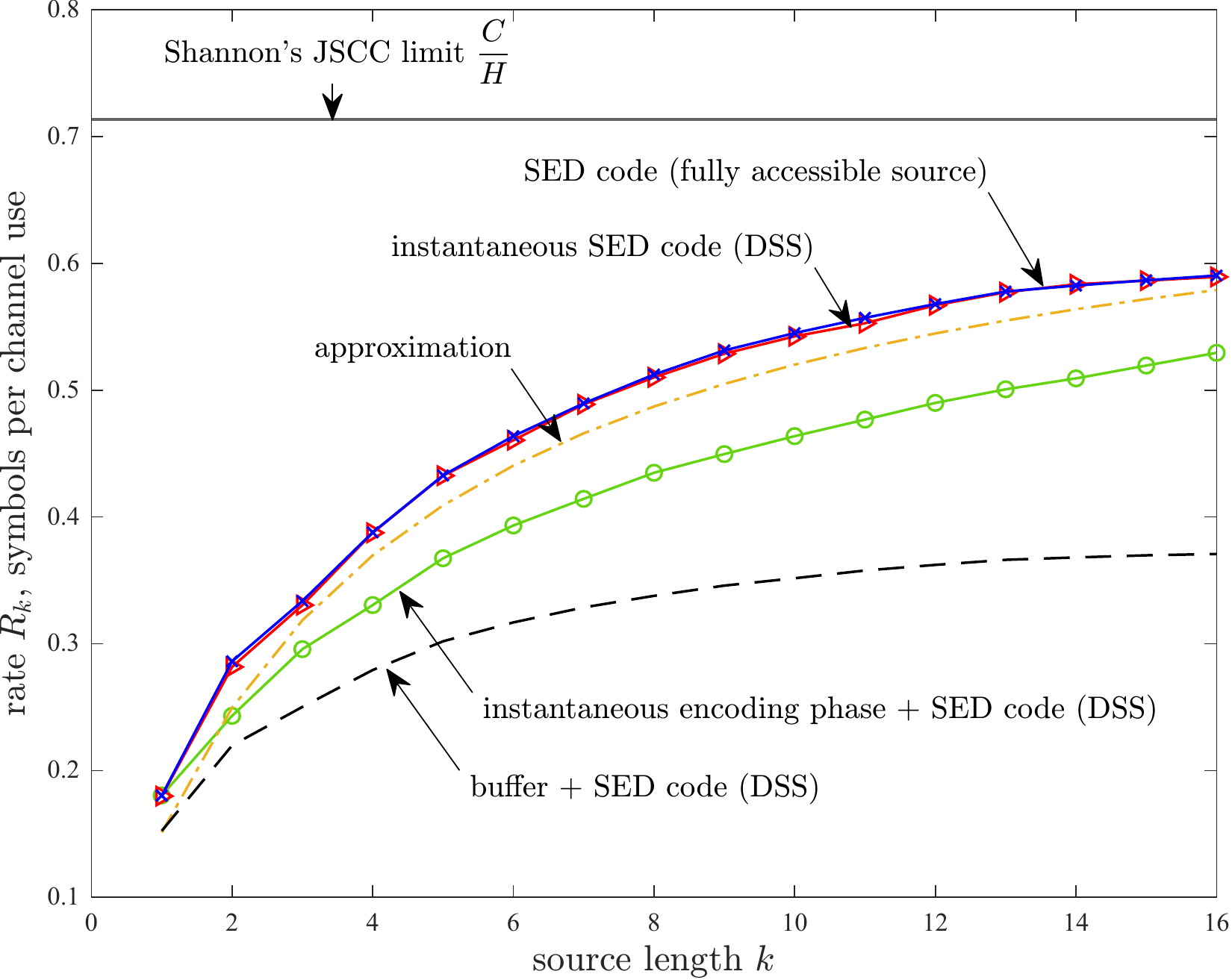}
\caption{Rate $R_k$ (symbols per channel use) vs. source length $k$. The error probability is constrained by $\epsilon = 10^{-6}$ \eqref{error_constraint}. The DMC is a BSC($0.05$). Naghshvar et al.'s SED code \cite[Algorithm~2]{Naghshvar2} operates on a fully accessible block $S^k$ of independent Bernoulli$\left(\frac{1}{2}\right)$ bits. The instantaneous encoding phase followed by the SED code, the instantaneous SED code, and the buffer-then-transmit code operate on $k$ i.i.d. Bernoulli$\left(\frac{1}{2}\right)$ source bits emitted one by one at times $t=1,2,\dots,k$.
The curves are displayed for the range of $k$'s where the complexities of the SED code and the instantaneous SED code are not prohibitive.}
\label{Fig_R}
\end{figure}

Fig.~\ref{Fig_type_based} shows the performance of the type-based instantaneous SED code. We fix an error probability $\epsilon=10^{-6}$ \eqref{error_constraint}, a BSC($p$) with $p=0.05, 0.03, 0.01$, and a DSS that emits i.i.d. Bernoulli$\left(\frac{1}{2}\right)$ bits one by one at consecutive times. We plot rate $R_k=\frac{k}{\mathbb E[\eta_k]}$ as a function of source length $k$ empirically achieved by the instantaneous SED code in Section~\ref{Sec_SED_ER} and its corresponding type-based code in Section~\ref{Type_based_SED}, as well as the rate obtained from the reliability function approximation \eqref{solve_R}. At each source length $k$, we run the experiments using the same method as in Fig.~\ref{Fig_R}. The rate gap between the instantaneous SED code and the type-based instantaneous SED code is negligible, meaning that the type-based instantaneous SED code with only log-linear complexity is a good approximation to the exact code in Section~\ref{Sec_SED_ER}. Furthermore, it is interesting to see that even though the DSS has symbol arriving rate $f=1$ symbol per channel use, which is far less than that required in assumption $(\mathrm{b}')$, the achievable rates of the instantaneous SED code stay very close to the rates obtained from the reliability function approximation. This suggests that assumption $(\mathrm{b}')$ on the symbol arriving rate, sufficient for the instantaneous SED code to achieve $E(R)$, could be conservative.
\begin{figure}[h!]
\centering
\includegraphics[trim = 0mm 0mm 0mm 0mm, clip, width=8.5cm]{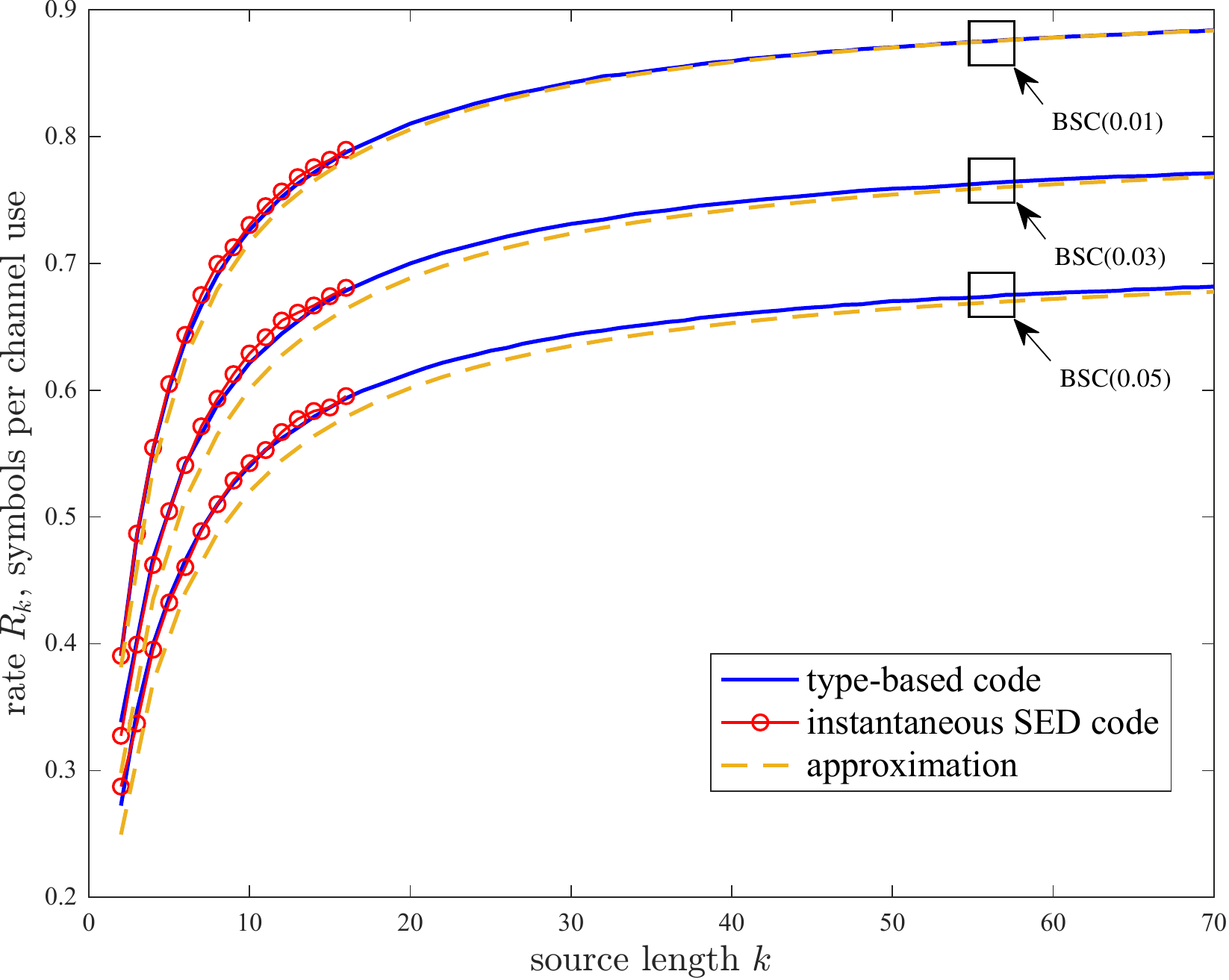}
\caption{Rate $R_k$ (symbols per channel use) vs. source length $k$. The error probability is constrained by $\epsilon = 10^{-6}$ \eqref{error_constraint}. The type-base instantaneous SED code in Section~\ref{Type_based_SED} and the instantaneous SED code in Section~\ref{Sec_SED_ER} operate on $k$ i.i.d. Bernoulli$\left(\frac{1}{2}\right)$ source bits emitted one by one at times $t=1,2,\dots,k$.}
\label{Fig_type_based}
\end{figure}

\section{Streaming over a degenerate DMC with zero error}\label{Sec_degenerate}
In this section, we propose a code with instantaneous encoding for a degenerate DMC \eqref{degenerate} that achieves zero decoding error at any rate asymptotically below $\frac{C}{H}$. Here, our code does not exactly follow Definition~\ref{def2} since it generalizes the code with instantaneous encoding in Definition~\ref{def2} by allowing \emph{common randomness} $U\in\mathcal U$, which is a random variable that is revealed to the encoder and the decoder before the transmission. With common randomness $U$, the encoder $\mathsf f_t$ \eqref{enc} can use $U$ to form $X_t$, and the decoder $\mathsf g_t$ \eqref{Skt_g} can use $U$ to decide the stopping time $\eta_k$ and the estimate $\hat S^k_{\eta_k}$. We refer to such a code as a $\langle k, R, \epsilon \rangle$ \emph{code with instantaneous encoding and common randomness} if it achieves rate $R$ \eqref{time_constraint} and error probability $\epsilon$ \eqref{error_constraint} for transmitting $k$ symbols of a DSS. Common randomness is widely used to specify a random codebook in the scenario where multiple constraints on expectations of quantities that depend on the codebook must be satisfied simultaneously and where Shannon's probabilistic method is not sufficient to claim the existence of a deterministic codebook satisfying all constraints, e.g., \cite{Yamamoto}\cite{PPV}\cite{Truong}\cite{V}. Since for a fixed $k$, we seek to satisfy two constraints, on the rate and on the error probability, the cardinality of $\mathcal U$ can be restricted as $|\mathcal U|\leq 2$ (Appendix~\ref{Cara}). 

Theorem~\ref{thm_ze}, stated next, establishes the existence of zero-error codes for the transmission over a degenerate DMC at any rate asymptotically below $\frac{C}{H}$.
\begin{theorem}\label{thm_ze}
Fix a degenerate DMC with capacity $C$ \eqref{capacity}, fix a $(q,\{t_n\}_{n=1}^{\infty})$ DSS with entropy entropy rate $H>0$ \eqref{entropy_rate} satisfying assumptions $(\mathrm{a})$--$(\mathrm{b})$ in Theorem~\ref{thm_2}, and fix any $R<\frac{C}{H}$. There exists a sequence of $\langle k,R_k, 0\rangle $ codes with instantaneous encoding and common randomness that satisfies
\begin{align}\label{RkR}
    \lim_{k\rightarrow\infty} R_k = R.
\end{align}
\end{theorem}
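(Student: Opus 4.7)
My approach is to build a block-structured code in the spirit of Burnashev's Section~6 scheme, adapted to the JSCC streaming setting. Fix any $R<C/H$. The code proceeds in rounds $\ell=1,2,\ldots$, each consisting of a communication phase of length $N\triangleq \lceil k/R\rceil$ followed by a confirmation phase of length $T\triangleq \lceil c\log k\rceil$ for a constant $c>0$ chosen below. In the communication phase of round $1$ I would run the instantaneous encoding phase of Section~\ref{belief_phase} during the arriving period $t=1,\ldots,t_k$ (since $R<C/H\leq f$ by assumption~$(\mathrm{b})$, we have $t_k<N$ for $k$ large) and then a fixed-horizon JSCC block code (e.g., a MaxEJS code truncated at time $N$) during $t=t_k+1,\ldots,N$. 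By the achievability half of Theorem~\ref{thm_2}, the decoder's tentative estimate $\hat S^k_1$ at time $N$ is incorrect with probability $\epsilon_1$ that decays exponentially in $k$. In round $\ell\geq 2$, the source $S^k$ is already fully known to the encoder, so I would transmit it over $N$ channel uses via a JSCC random code whose codebook is chosen by the common randomness $U$; Gallager's and Csisz\'ar's random coding exponents for JSCC \cite{Gallager,Csiszar} then give $\epsilon_\ell\leq e^{-c_0 k}$ for some $c_0>0$, uniformly in $\ell$.

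The confirmation phase exploits the degenerate triple $(x,x',y)$ in \eqref{degenerate}: fix them so that $p\triangleq P_{Y|X}(y|x)>0$ and $P_{Y|X}(y|x')=0$. Using the feedback, the encoder learns $\hat S^k_\ell$ and transmits $x$ for $T$ consecutive channel uses if $\hat S^k_\ell=S^k$ and transmits $x'$ otherwise. The decoder stops and declares $\hat S^k_{\eta_k}=\hat S^k_\ell$ as soon as at least one $y$ appears in a confirmation phase; if none appears, it advances to round $\ell+1$. Because $y$ cannot be produced while $x'$ is being transmitted, the stopping event forces $\hat S^k_\ell=S^k$, giving the \emph{zero error} property directly, independently of the rate analysis.

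For the rate, the probability of continuing past round $\ell$ satisfies $P_\ell\leq \epsilon_\ell+(1-p)^T$. Choosing $c$ large enough makes $(1-p)^T$ polynomially small in $k$, and combined with the exponential decay of $\epsilon_\ell$ this drives $\sup_\ell P_\ell\to 0$ as $k\to\infty$. A geometric bound then yields
\begin{equation*}
\mathbb E[\eta_k]\;\leq\;(N+T)\sum_{\ell\geq 0}\Big(\sup_j P_j\Big)^\ell\;=\;\frac{N+T}{1-\sup_j P_j},
\end{equation*}
so $R_k\triangleq k/\mathbb E[\eta_k]\geq (1-\sup_j P_j)\,k/(N+T)\to R$ as $k\to\infty$, since $T=o(k)$. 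This establishes \eqref{RkR}. The bound $|\mathcal U|\leq 2$ on the common randomness alphabet is then obtained as in Appendix~\ref{Cara}.

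The main obstacle I anticipate is the JSCC random coding analysis in rounds $\ell\geq 2$ for a general, possibly non-i.i.d.\ DSS: the classical Gallager/Csisz\'ar exponent is stated for DMSs, and extending it to a source of entropy rate $H$ satisfying only the information-spectrum lower bound~\eqref{assump_a} requires replacing i.i.d.\ typicality by an asymptotic equipartition argument driven by assumption~$(\mathrm{a})$, together with arguing that a single realization of the common randomness $U$ serves uniformly in $S^k$ for all large $k$. A secondary but routine technical issue is the conversion of Theorem~\ref{thm_2}'s variable-length guarantee to a fixed-length error bound at horizon $N$ in round~$1$, handled by truncating the variable-length code and absorbing the tail of its stopping time into $\epsilon_1$ via Markov's inequality.
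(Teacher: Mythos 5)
Your block structure, confirmation-phase mechanism, and the zero-error argument (receiving $y$ forces a correct estimate) are all sound and match the paper's scheme. The genuine gap is in rounds $\ell\geq 2$. You propose transmitting $S^k$ there via a JSCC random code and citing the Gallager/Csisz\'ar JSCC exponent, but that exponent only bounds the \emph{average} error probability over $S^k$: atypical source realizations can be misdecoded with probability near one. Since the same $S^k$ is retransmitted in every round, such a realization would cause the decoder to loop forever (the encoder always sends NACK, the decoder never sees $y$), and $\mathbb E[\eta_k]$ need not be finite. Your geometric bound $\mathbb E[\eta_k]\leq (N+T)\sum_\ell(\sup_j P_j)^\ell$ hides this: it tacitly treats the continuation events as independent across rounds, but they are highly correlated because the source is fixed; what actually needs controlling is $\mathbb P[A_\ell\mid A_1,\dots,A_{\ell-1}]$, which can stay near one when $S^k$ is atypical. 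The paper resolves both problems at once by (i) \emph{not compressing} in rounds $\ell\geq 2$ --- each of the $q^k$ uncompressed source sequences gets its own random codeword and the decoding error bound comes from Gallager's channel-coding exponent under the \emph{maximum} error criterion, hence it holds uniformly over source realizations --- and (ii) \emph{refreshing the random codebook} in every retransmitted block, so that conditionally the rounds become i.i.d.\ and $T_k\mid T_k>0$ is exactly geometric (Appendix~\ref{zero-error_A}, observations 2) and 4); Lemma~\ref{lemma_ETk_ub}).

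A second issue follows from the first fix: once you decline to compress, a blocklength of $N=\lceil k/R\rceil$ is too short whenever $R>C/\log q$, which is possible when $H<\log q$. The paper therefore uses a longer, separate blocklength $k/R(2)$ with $R(2)<C/\log q$ in rounds $\ell\geq 2$; this costs nothing asymptotically precisely because $\mathbb E[T_k]\to 0$. Two smaller notes: the paper keeps the first communication phase variable-length with expected length $k/R(1)$ (Fig.~\ref{Fig_block_code}), so the truncate-and-Markov step you anticipate is not needed; and your confirmation-phase length $T=\Theta(\log k)$ is a legitimate alternative to the paper's $n_k=\delta k$ followed by $\delta\to 0$, but this is a matter of bookkeeping rather than substance.
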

\begin{proof}[Proof sketch]
Our zero-error code for degenerate DMCs extends Burnashev's scheme \cite[Sec. 6]{Burnashev} to JSCC and to streaming sources: to achieve Shannon's JSCC limit $\frac{C}{H}$, a Shannon limit-achieving code is used in the first communication phase to compress the source; to transmit streaming sources, we combine an instantaneous encoding phase that satisfies \eqref{pre_coding_condition} with a Shannon limit-achieving block encoding scheme to form a Shannon limit-achieving instantaneous encoding scheme. To achieve zero error, we employ confirmation phases similar to those in Burnashev's scheme \cite{Burnashev}. We say that a $\langle k, R, \epsilon_k \rangle$ code with instantaneous encoding and common randomness achieves Shannon's JSCC limit $\frac{C}{H}$ if for all $R<\frac{C}{H}$, a sequence of such codes indexed by $k$ satisfies $\epsilon_k\rightarrow 0$ as $k\rightarrow \infty$. Our zero-error code includes such Shannon limit-achieving codes as a building block. Note that in contrast to the discussions in Sections~\ref{Sec_reliability2}--\ref{Sec_instantaneous_SED} focused on the exponential rate of decay of $\epsilon_k$ to $0$ \eqref{reliabilityfunc2} over non-degenerate DMCs, here merely having $\epsilon_k$ decrease to $0$ suffices. The following argument shows the existence of such codes for the class of channels that includes both non-degenerate and degenerate DMCs. 

We employ the joint source-channel code in \cite[Theorem~2]{V} due to the simplicity of the error analysis it affords. The code in \cite[Theorem~2]{V} is a $\langle k,R,\epsilon_k\rangle$ Shannon limit-achieving code with block encoding and common randomness because its expected decoding time to attain error probability $\epsilon$ is upper bounded as \eqref{lambda_k_ub0} in Appendix~\ref{pf_achieve_A} with $C_1\leftarrow C$ \cite[Eq. (16)]{V}, implying that it achieves a positive error exponent that is equal to \eqref{converse2} with $C_1\leftarrow C$ for all $R<\frac{C}{H}$. The block encoding scheme in \cite[Theorem~2]{V} is a stop-feedback code, meaning that the encoder uses channel feedback only to decide whether to stop the transmission but not to form channel inputs. If the DSS has an infinite symbol arriving rate $f=\infty$ \eqref{freq}, a buffer-then-transmit code using the block encoding scheme in \cite[Theorem~2]{V} achieves the Shannon limit since it achieves the same error exponent as the code in \cite[Theorem~2]{V}. To see this, one can simply invoke \eqref{lambda_k_ub0} in Lemma~\ref{lemma_s1} with $C_1\leftarrow C$ and follow the proofs in Appendix~\ref{pf_achieve_B}. By the same token, if the DSS has a finite symbol arriving rate $f<\infty$ \eqref{freq}, a code implementing an instantaneous encoding phase that satisfies \eqref{pre_coding_condition} followed by the block encoding scheme in \cite[Theorem~2]{V} for $k$ source symbols with prior $P_{S^k|Y^{t_k}}$ achieves the Shannon limit with the same error exponent as the code in \cite[Theorem~2]{V}.

Our zero-error code with instantaneous encoding and common randomness for transmitting $k$ symbols over a degenerate DMC operates as follows (details in Appendix~\ref{zero-error_A}). Similar to \cite{Burnashev}--\cite{Caire}, \cite{Truong}, our code is divided into blocks. Each block contains a communication phase and a confirmation phase. In the first block, the communication phase uses a $\langle k,R,\epsilon_k\rangle$ Shannon limit-achieving code with instantaneous encoding and common randomness. The confirmation phase selects two symbols $x$ \eqref{degenerate_a} and $x'$ \eqref{degenerate_b} as the channel inputs (i.e., $x'$ never leads to channel output $y$); the encoder repeatedly transmits $x$ if the decoder's estimate of the source sequence at the end of the communication phase is correct, and transmits $x'$ otherwise. If the decoder receives a $y$ in the confirmation phase, meaning that the encoder communicated its knowledge that the decoder's estimate is correct with zero error, then it outputs its estimate, otherwise, the next block is transmitted. The $\ell$-th block, $\ell\geq 2$, differs from the first block in that it does not compress the source to avoid errors due to an atypical source realization and in that it uses random coding whereas the first block can employ any Shannon-limit achieving code.

We proceed to discuss the error and the rate achievable by our code (details in Appendix~\ref{zero-error_B}).

Our code achieves zero error by employing confirmation phases that rely on the degenerate nature of the channel: receiving a $y$ in the confirmation phase guarantees a correct estimate. 

Our code achieves all rates asymptotically below $\frac{C}{H}$ because 1) the first block employs a Shannon limit-achieving code in the communication phase, 2) the length of the confirmation phase is made negligible compared to the length of the communication phase as the source length $k\rightarrow\infty$, meaning that the length of the first block asymptotically equals the length of its communication phase, and 3) subsequent blocks asymptotically do not incur a penalty on rate, as we discuss next. Since the length of each block is comparable to the length of the first block, it is enough to show that the expected number of blocks $T_k$ transmitted after the first block converges to zero. 
The refreshing of random codebook for all uncompressed source sequences in every block after the first block ensures that the channel output vectors in these subsequent blocks are i.i.d. and are independent of the channel outputs in the first block. Conditioned on $T_k>0$, the i.i.d. vectors give rise to a geometric distribution of $T_k$ with failure probability converging to $0$, which implies $\mathbb E[T_k] \rightarrow 0$ as $k\rightarrow\infty$. 
\end{proof}

A stop-feedback code with block encoding that retransmits blocks with the overall rate asymptotically equal to the rate of the first block is also used by Forney
\cite[p. 213]{Forney} for deriving a lower bound on the reliability function of a DMC.

\section{Conclusion}
In this paper, we have derived the reliability function for transmitting a discrete streaming source over a DMC with feedback using variable-length joint source-channel coding with instantaneous encoding under regularity conditions (Theorem~\ref{thm_2}). Since a classical fully accessible DS is a special DSS (see \eqref{DS}), Theorem~\ref{thm_2} extends Burnashev's reliability function to the classical JSCC scenario with block encoding, as well as to a streaming scenario. The most surprising observation in this paper is that the JSCC reliability function for a streaming source is equal to that for a fully accessible source. A naive buffer-then-transmit code that idles the transmission during the symbol arriving period does not achieve the JSCC reliability function for a non-trivial streaming source (see \eqref{NC_reliability}). To achieve the JSCC reliability function for such sources, we have proposed a novel instantaneous encoding phase (Section~\ref{belief_phase}). We have shown that preceding a JSCC reliability function-achieving code with block encoding, e.g., the MaxEJS code or the SED code \cite{Naghshvar2}, by our instantaneous encoding phase (Section~\ref{belief_phase})  will make it overcome the detrimental effect due to the streaming nature of the source and make it achieve the same error exponent as if the encoder knew the entire source sequence before the transmission. The instantaneous encoding phase (Section~\ref{belief_phase}) achieves the JSCC reliability function because it satisfies the sufficient condition \eqref{pre_coding_condition} on the statistics of the encoder outputs during the symbol arriving period, for example, the instantaneous encoding phase continues to achieve the sufficient condition \eqref{pre_coding_condition} after it drops the randomization step, but at a cost of increasing the threshold for the symbol arriving rate (Remark~\ref{rmk_drop_random}). While our JSCC reliability function-achieving codes are designed to transmit $k$ symbols of a streaming source and stop, we have also designed an instantaneous SED code (Section~\ref{Sec_instantaneous_SED}) that can choose the decoding time and the number of symbols to decode on the fly. It empirically attains a positive anytime reliability (Fig.~\ref{Fig_infinite}), thus it can be used to stabilize an unstable scalar linear system with a bounded noise over a noisy channel. A sequence of such codes indexed by the source length to decode also achieves the JSCC reliability function for streaming in the limit of large source length (Theorem~\ref{thm_ze}). For practical implementations, we have designed type-based log-linear complexity algorithms for the instantaneous encoding phase and the instantaneous SED code (Section~\ref{Sec_practical}) that apply to streaming sources with equiprobable symbols. While the codes that achieve the JSCC reliability function are designed for non-degenerate DMCs, we have also designed zero-error codes with instantaneous encoding for degenerate DMCs (Section~\ref{Sec_degenerate}), extending Burnashev's zero-error channel code to the JSCC and to the streaming scenarios. 

Future research directions include the following. First, it would be interesting to find the JSCC reliability function for a wider class of channels and streaming sources (e.g., sources without a valid $f$, with a small $f$, or with an infinite alphabet). Second, it would be interesting to extend $E(R)$ to lossy JSCC by inserting an appropriate instantaneous encoding phase before a lossy JSCC reliability function-achieving block encoding scheme. Third, it is practically important to design instantaneous encoding schemes using limited or noisy feedback. Finally, it would be interesting to find codes that can learn the streaming source distribution on the fly.


\section*{Acknowledgement}
Insightful comments from Dr. Oron Sabag are gratefully acknowledged.

\appendices

\section{A partition that satisfies \eqref{SD}}\label{SD_exists}
For any $t=1,2,\dots$ and any $y^{t-1}\in\mathcal Y^{t-1}$, we show that the greedy heuristic algorithm \cite{Korf} yields a partition  $\{\mathcal G_x(y^{t-1})\}_{x\in\mathcal X}$ that satisfies the partitioning rule \eqref{SD}. 

The greedy heuristic algorithm operates as follows. At time $t$, it initializes all the groups $\{\mathcal G_x(y^{t-1})\}_{x\in\mathcal X}$ by empty sets and initializes all the group priors $\{\pi_x(y^{t-1})\}_{x\in\mathcal X}$ by zeros. It sorts all the source sequences in $[q]^{N(t)}$ according to their priors $\theta_i(y^{t-1})$, $i\in[q]^{N(t)}$ in a descending manner. Starting from the sequence with the largest prior, it moves the sequence in the sorted list to the group $\mathcal G_{x^*}(y^{t-1})$ whose current group prior has the largest gap to the corresponding capacity-achieving probability, i.e.,
\begin{align}\label{max_problem}
    x^*\triangleq \arg\max_{x\in\mathcal X}P_{X}^*(x) - \pi_x(y^{t-1}).
\end{align}
The group prior $\pi_{x^*}(y^{t-1})$ is updated after each move.
The partitioning process repeats until all the source sequences have been classified.

We show that the resulting partition $\{\mathcal G_x(y^{t-1})\}_{x\in\mathcal X}$ satisfies \eqref{SD}. We first notice that the maximization problem on the right side of \eqref{max_problem} must be strictly larger than zero before all source sequences have been classified. If moving sequence $i$ to group $\mathcal G_{x^*}(y^{t-1})$ leads to 
\begin{align}\label{pi<PX}
    \pi_{x^*}(y^{t-1}) - P_{X}^*(x^*)<0,
\end{align}
then \eqref{SD} is obviously satisfied. If moving sequence $i$ to group $\mathcal G_{x^*}(y^{t-1})$ leads to
\begin{align}\label{pi_x_P_X}
    \pi_{x^*}(y^{t-1}) - P_{X}^*(x^*)\geq 0,
\end{align}
then \eqref{SD} is satisfied since \eqref{pi<PX} holds before the move, and sequence $i$ has the smallest prior in $\mathcal G_{x^*}(y^{t-1})$ after the move. Furthermore, if the group $\mathcal G_{x^*}(y^{t-1})$ satisfies \eqref{pi_x_P_X}, it will no longer be the solution to the maximization problem in \eqref{max_problem} and thus will no longer accept new sequences. This means that \eqref{SD} holds for all $x\in\mathcal X$ at the end of the greedy heuristic partitioning.

\section{An algorithm to determine $\{p_{\overline x\rightarrow \underline{x}}\}_{\overline x\in \overline{\mathcal X}(y^{t-1}), \underline{x}\in\underline{\mathcal X}(y^{t-1})}$}\label{set_pxx}
We design Algorithm~\ref{algo_pxx} to determine a set of probabilities $\{p_{\overline x\rightarrow \underline{x}}\}_{\overline x\in \overline{\mathcal X}(y^{t-1}), \underline{x}\in\underline{\mathcal X}(y^{t-1})}$ that satisfies \eqref{pxy}--\eqref{pxy2}. We denote by $\bar{\mathcal X}(1)$ the first element in set $\bar{\mathcal X} (y^{t-1})$. The order of the elements in $\bar{\mathcal X} (y^{t-1})$ is irrelevant.

For every group $\mathcal G_{\underline{x}}(y^{t-1})$ with $\underline{x}\in\underline{\mathcal{X}}(y^{t-1})$, Algorithm~\ref{algo_pxx} goes through groups $\mathcal G_{\bar{x}}(y^{t-1})$ with $\bar x \in \bar{\mathcal X}(y^{t-1})$ to transfer probability $p_{\bar x\rightarrow\underline x}$ to $\mathcal G_{\underline{x}}(y^{t-1})$. The amount of probability $p_{\bar x\rightarrow\underline x}$ to transfer from $\mathcal G_{\bar x}(y^{t-1})$ to $\mathcal G_{\underline x}(y^{t-1})$ is the smallest of $\hat \pi_{\bar x}(y^{t-1})-P_X^*(\bar x)$ and $P_X^*(\underline x) - \hat \pi_{\underline x}(y^{t-1})$. After the update, if the new prior $\hat \pi_{\bar x}(y^{t-1})$ (or $\hat \pi_{\underline x}(y^{t-1})$) is equal to its target value $P_{X}^*(\bar x)$ (or $P_{X}^*(\underline x)$), the corresponding group will be removed from the set $\bar {\mathcal X}(y^{t-1})$ (or $\underline {\mathcal X}(y^{t-1})$). In this way, $\{p_{\overline x\rightarrow \underline{x}}\}_{\overline x\in \overline{\mathcal X}(y^{t-1}), \underline{x}\in\underline{\mathcal X}(y^{t-1})}$ are determined. At the end of the algorithm, $\hat \pi_{\bar x}(y^{t-1})$ and $\hat \pi_{\underline x}(y^{t-1})$ indeed represent the left sides of \eqref{pxy} and \eqref{pxy2}, respectively. 

We show that \eqref{pxy}--\eqref{pxy2} hold: \eqref{pxy2} holds by lines 4, 6, 8, 9--10 of Algorithm~\ref{algo_pxx} and the fact that line 9 must be satisfied during the while loop since
\begin{align}\label{sumsum_eq}
    \sum_{\underline{x}\in\underline{\mathcal X}(y^{t-1})}P_X^*(\underline x) - \pi_{\underline x}(y^{t-1}) = \sum_{\bar{x}\in\bar{\mathcal X}(y^{t-1})}\pi_{\bar x}(y^{t-1}) - P_X^*(\bar x)
\end{align}
ensures that there are enough probabilities $p_{\bar x\rightarrow\underline x}$ to transfer from groups in $\bar{\mathcal X}(y^{t-1})$ to $\mathcal G_{\underline x}(y^{t-1})$;
\eqref{pxy} holds by lines 6, 7, 11--12 of Algorithm~\ref{algo_pxx} and the facts that 1) \eqref{pxy2} holds, i.e., $\hat \pi_{\underline x}(y^{t-1}) = P_X^*(\underline x)$, 2) the minimum in line 6 ensures that $\hat \pi_{\bar x}(y^{t-1})\geq P_X^*(\bar x)$ at the end of the algorithm, 3) \eqref{sumsum_eq} implies that $\hat \pi_{\bar x}(y^{t-1})>P_X^*(\bar x)$ is impossible, otherwise $\sum_{x\in\mathcal X}\hat \pi_x(y^{t-1})>\sum_{x\in\mathcal X}P_X^*(x)=1$. 

\begin{algorithm}[t]
\caption{Determine $\{p_{\overline x\rightarrow \underline{x}}\}_{\overline x\in \overline{\mathcal X}(y^{t-1}), \underline{x}\in\underline{\mathcal X}(y^{t-1})}$ that satisfies \eqref{pxy}--\eqref{pxy2} }\label{algo_pxx}
\KwData{$\pi_x(y^{t-1})_{x\in\mathcal X}$,~$\bar {\mathcal X}(y^{t-1})$, ~$\underline {\mathcal X}(y^{t-1})$}
\KwResult{$\{p_{\overline x\rightarrow \underline{x}}\}_{\overline x\in \overline{\mathcal X}(y^{t-1}), \underline{x}\in\underline{\mathcal X}(y^{t-1})}$}
{$\{p_{\overline x\rightarrow \underline{x}}\}_{\overline x\in \overline{\mathcal X}(y^{t-1}), \underline{x}\in\underline{\mathcal X}(y^{t-1})} = 0;$}\\
{$\hat \pi_x(y^{t-1})\gets \pi_x(y^{t-1}),\forall x\in\mathcal X;$}\\
\For{$\underline{x}\in \underline{\mathcal X}(y^{t-1})$}{
\While{$\hat \pi_{\underline{x}}(y^{t-1})<P_{X}^*(\underline{x})$}{
$\bar x \gets \bar {\mathcal X}(1);$\\
$p_{\bar x\rightarrow\underline x} \gets \min\{\hat \pi_{\bar x}(y^{t-1})-P_X^*(\bar x), P_X^*(\underline x) - \hat \pi_{\underline x}(y^{t-1})\};$\\
$\hat \pi_{\bar x}(y^{t-1}) \gets \hat \pi_{\bar x}(y^{t-1}) - p_{\bar x\rightarrow\underline x};$\\
$\hat \pi_{\underline x}(y^{t-1}) \gets \hat \pi_{\underline x}(y^{t-1}) + p_{\underline x\rightarrow\underline x};$\\
\If{$\hat \pi_{\underline x}(y^{t-1}) = P_{X}^*(\underline x)$}{$\underline{\mathcal X}(y^{t-1})\gets \underline{\mathcal X}(y^{t-1})\setminus \underline x;$}
\If{$\hat \pi_{\bar x}(y^{t-1}) = P_{X}^*(\bar x)$}{$\bar{\mathcal X}(y^{t-1})\gets \bar{\mathcal X}(y^{t-1})\setminus \bar x;$}
}}
\end{algorithm}

\section{Channel input distribution is equal to the capacity-achieving distribution }\label{pf_PXYPX*}
We show that \eqref{PXYPX*} holds, i.e., the channel input distribution is equal to the capacity-achieving distribution. For any $x\in\mathcal X$ and $y^{t-1}\in\mathcal Y^{t-1}$, we expand right side of \eqref{PXYPX*} as
\begin{subequations}\label{PXYexpand}
\begin{align}\nonumber
    & P_{X_t|Y^{t-1}}(x|y^{t-1})\\ \label{expand_a}
    =& \sum_{z\in\mathcal X}P_{X_t|Z_t, Y^{t-1}}(x|z,y^{t-1})P_{Z_t|Y^{t-1}}(z|y^{t-1})\\ \label{expand_b}
    = & \sum_{z\in\mathcal X}P_{X_t|Z_t, Y^{t-1}}(x|z,y^{t-1})\pi_{z}(y^{t-1})\\\label{expand_c}
    = & P_{X_t|Z_t, Y^{t-1}}(x|x,y^{t-1})\pi_{x}(y^{t-1})\\ \label{expand_d}
    + & \sum_{z\neq x}P_{X_t|Z_t, Y^{t-1}}(x|z,y^{t-1})\pi_{z}(y^{t-1}),
\end{align}
\end{subequations}
where \eqref{expand_a} holds by the law of total probability and \eqref{expand_b} holds by the definition of $Z_t$ in \eqref{Zt1}.

By the randomization distribution in \eqref{PXZY}, if $x\in \overline {\mathcal X}(y^{t-1})$, then \eqref{expand_c} is equal to $P_{X}^*(x)$ and \eqref{expand_d} is equal to $0$, and if $x\in \underline {\mathcal X}(y^{t-1})$, then \eqref{expand_c} is equal to $\pi_{x}(y^{t-1})$
and \eqref{expand_d} is equal to 
\begin{align}\label{xunder2}
    \sum_{z\in \overline {\mathcal X}(y^{t-1})}\frac{p_{z\rightarrow x}}{\pi_{x}(y^{t-1})}\pi_{x}(y^{t-1}) = P_{X}^*(x) - \pi_{x}(y^{t-1}),
\end{align}
where \eqref{xunder2} uses \eqref{pxy2}.

\section{Converse proof of Theorem~\ref{thm_2}}\label{pf_converse}
\subsection{Converse proof}
Inspired by Berlin et al.'s converse proof \cite{Berlin} for Burnashev's reliability function, we provide a converse bound on the JSCC reliability function for a fully accessible source by lower bounding the expected stopping time of an arbitrary code with block encoding using the error probability at the stopping time. The converse bound continues to apply for the JSCC reliability function for streaming, since given a DMC, every code with instantaneous encoding for transmitting the first $k$ symbols of a $(q,\{t_n\}_{n=1}^{\infty})$ DSS in Definition~\ref{def2} is a special code with block encoding for transmitting the first $k$ symbols $S^k\in[q]^k$ of a DS \eqref{DS}.

We consider $k$ symbols $S^k\in[q]^k$ of a DS with source distribution $P_{S^k}$, and we fix a non-degenerate DMC with a single-letter transition probability $P_{Y|X}\colon \mathcal X\rightarrow \mathcal Y$. We fix an arbitrary code with block encoding with a stopping time $\eta_k$ for transmitting $S^k$ over the non-degenerate DMC with feedback. We assume that the decoder is a MAP decoder \eqref{hatSNT}, since given any encoding function and any stopping time in Definition~\ref{def2}, the MAP decoder \eqref{hatSNT} achieves the minimum error probability \eqref{error_constraint}. For brevity, we denote the error probability of a MAP decoder given channel outputs $y^t\in\mathcal Y^t$ by
\begin{align}
    P_{e}(y^t)\triangleq 1-\max_{s\in[q]^k}P_{S^k|Y^t}(s|y^t),
\end{align}
and we denote the error probability of a MAP decoder at the stopping time $\eta_k$ by
\begin{align}\label{MAP_Pe}
    P_e \triangleq \mathbb E[P_e(Y^{\eta_k})].
\end{align}
We define stopping time $\tau_{\delta}$ as
\begin{align}\label{def_taudelta}
    \tau_{\delta}\triangleq \min\{t\colon P_e(y^t)\leq \delta~\text{or}~t=\eta_k\}.
\end{align}

To obtain the converse bound on the JSCC reliability function for a fully accessible source, we establish a lower bound on the expected decoding time $\mathbb E[\eta_k]$ using the error probability $P_e$ and and source distribution $P_{S^k}$. To this end, we lower bound $\mathbb E[\tau_{\delta}]$ and $\mathbb E[\eta_k - \tau_{\delta}]$, respectively. The lower bound on $\mathbb E[\tau_{\delta}]$ is stated below.
\begin{lemma}[Modified Lemma~2 in \cite{Berlin}]\label{Berlin1} Consider $k$ symbols $S^k\in [q]^k$ of a DS with source distribution $P_{S^k}$ \eqref{bit_prob} and fix a non-degenerate DMC with capacity $C$ \eqref{capacity}. For any $\delta\in\left(0,\frac{1}{2}\right]$, it holds that
\begin{align}\label{Etau}
    \mathbb E[\tau_{\delta}] \geq \frac{H(S^k)}{C}\left(1 - \left(\delta+\frac{P_e}{\delta}\right)\frac{\log q^k}{H(S^k)}\right) - \frac{h(\delta)}{C}.
\end{align}
\end{lemma}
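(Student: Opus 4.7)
My plan is to adapt Berlin et al.'s channel-coding template \cite{Berlin} to the JSCC setting by substituting the source entropy $H(S^k)$ for the message log-cardinality $\log M$. The backbone of the argument is to sandwich the mutual information $I(S^k;Y^{\tau_\delta})$ between a Fano-type upper bound on $H(S^k\mid Y^{\tau_\delta})$ and the standard stopping-time data-processing inequality $I(S^k;Y^{\tau_\delta})\leq C\cdot\mathbb{E}[\tau_\delta]$, which holds for any stopping time adapted to the channel-output filtration over a DMC with feedback.

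The crux is to show that $H(S^k\mid Y^{\tau_\delta})\leq h(\delta)+(\delta+P_e/\delta)\log q^k$. To unlock the factor $P_e/\delta$, I would introduce the event $E\triangleq\{P_e(Y^{\tau_\delta})>\delta\}$. By the definition of $\tau_\delta$ in \eqref{def_taudelta}, $E$ forces $\tau_\delta=\eta_k$, whence $P_e(Y^{\tau_\delta})=P_e(Y^{\eta_k})$ on $E$; Markov's inequality then yields $\mathbb{P}[E]\leq \mathbb{P}[P_e(Y^{\eta_k})>\delta]\leq P_e/\delta$. Since $\mathbbm{1}_E$ is measurable with respect to $Y^{\tau_\delta}$, I can decompose
\begin{equation*}
H(S^k\mid Y^{\tau_\delta}) = \mathbb{P}[E^c]\,H(S^k\mid Y^{\tau_\delta},E^c)+\mathbb{P}[E]\,H(S^k\mid Y^{\tau_\delta},E).
\end{equation*}
On $E^c$, Fano applied to each realization $Y^{\tau_\delta}=y^{\tau_\delta}$ gives $H(S^k\mid Y^{\tau_\delta}=y^{\tau_\delta})\leq h(P_e(y^{\tau_\delta}))+P_e(y^{\tau_\delta})\log q^k$, and since $P_e(y^{\tau_\delta})\leq \delta\leq 1/2$ on $E^c$ and $h$ is increasing on $[0,1/2]$, this conditional entropy is bounded by $h(\delta)+\delta\log q^k$. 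On $E$, the trivial bound $H(S^k\mid Y^{\tau_\delta},E)\leq \log q^k$ suffices. Combining these pieces with $\mathbb{P}[E]\leq P_e/\delta$ delivers the claimed entropy bound.

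Substituting the entropy bound into $\mathbb{E}[\tau_\delta]\geq I(S^k;Y^{\tau_\delta})/C=(H(S^k)-H(S^k\mid Y^{\tau_\delta}))/C$ and rearranging then yields \eqref{Etau}. The only step that requires justification beyond routine manipulations is the stopping-time bound $I(S^k;Y^{\tau_\delta})\leq C\cdot\mathbb{E}[\tau_\delta]$; for a DMC with feedback this follows from the chain-rule identity together with the per-letter bound $I(S^k;Y_t\mid Y^{t-1})\leq I(X_t;Y_t\mid Y^{t-1})\leq C$ and the optional sampling theorem applied to the associated Doob martingale. This classical argument carries over verbatim from the equiprobable channel-coding setting to JSCC, since it depends only on the channel structure and the feedback constraint, not on the source distribution; this is the main place where the non-uniform prior $P_{S^k}$ must be handled with care, but it does not materially complicate the proof.
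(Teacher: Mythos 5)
Your proof is correct and follows essentially the same route as the paper's: the bound $I(S^k;Y^{\tau_\delta})\le C\,\mathbb E[\tau_\delta]$ is exactly what the paper obtains by applying Doob's optional stopping theorem to the submartingale $\mathcal H(S^k\mid Y^t)+tC$ (which encodes the per-letter bound $I(S^k;Y_t\mid Y^{t-1})\le C$), and your Fano bound on $H(S^k\mid Y^{\tau_\delta})$ via the event $E=\{P_e(Y^{\tau_\delta})>\delta\}$ and Markov's inequality reproduces the argument the paper imports from Berlin et al.'s Eq.~(14)--(16). The only difference is expository: you spell out the conditioning-on-$E$ decomposition and the Markov step that the paper cites, and you phrase the stopping-time inequality in mutual-information form rather than as a submartingale, but these are the same computation.
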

\begin{proof}
Appendix~\ref{pf_Berlin1}.
\end{proof}
The lower bound on $\mathbb E[\eta_k - \tau_{\delta}]$ is stated below.
\begin{lemma}[Modified Eq. (17) \cite{Berlin}]\label{Berlin2} Consider $k$ symbols $S^k\in [q]^k$ of a DS with source distribution $P_{S^k}$ and fix a non-degenerate DMC with transition probability $P_{Y|X}\colon \mathcal X\rightarrow\mathcal Y$ and maximum KL divergence $C_1$ \eqref{C1}. For any $\delta\in\left(0,\frac{1}{2}\right]$, it holds that
\begin{align}\label{Berlin2_eq}
    &\mathbb E[\eta_k - \tau_{\delta}]\geq\\ \nonumber
    &\frac{\log \frac{1}{P_e} - \log 4 + \log(\min\left\{p_{\min}\delta, 1-\max_{s\in[q]^k}P_{S^k}(s)\right\})}{C_1},
\end{align}
where $p_{\min}$ in \eqref{Berlin2_eq} is the minimum channel transition probability \eqref{def_lambda}.
\end{lemma}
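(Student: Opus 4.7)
The plan is to adapt Berlin et al.'s \cite{Berlin} binary-hypothesis-testing converse from the channel coding setting to the JSCC setting. The underlying idea is to view the random interval $[\tau_\delta+1, \eta_k]$ as a ``confirmation phase'' whose job is to drive the posterior error from a level controlled by $\delta$ down to the final level $P_e$, while each channel use can contribute at most $C_1$ nats of discriminating information.

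First, I would condition on $Y^{\tau_\delta}$ and let $\hat S \triangleq \arg\max_{s\in[q]^k} P_{S^k|Y^{\tau_\delta}}(s|Y^{\tau_\delta})$. Form the binary hypothesis test $H_0\colon S^k = \hat S$ versus $H_1\colon S^k \neq \hat S$ based on the sub-vector $Y_{\tau_\delta+1}^{\eta_k}$. By the definition of $\tau_\delta$, on the event $\{\tau_\delta < \eta_k\}$ we have $P_e(Y^{\tau_\delta}) \leq \delta$, while a one-step posterior-update argument shows $P_e(Y^{\tau_\delta}) \geq p_{\min}\delta$ up to boundary effects, capturing the single-letter overshoot across the stopping threshold at $\tau_\delta$. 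The MAP decision $\hat S^k_{\eta_k}$ induces a test for $(H_0, H_1)$ whose overall error is at most $P_e$.

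Second, I would invoke a binary hypothesis testing inequality -- e.g., a Fano-type or Bretagnolle--Huber-type bound, analogous to the log-likelihood-ratio martingale argument used by Burnashev \cite{Burnashev} -- to relate the test's error probabilities to the KL divergence between the conditional laws of $Y_{\tau_\delta+1}^{\eta_k}$ under $H_0$ and $H_1$. By the chain rule of KL divergence and the fact that $D(P_{Y|X=x} \| P_{Y|X=x'}) \leq C_1$ for every $x, x' \in \mathcal X$, this divergence is upper bounded by $C_1\, \mathbb E[\eta_k - \tau_\delta \mid Y^{\tau_\delta}]$, with Doob's optional stopping theorem handling the random stopping time. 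Taking expectations and rearranging yields an inequality of the form $\mathbb E[\eta_k - \tau_\delta] \geq \bigl(\log(1/P_e) - \log 4 + \log(\text{prior of } H_1)\bigr)/C_1$.

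Third, the JSCC-specific term $\min\{p_{\min}\delta, 1-\max_s P_{S^k}(s)\}$ enters as follows. The prior on $H_1$ admits two bounds: (i) $\mathbb P[H_1 \mid Y^{\tau_\delta}] \leq \delta$, refined to $p_{\min}\delta$ after accounting for the single-letter overshoot at the stopping boundary; and (ii) $\mathbb P[H_1] \leq 1-\max_s P_{S^k}(s)$, obtained by using the best a priori guess, which applies even when $\tau_\delta = 0$ (i.e., when $\delta$ already exceeds the prior error of pure guessing). Taking the minimum of these two bounds yields the claimed expression. The main obstacle I anticipate is making the $p_{\min}$ overshoot correction rigorous and combining it cleanly with the unconditional source-prior bound, while also ensuring the bound remains valid on the event $\{\tau_\delta = \eta_k\}$ where the confirmation phase is empty --- the $\log 4$ slack in the numerator serves to absorb such boundary cases by making the right-hand side non-positive when the bound would otherwise be vacuous.
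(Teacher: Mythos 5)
Your overall strategy matches the paper's: condition on $Y^{\tau_\delta}$, set up a binary hypothesis test, invoke Berlin et al.'s hypothesis-testing bound $P_b\ge\tfrac{\min\{p_{H_0},p_{H_1}\}}{4}e^{-C_1\mathbb E[T]}$ (Lemma~\ref{lemma_binary} in the paper), and lower-bound $\min\{p_{H_0},p_{H_1}\}$ to account for the non-uniform source prior. But there is a genuine gap in your choice of partition, plus two smaller inaccuracies.

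\textbf{Main gap: the always-singleton $\mathcal G=\{\hat S\}$ is insufficient.} To pull the prior factor out through the expectation over $Y^{\tau_\delta}$ and then apply Jensen's inequality, you need a lower bound on $\min\{p_{H_0},p_{H_1}\}$ that holds \emph{pointwise for every realization $y^{\tau_\delta}$}. With $\mathcal G=\{\hat S\}$, $p_{H_0}=\max_s P_{S^k|Y^{\tau_\delta}}(s|y^{\tau_\delta})$, and this can be far below $p_{\min}\delta$ whenever the posterior is spread out. Concretely: if $\tau_\delta=0$ with a near-uniform source ($\max_s P_{S^k}(s)\approx q^{-k}$, event $A_2$ in the paper), or if $\tau_\delta=\eta_k$ was reached without the posterior ever concentrating above $1-\delta$, then $p_{H_0}\approx q^{-k}\ll p_{\min}\delta$ and the bound collapses. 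Berlin et al. (and the paper, in the case analysis over $A_1,A_2,A_3$) deal with this by choosing $\mathcal G(Y^{\tau_\delta})$ to be a singleton \emph{only} when the posterior is concentrated; when the posterior is spread out, $[q]^k$ is split into two roughly balanced groups so that both $p_{H_0}$ and $p_{H_1}$ exceed $p_{\min}\delta$. The JSCC-specific modification is that when $\tau_\delta=0$ and $\max_s P_{S^k}(s)>1/2$ (event $A_3$, which cannot occur with a uniform channel-coding message), the singleton choice is the right one and yields the $1-\max_s P_{S^k}(s)$ term. Your proposal only covers the concentrated cases, so it does not establish the required pointwise bound.

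\textbf{Smaller points.} (i) In your third paragraph the direction of the inequalities is reversed: to apply the hypothesis-testing lower bound you need \emph{lower} bounds $\min\{p_{H_0},p_{H_1}\}\geq p_{\min}\delta$ (or $\geq 1-\max_s P_{S^k}(s)$ in event $A_3$), not upper bounds on $\mathbb P[H_1]$. (ii) The $\log 4$ term is not slack introduced to absorb the boundary event $\{\tau_\delta=\eta_k\}$; it is simply the constant $1/4$ appearing in Berlin et al.'s Lemma~1 (the paper's Lemma~\ref{lemma_binary}), which also handles $\mathbb E[T]=0$ automatically since in that case $P_b=\min\{p_{H_0},p_{H_1}\}\ge\min\{p_{H_0},p_{H_1}\}/4$.
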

\begin{proof}
Appendix~\ref{pf_Berlin2}.
\end{proof}
Summing up the right sides of \eqref{Etau} and \eqref{Berlin2_eq}, we obtain the following lower bound on the expected decoding time $\eta_k$ of an arbitrary code with block encoding:
\begin{align}\nonumber
    &\mathbb E[\eta_k] \geq \frac{H(S^k)}{C}\left(1 - \left(\delta+\frac{P_e}{\delta}\right)\frac{\log q^k}{H(S^k)}\right) + \frac{\log\frac{1}{P_e}}{C_1} + \\\label{lb_etak}
    &\frac{- \log 4 + \log(\min\left\{p_{\min}\delta, 1-\max_{s\in[q]^k}P_{S^k}(s)\right\})}{C_1}-  \frac{h(\delta)}{C}.
\end{align}
The asymptotic performance of the lower bound \eqref{lb_etak} relies on two properties of the DS in Lemma~\ref{lemma_asyp}, stated next.
\begin{lemma}\label{lemma_asyp}
Consider a DS with a well-defined and positive entropy rate $H$ \eqref{entropy_rate} and a finite single-letter alphabet $[q]$. Then,
\begin{align}\label{infty_entropy}
&\lim_{k\rightarrow\infty}\frac{\log q^k}{H(S^k)} = \frac{\log q}{H}<\infty,\\\label{lim_Psk}
    &\liminf_{k\rightarrow\infty} \left(1-\max_{s\in[q]^k}P_{S^k}(s)\right) >0.
\end{align}
\end{lemma}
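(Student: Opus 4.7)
The plan is to treat the two assertions separately, since the first is a direct consequence of the definition of the entropy rate while the second rests on a Fano-type inequality.

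For the first assertion, I would simply rewrite the ratio so that the denominator is the per-symbol entropy, namely
\begin{align}
\frac{\log q^k}{H(S^k)} = \frac{\log q}{H(S^k)/k}.
\end{align}
By the standing assumption on the DSS, the limit $H = \lim_{n\to\infty} H(S^n)/n$ exists and satisfies $H \in (0,\infty)$ (finiteness is automatic since $H(S^n) \leq n\log q$ for a finite alphabet). Hence the denominator converges to $H$ and the ratio tends to $\log q / H$, which is finite because $H > 0$. This gives \eqref{infty_entropy}.

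For the second assertion, write $p_{\max}^{(k)} \triangleq \max_{s\in [q]^k} P_{S^k}(s)$ and consider the deterministic ``best guess'' $\hat{s}_k \triangleq \arg\max_{s} P_{S^k}(s)$. Fano's inequality applied to the predictor $\hat{s}_k$ of $S^k$ yields
\begin{align}
H(S^k) \leq h_2\!\left(1-p_{\max}^{(k)}\right) + \left(1-p_{\max}^{(k)}\right)\log(q^k-1),
\end{align}
where $h_2(\cdot)$ is the binary entropy function. Dividing by $k$ and taking $\liminf$, the $h_2$ term disappears because $h_2\leq \log 2$, and the left side converges to $H$ by the first assertion. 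This yields
\begin{align}
H \leq (\log q)\,\liminf_{k\to\infty}\left(1-p_{\max}^{(k)}\right),
\end{align}
so $\liminf_{k\to\infty}(1-p_{\max}^{(k)}) \geq H/\log q > 0$, which is \eqref{lim_Psk}.

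No serious obstacle is anticipated; the only subtle point is recognizing that the correct tool for the second part is Fano's inequality with the maximum-probability sequence playing the role of the estimate, so that the error event $\{S^k\neq \hat{s}_k\}$ has probability exactly $1-p_{\max}^{(k)}$. Once this is in place, the finiteness of the alphabet keeps the $\log(q^k-1) \leq k\log q$ term under control at the right asymptotic scale, and the positivity of the entropy rate $H$ provides the necessary strict positive lower bound.
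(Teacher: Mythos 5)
Your proof is correct and follows essentially the same route as the paper's: the paper bounds $H(S^k)$ via the observation that, for fixed $\max_s P_{S^k}(s)$, entropy is maximized by spreading the remaining mass uniformly over the other $q^k-1$ sequences, while you invoke Fano's inequality with the maximum-probability sequence as a constant estimate --- but these are the same inequality (Fano's inequality is proved by exactly that max-entropy argument), and the remaining asymptotic steps (the binary-entropy term vanishes after dividing by $k$, and $H>0$ gives the strict positive lower bound) are identical.
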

\begin{proof}
The proof of \eqref{lim_Psk} is in Appendix~\ref{pf_DS_prop}.
\end{proof}
Plugging \eqref{infty_entropy}--\eqref{lim_Psk} and $\delta = -\frac{1}{\log P_e}$ into the right side of \eqref{lb_etak}, we obtain
\begin{align}\label{lb_etak_final}
  \mathbb E[\eta_k] \geq \left( \frac{H(S^k)}{C}+  \frac{\log\frac{1}{P_e}}{C_1}\right)(1-o(1)),
\end{align}
where $o(1)$ in \eqref{lb_etak_final} is a positive term that converges to $0$ as both $P_e\rightarrow 0$ and $k\rightarrow\infty$. Rearranging terms of \eqref{lb_etak_final}, we conclude that $E(R)$ is upper bounded by the right side of \eqref{converse2}.
 Similar to \cite{Burnashev}, \cite[Eq. (5)]{Berlin}, \cite[Proposition~9]{Truong}, here we need not consider the case where $P_e$ does not converge to zero since this means $E(R)=0$.
 
\subsection{Proof of Lemma~\ref{Berlin1}}\label{pf_Berlin1}
We follow Berlin et al.'s notations \cite{Berlin}: we denote by $\mathcal H(S^k|Y^t)$ a random variable that satisfies $\mathcal H(S^k|y^t) = H(S^k|Y^t=y^t)$. Note that $\mathbb E[\mathcal H(S^k|Y^t)] = H(S^k|Y^t)$. If any step below has already been proved in \cite{Berlin}, we avoid repeated reasoning by referring to the proof in \cite{Berlin}. Compared to Berlin et al.'s proof in \cite[Sec. IV]{Berlin}, the proof below does not assume that the source is equiprobably distributed -- it keeps the generic form of the source prior $P_{S^k}$.

The sequence $\{\mathcal H(S^k|Y^t)+tC\}_{t=0,1,\dots}$ is a submartingale (\cite[Lemma 2]{Berlin}) with respect to the filtration generated by the channel outputs. Using Doob's optional stopping theorem \cite{Williams}, the initial state of the submartingale $\{\mathcal H(S^k|Y^t)+tC\}_{t=0,1,\dots}$ is upper bounded as
\begin{align}\label{HSK}
    H(S^k)  &\leq H(S^k|Y^{\tau_{\delta}}) + \mathbb E[\tau_{\delta}]C.
\end{align}
For $\delta\in\left(0,\frac{1}{2}\right]$, the conditional entropy on the right side of \eqref{HSK} is upper bounded as
\begin{align}\label{EH}
      H(S^k|Y^{\tau_{\delta}})\leq  h(\delta) + \left(\delta+\frac{P_e}{\delta}\right)\log q^k
\end{align}
using Fano's inequality (in the same manner as in \cite[Eq. (14)--(16)]{Berlin}). Plugging \eqref{EH} to the right side of \eqref{HSK} and rearranging terms, we obtain \eqref{Etau}.

\subsection{Proof of Lemma~\ref{Berlin2}}\label{pf_Berlin2}
We obtain the lower bound on $\mathbb E[\eta_k - \tau_{\delta}]$ in Lemma~\ref{Berlin2} by constructing a binary hypothesis test performed over a non-degenerate DMC with feedback.
We first state a lower bound on the error probability of such a test. Consider a binary hypothesis test $(H_0, H_1)$ performed over a DMC with feedback via a variable-length code with block encoding. The encoder sends a sequence of symbols $X_1,X_2,\dots$, over the given DMC with feedback, such that at the stopping time $T$, if $H_0$ is true, then the channel output vector $Y^T$ is distributed according to $Q_{H_0}$, otherwise, the channel output vector $Y^T$ is distributed according to $Q_{H_1}$. At the stopping time, the decoder uses the decoding function $\hat W\colon \mathcal Y^T\rightarrow \{H_0, H_1\}$ to form a decoded hypothesis. We denote the set of channel outputs $y^T$ that leads to decoded hypothesis $H_i$, $i\in\{0,1\}$ by
\begin{align}
    \mathcal Y_{H_i} \triangleq \{y^T\in\mathcal Y^T\colon \hat W(y^T) = H_i\}, i\in\{0,1\}.
\end{align}
We denote by $p_{H_i}$, $i\in\{0,1\}$, the prior probability of hypothesis $H_i$, $i\in\{0,1\}$ before the transmission. We denote the error probability of the binary hypothesis test at the stopping time $T$ by
\begin{align}
    P_b \triangleq p_{H_0}Q_{H_0}(\mathcal Y_{H_1}) + p_{H_1}Q_{H_1}(\mathcal Y_{H_0}).
\end{align}
\begin{lemma}[Lemma 1 in \cite{Berlin}]\label{lemma_binary}
Consider a binary hypothesis test with hypotheses $H_0$ and $H_1$ performed over a non-degenerate DMC with feedback that has maximum KL divergence $C_1$ \eqref{C1} via a variable-length code with block encoding. The error probability of the binary hypothesis test $P_b$ at stopping time $T$ is lower bounded as
\begin{align}\label{Lemma1_Berlin}
    P_b \geq \frac{\min\{p_{H_0}, p_{H_1}\}}{4}e^{-C_1 \mathbb E[T]},
\end{align}
where $p_{H_0}$ and $p_{H_1}$ are the prior probabilities of the hypotheses\footnote{Notice that in the proof of \cite[Lemma~1]{Berlin}, Berlin et al. invoked \cite[Proposition~1]{Berlin}, which relies on $\mathbb E[T]<\infty$. Yet, \eqref{Lemma1_Berlin} also trivially holds for $\mathbb E[T]=\infty$.}.
\end{lemma}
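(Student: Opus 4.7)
The plan is to prove Lemma~\ref{lemma_binary} by combining two standard ingredients: (i) a per-symbol accounting that upper bounds $D(Q_{H_0}\|Q_{H_1})$, and symmetrically $D(Q_{H_1}\|Q_{H_0})$, by $C_1$ times the expected stopping time under the corresponding hypothesis, and (ii) a Le Cam--type inequality that converts a KL-divergence bound into a lower bound on the Bayes error of a binary hypothesis test. The product of these two delivers the claimed exponential lower bound, with the factor $\min\{p_{H_0}, p_{H_1}\}$ arising naturally from the Bayes decomposition of $P_b$.

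For step (i), I would work with the log-likelihood ratio $L_t \triangleq \log\frac{Q_{H_0}(Y^t)}{Q_{H_1}(Y^t)}$. Because the encoder has noiseless feedback, under each hypothesis $H_i$ the channel input at time $t$ is a deterministic function $X_t^{(i)}(Y^{t-1})$ of the past channel outputs, so conditioned on $Y^{t-1}$ under $H_0$ the one-step increment $L_t - L_{t-1}$ has conditional mean $D(P_{Y|X=X_t^{(0)}(Y^{t-1})}\,\|\,P_{Y|X=X_t^{(1)}(Y^{t-1})}) \leq C_1$ by \eqref{C1}. Applying Doob's optional sampling theorem to the supermartingale $\{L_t - C_1 t\}_{t\geq 0}$ at the stopping time $T$---integrability follows from $\mathbb{E}[T]<\infty$ together with the non-degeneracy assumption \eqref{non-degenerate}, which bounds every per-step log-likelihood by $\log(p_{\max}/p_{\min})$---yields $D(Q_{H_0}\|Q_{H_1}) = \mathbb{E}_{H_0}[L_T] \leq C_1\mathbb{E}_{H_0}[T]$. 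Swapping the roles of $H_0$ and $H_1$ gives $D(Q_{H_1}\|Q_{H_0}) \leq C_1\mathbb{E}_{H_1}[T]$.

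For step (ii), the optimality of the MAP decision rule yields
\begin{align*}
P_b \geq \int \min\{p_{H_0}\,dQ_{H_0},\, p_{H_1}\,dQ_{H_1}\} \geq \min\{p_{H_0}, p_{H_1}\}\bigl(1 - \|Q_{H_0}-Q_{H_1}\|_{\mathrm{TV}}\bigr).
\end{align*}
Bretagnolle--Huber's inequality $\|P-Q\|_{\mathrm{TV}} \leq \sqrt{1-e^{-D(P\|Q)}}$ combined with the elementary bound $1-\sqrt{1-x}\geq x/2$ yields $1 - \|Q_{H_0}-Q_{H_1}\|_{\mathrm{TV}} \geq \tfrac{1}{2}\exp(-\min\{D(Q_{H_0}\|Q_{H_1}),\,D(Q_{H_1}\|Q_{H_0})\})$. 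Combining this with the two KL bounds from step (i) and the elementary observation $\min\{\mathbb{E}_{H_0}[T], \mathbb{E}_{H_1}[T]\} \leq p_{H_0}\mathbb{E}_{H_0}[T] + p_{H_1}\mathbb{E}_{H_1}[T] = \mathbb{E}[T]$ produces $P_b \geq \tfrac{1}{2}\min\{p_{H_0}, p_{H_1}\}\,e^{-C_1\mathbb{E}[T]}$, which is at least as strong as the lemma's claim. The main obstacle I anticipate is the rigorous justification of the optional sampling step under feedback: it requires bounded per-step log-likelihood increments, guaranteed by non-degeneracy \eqref{non-degenerate}, and a finite $\mathbb{E}[T]$, the latter being necessary anyway since the bound is vacuous otherwise, as noted in the footnote of Lemma~\ref{lemma_binary}.
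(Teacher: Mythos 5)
Your proof is correct, and it actually yields the sharper constant $\tfrac{1}{2}$ in place of the stated $\tfrac{1}{4}$. Note that the paper does not prove Lemma~\ref{lemma_binary} at all---it cites it from Berlin et al.~\cite{Berlin}---so the relevant comparison is with their proof. Step~(i), establishing $D(Q_{H_0}\|Q_{H_1}) \leq C_1\,\mathbb{E}_{H_0}[T]$ (and its mirror) via the bounded-increment supermartingale $\{L_t - C_1 t\}$ and optional stopping, is the same drift argument that underlies Berlin et al.'s Proposition~1, itself inherited from Burnashev. Step~(ii) is where you genuinely depart from them: Berlin et al. pass from the KL bound to an error bound via a change-of-measure plus Markov argument (truncating to the event that the stopped log-likelihood ratio is not too large, which is where the extra factor of $\tfrac{1}{2}$ in the $\tfrac{1}{4}$ is lost), whereas you use the Le Cam decomposition $\int\min\{p_{H_0}\,dQ_{H_0},p_{H_1}\,dQ_{H_1}\}\geq \min\{p_{H_0},p_{H_1}\}\bigl(1-\|Q_{H_0}-Q_{H_1}\|_{\mathrm{TV}}\bigr)$ followed by the Bretagnolle--Huber inequality and $1-\sqrt{1-x}\geq x/2$. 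Your route is shorter and gives the better constant; the price is relying on Bretagnolle--Huber, which is less elementary than Markov but entirely standard. Two small refinements you should make explicit: (1)~Definition~\ref{def2} allows randomized encoders, so the one-step increment of $L_t$ under $H_0$ has conditional mean $D\bigl(\sum_x p_0(x)P_{Y|X=x}\,\big\|\,\sum_x p_1(x)P_{Y|X=x}\bigr)$, not the KL between two single-letter transition rows; this is still $\leq C_1$ by two applications of convexity of relative entropy, but the phrase ``deterministic function $X_t^{(i)}(Y^{t-1})$'' should be replaced by this slightly more general argument. (2)~Optional stopping requires $\mathbb{E}_{H_0}[T]<\infty$ and $\mathbb{E}_{H_1}[T]<\infty$ individually; these follow from $\mathbb{E}[T]<\infty$ only when both priors are strictly positive, but if either prior is zero the claimed bound is vacuous, so there is no actual gap---it is just worth stating.
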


We employ the same hypothesis test as that in \cite[Section V]{Berlin}. Compared to Berlin et al.'s proof \cite[Sec. V]{Berlin}, the proof below lower bounds the priors of the hypotheses differently since we consider a generic source distribution whereas Berlin at al.'s \cite{Berlin} considered equiprobable source symbols.

The binary hypothesis test (cf. \cite[Section V]{Berlin}) starts at time $\tau_{\delta}+1$ and operates as follows. Given any $Y^{\tau_{\delta}}$, we partition the alphabet $[q]^k$ into two sets $\mathcal G(Y^{\tau_{\delta}})$ and $[q]^k\setminus\mathcal G(Y^{\tau_{\delta}})$ (we will specify $\mathcal G$ in the sequel). The two hypotheses are $H_0\colon S^k\in \mathcal G(Y^{\tau_{\delta}})$ and $H_1\colon S^k\in [q]^k\setminus\mathcal G(Y^{\tau_{\delta}})$. At the stopping time $\eta_k$, the MAP decoder outputs the estimate of the source $\hat S^k_{\eta_k}$ using the channel outputs $Y^{\eta_k}$. If the estimate satisfies $\hat S^k_{\eta_k} \in \mathcal G(Y^{\tau_{\delta}})$, then we declare $H_0$, otherwise, we declare $H_1$. The error probability of decoding $S^k$ is lower bounded by the error probability of the binary hypothesis test (\cite[the second paragraph below Prop. 2]{Berlin}), i.e., given any $t\geq 0$, $y^t\in\mathcal Y^t$,
\begin{equation}\label{two_error_prob}
\begin{aligned}
   & \mathbb P[\hat S^k_{\eta_k}\neq S^k|Y^{\tau_{\delta}}=y^t] \\
    \geq~&
    \mathbb P[\hat S^k_{\eta_k} \notin \mathcal G(Y^{\tau_{\delta}}), S^k \in \mathcal G(Y^{\tau_{\delta}})|Y^{\tau_{\delta}}=y^t] \\
    +~& \mathbb P[\hat S^k_{\eta_k} \in \mathcal G(Y^{\tau_{\delta}}), S^k \notin \mathcal G(Y^{\tau_{\delta}})|Y^{\tau_{\delta}}=y^t].
\end{aligned}
\end{equation}
We invoke Lemma~\ref{lemma_binary} with $p_{H_0}\leftarrow \mathbb P[H_0|Y^{\tau_{\delta}}=y^t]$, $p_{H_1}\leftarrow \mathbb P[H_1|Y^{\tau_{\delta}}=y^t]$, $\mathbb E[T] \leftarrow \mathbb E[\eta_k - \tau_{\delta}|Y^{\tau_{\delta}}= y^t]$ to further lower bound the left side of \eqref{two_error_prob} and obtain
\begin{align}\nonumber
    &\mathbb P[\hat S^k_{\eta_k}\neq S^k|Y^{\tau_{\delta}}=y^t]\geq\\\label{Eetatau}& \frac{\min\{\mathbb P[H_0|Y^{\tau_{\delta}}=y^t], \mathbb P[H_1|Y^{\tau_{\delta}}=y^t]\}}{4}e^{-C_1\mathbb E[\eta_k - \tau_{\delta}|Y^{\tau_{\delta}}= y^t]}.
\end{align}
To lower bound the minimization function on the right side of \eqref{Eetatau}, we show that alphabet $[q]^k$ can always be partitioned into two groups $\mathcal G(Y^{\tau_{\delta}})$ and $[q^k]\setminus \mathcal G(Y^{\tau_{\delta}})$ such that for all $t\geq 0$, $y^t\in\mathcal Y^t$, the priors of the hypotheses are lower bounded as
\begin{subequations}\label{group_posterior_binary}
\begin{align}
    \mathbb P[H_0|Y^{\tau_{\delta}}=y^t]&\geq \min\left\{p_{\min}\delta,1-\max_{s\in[q]^k}P_{S^k}(s)\right\},\\
    \mathbb P[H_1|Y^{\tau_{\delta}}=y^t]&\geq \min\left\{p_{\min}\delta,1-\max_{s\in[q]^k}P_{S^k}(s)\right\},
\end{align}
\end{subequations}
where $p_{\min}$ is defined in \eqref{def_lambda}. The priors of the hypotheses are both lower bounded by $p_{\min}\delta$ for any $\delta\in\left(0,\frac{1}{2}\right]$ if either event $A_1\triangleq \{\tau_{\delta}\geq 1\}$ or event $A_2\triangleq \{\tau_{\delta}=0, \max_{s\in[q]^k}P_{S^k}(s)\leq 0.5\}$ occurs, see \cite[Section V]{Berlin}. The threshold $0.5$ defining event $A_2$ corresponds to Berlin et al.'s reasoning in \cite[the second case in the third paragraph after Prop. 2]{Berlin}, which says at time $\tau_{\delta}$, if the posteriors\footnote{The source posterior at time $0$ is equal to the source prior $P_{S^k}$.}  of all source sequences in $[q]^k$ are upper bounded by $1-\delta\in[0.5,1]$, then $[q]^k$ can be divided into two groups with both hypotheses priors lower bounded by $p_{\min}\delta$. In Berlin et al.'s \cite{Berlin} channel coding context where the source symbols are equiprobably distributed, the union of the events $A_1\cup A_2$ occurs almost surely, since $P_{S^k}(s) = \frac{1}{q^k}$. Yet, in the JSCC context, it is possible that event $A_3\triangleq\{\tau_{\delta}=0,\max_{s\in[q]^k}P_{S^k}(s)>0.5\}$ occurs. When $A_3$ occurs, we move the sequence $s\in[q]^k$ that attains the maximum in event $A_3$ to $\mathcal G(Y^{\tau_{\delta}})$, and move the remaining sequences to the other group. This group partitioning rule implies \eqref{group_posterior_binary}. Plugging \eqref{group_posterior_binary} into \eqref{Eetatau}, taking an expectation of \eqref{Eetatau} over $Y^{\tau_{\delta}}$, and applying Jensen's inequality to $e^{-x}$ on the right side of \eqref{Eetatau}, we obtain
\begin{align}\label{Pe_lb}
    P_e \geq \frac{\min\{p_{\min}\delta,1-\max_{s\in[q]^k}P_{S^k}(s)\}}{4}e^{-C_1\mathbb E[\eta_k-\tau_{\delta}]}.
\end{align}
Rearranging terms in \eqref{Pe_lb}, we obtain \eqref{Berlin2_eq}.

\subsection{Proof of Lemma~\ref{lemma_asyp}}\label{pf_DS_prop}
We show that \eqref{lim_Psk} holds. We upper bound the entropy rate as
\begin{align}\nonumber
&\lim_{k\rightarrow\infty} \frac{H(S^k)}{k}\\\label{pf_DS3}
    & \leq \liminf_{k\rightarrow \infty} \frac{h\left(\max_{s\in[q]^k}P_{S^k}(s)\right)}{k} + \left(1-\max_{s\in[q]^k}P_{S^k}(s)\right)\log q\\\label{pf_DS4}
    & = \liminf_{k\rightarrow\infty} \left(1-\max_{s\in[q]^k}P_{S^k}(s)\right) \log q,
\end{align}
where \eqref{pf_DS3} holds since fixing the probability of the source sequence that attains $\max_{s\in[q]^k}P_{S^k}(s)$, the equiprobable distribution on the rest of $q^k-1$ sequences maximizes the concave entropy function; \eqref{pf_DS4} holds since the binary entropy function in \eqref{pf_DS3} is bounded between $[0,1]$. Finally, \eqref{lim_Psk} holds since the entropy rate is positive by assumption.

\section{Achievability proof of Theorem~\ref{thm_2}}\label{pf_achieve}
In the achievability proof, we fix a sequence of codes with instantaneous encoding for transmitting the first $k$ symbols of a DSS, $k=1,2,\dots$, over a non-degenerate DMC with feedback, evaluate the asymptotic behavior of the code sequence as $k\rightarrow\infty$, and conclude the achievability of $E(R)$ \eqref{converse2}.
In Appendix~\ref{pf_achieve_A}, we particularize the DSS in Theorem~\ref{thm_2} to the DS \eqref{DS}, and we show that both the MaxEJS code and the SED code \cite{Naghshvar2} achieve $E(R)$ \eqref{converse2}. In Appendix~\ref{pf_achieve_B}, we consider a DSS with $f=\infty$, and we show that $E(R)$ is achievable by a buffer-then-transmit code that idles the transmissions and only buffers the arriving symbols during the symbol arriving period and implements a JSCC reliability function-achieving code with block encoding after the symbol arriving period. In Appendix~\ref{pf_achieve_C}, we consider a DSS with $f<\infty$ that satisfies $(\mathrm{a})$--$(\mathrm{b})$, and we show that $E(R)$ \eqref{converse2} is achievable by a code with instantaneous encoding that implements the instantaneous encoding phase in Section~\ref{belief_phase} during the symbol arriving period and a JSCC reliability function-achieving code with block encoding after the symbol arriving period.

\subsection{A (fully accessible) DS}\label{pf_achieve_A}
We show that both the MaxEJS code for all non-degenerate DMCs \cite[Sec.~IV-C]{Naghshvar2} and the SED code for non-degenerate symmetric binary-input DMCs \cite[Sec.~V-B]{Naghshvar2} achieve $E(R)$ \eqref{converse2} for a DS. We denote a deterministic encoding function at time $t$ by
\begin{align}\label{def_gamma}
    \gamma_t\colon [q]^k\rightarrow \mathcal X,
\end{align}
we denote the vector of the message posteriors at time $t$ by 
\begin{align}
    \boldsymbol{\rho}(Y^{t}) \triangleq [P_{S^k|Y^t}(1|Y^t),P_{S^k|Y^t}(2|Y^t),\dots,P_{S^k|Y^t}(q^k|Y^t)],
\end{align}
and we denote the extrinsic Jensen-Shannon (EJS) divergence \cite{Naghshvar2} at time $t$ by
\begin{align}\nonumber
   &\mathrm{EJS}(\boldsymbol{\rho}(Y^{t-1}), \gamma_{t}) \triangleq \sum_{i=1}^{q^k}P_{S^k|Y^{t-1}}(i|Y^{t-1})\\ \label{EJS} &D\left(P_{Y|X = \gamma_t(i)}\middle|\middle| \sum_{j\neq i} \frac{P_{S^k|Y^{t-1}}(j|Y^{t-1})}{1-P_{S^k|Y^{t-1}}(i|Y^{t-1})}P_{Y|X=\gamma_t(j)}\right).
\end{align}
The MaxEJS code \cite[Section IV.C]{Naghshvar2} sets its encoding function $\gamma_t^*$ at time $t$ by solving the maximization problem:
\begin{align}
    \gamma_t^* \triangleq \arg\max_{\gamma_t\in\mathcal E} \mathrm{EJS}(\boldsymbol{\rho}(Y^{t-1}),\gamma_t),
\end{align}
where $\mathcal E$ is the set of all possible deterministic functions $\gamma_t$ \eqref{def_gamma}.
The SED code \cite{Naghshvar2} corresponds to the instantaneous SED code in Section~\ref{Sec_SED_ER} for a fully accessible source.

Lemma~\ref{lemma_s1}, stated next, will be used to examine whether a code with block encoding achieves the JSCC reliability function for a fully accessible source.
\begin{lemma}\label{lemma_s1}
Consider $k$ symbols $S^k\in[q]^k$ of a DS with prior probability $P_{S^k}$ and fix a non-degenerate DMC with capacity $C$ \eqref{capacity} and the maximum KL divergence $C_1$ \eqref{C1}. A code with block encoding achieves the JSCC reliability function \eqref{converse2} for the fully accessible source if and only if its stopping time $\eta_k$ and its error probability $\epsilon$ \eqref{error_constraint} at the stopping time $\eta_k$ satisfy
\begin{align}\label{lambda_k_ub0}
    \mathbb E[\eta_k] \leq \left(\frac{H(P_{S^k})}{C} + \frac{\log\frac{1}{\epsilon}}{C_1}\right)(1+o(1)),
\end{align}
where $o(1)\rightarrow 0$ as $k\rightarrow\infty$.
\end{lemma}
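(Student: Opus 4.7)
The plan is to prove Lemma~\ref{lemma_s1} as a short algebraic equivalence between the reliability-function formulation \eqref{reliabilityfunc2} and the expected-decoding-time formulation \eqref{lambda_k_ub0}, using the converse bound \eqref{lb_etak_final} from Appendix~\ref{pf_converse} as the matching lower bound on $\mathbb E[\eta_k]$. Both formulations encode the same asymptotic tradeoff among $\mathbb E[\eta_k]$, $H(S^k)$, and $\log(1/\epsilon)$, so the work is essentially unwrapping definitions and performing the rearrangement in two directions.

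For the sufficiency (``if'') direction, I would start from \eqref{lambda_k_ub0}, substitute $\mathbb E[\eta_k] = k/R$ from the rate constraint \eqref{time_constraint} met with equality, invoke $H(S^k)/k \to H$ from \eqref{entropy_rate}, and rearrange to
\begin{align}
\frac{R}{k}\log\frac{1}{\epsilon} \;\geq\; C_1\Bigl(1 - \frac{H}{C}R\Bigr)(1 - o(1)).
\end{align}
Passing to the limit gives $\liminf_k \frac{R}{k}\log\frac{1}{\epsilon^*(k,R)} \geq C_1(1 - HR/C)$, and the matching upper bound $\limsup_k \frac{R}{k}\log\frac{1}{\epsilon^*(k,R)} \leq C_1(1 - HR/C)$ is exactly what the converse argument in Appendix~\ref{pf_converse} establishes. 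Together they yield $E(R) = C_1(1 - HR/C)$. For the necessity (``only if'') direction, I would unwrap the assumption that the sequence of codes achieves $E(R) = C_1(1 - HR/C)$ via definition \eqref{reliabilityfunc2} into the statement $\log(1/\epsilon) \sim (k/R) C_1 (1 - HR/C)$, and then substitute back $\mathbb E[\eta_k] = k/R$ and $H(S^k) \sim H k$ to recover \eqref{lambda_k_ub0}.

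The result is essentially a reformulation of the converse bound and presents no deep obstacle. The only care required is bookkeeping of the $o(1)$ terms uniformly along any sequence of codes with fixed rate $R$ and vanishing error probability: the converse $o(1)$ in \eqref{lb_etak_final} arose from the specific choice $\delta = -1/\log P_e$ together with the Fano-type correction $h(\delta)/C$, so I need to verify that these decay to zero in the joint limit $k\to\infty$, $\epsilon\to 0$ so that the $(1-o(1))$ converse factor and the hypothesized $(1+o(1))$ upper-bound factor can be composed cleanly, yielding the claimed asymptotic equality.
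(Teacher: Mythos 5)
Your proposal is correct and takes essentially the same approach as the paper's proof, which is a short algebraic equivalence between the expected-stopping-time bound \eqref{lambda_k_ub0} and the reliability-function formulation \eqref{reliabilityfunc2}. You are slightly more explicit about invoking the converse from Appendix~\ref{pf_converse} to supply the matching $\limsup$, whereas the paper phrases the only-if direction as a contrapositive, but the substance --- substitute, divide by $k$, pass to the limit, match the $(1\pm o(1))$ factors --- is identical.
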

\begin{proof}
If a code with block encoding satisfies \eqref{lambda_k_ub0}, then it achieves $E(R)$ \eqref{converse2} because plugging \eqref{lambda_k_ub0} into \eqref{reliabilityfunc2} gives \eqref{converse2}. Conversely, if a code with block encoding achieves $E(R)$ \eqref{converse2}, then $\mathbb E[\eta_k]$ is upper bounded by the right side of \eqref{lambda_k_ub0}. This is because any achievability bound on $\mathbb E[\eta_k]$ that is asymptotically larger than the right side of \eqref{lambda_k_ub0} cannot achieve \eqref{converse2}.
\end{proof}

We show that the MaxEJS code and the SED code both satisfy \eqref{lambda_k_ub0}. While \cite[Eq. (32)]{Naghshvar2} in \cite[Theorem 1]{Naghshvar2} is obtained by plugging a uniform prior of the message to the entropy function in \cite[Appendix II, Eq. (71)]{Naghshvar2}, we leave the prior in its generic form and obtain a modified version of \cite[Theorem 1]{Naghshvar2} as follows.

\begin{lemma}[Modified Theorem 1 in \cite{Naghshvar2}]\label{lemma_EJS} Fix a non-degenerate DMC with capacity $C$ \eqref{capacity} and maximum KL divergence $C_1$ \eqref{C1}, and consider $k$ symbols $S^k\in[q]^k$ of a DS with source distribution $P_{S^k}$. If the encoding functions $\gamma_t$, $t=1,\dots,\eta_k$ of a code with block encoding with the MAP decoder \eqref{hatSNT} and the $\epsilon$-stopping rule \eqref{eta_k_sub} satisfy
\begin{align}\label{EJS1}
     &\mathrm{EJS}(\boldsymbol{\rho}(Y^{t-1}), \gamma_{t}) \geq C,\\ \nonumber
     &\mathrm{EJS}(\boldsymbol{\rho}(Y^{t-1}), \gamma_{t}) \geq \left(1 - \frac{1}{1+\max\{\log q^k, \log\frac{1}{ \epsilon}\}}\right) C_1,\\ \label{EJS2}&\text{if}~\max_{i\in[q]^k}\rho_i(Y^{t-1})\geq 1 - \frac{1}{1+\max\{\log q^k, \log\frac{1}{ \epsilon}\}},
\end{align}
then the expected decoding time of the code with block encoding is upper bounded as
\begin{align}\label{eta_k_ub}
    \mathbb E[\eta_k] \leq \frac{H(P_{S^k}) + \log\log\frac{q^k}{ \epsilon}}{C} + \frac{\log\frac{1}{ \epsilon} + 1}{C_1} + \frac{6(4C_2)^2}{CC_1},
\end{align}
where $C_2\triangleq \max_{y\in\mathcal Y}\frac{\max_{x\in\mathcal X}P_{Y|X}(y|x)}{\min_{x\in\mathcal X}P_{Y|X}(y|x)}$.
\end{lemma}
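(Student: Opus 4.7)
The plan is to adapt the Lyapunov-function argument of \cite[Theorem~1 and Appendix~II]{Naghshvar2}, which handles the uniform prior $P_{S^k}(s) = q^{-k}$, to an arbitrary source prior $P_{S^k}$. The core strategy is unchanged: decompose $\eta_k = \tau_1 + (\eta_k - \tau_1)$ where $\tau_1 \triangleq \min\{t\colon \max_i \rho_i(Y^t) \geq 1 - \tilde\delta\}$ with $\tilde\delta \triangleq (1 + \max\{\log q^k, \log(1/\epsilon)\})^{-1}$ chosen to match the threshold in \eqref{EJS2}, and analyze the two phases separately with two different Lyapunov functions—an entropy-type function in the ``communication'' regime $t \leq \tau_1$ and a log-odds function in the ``confirmation'' regime $\tau_1 < t \leq \eta_k$.

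For the first phase, I would verify the identity
\begin{align*}
\mathbb{E}[H(\boldsymbol{\rho}(Y^{t})) \mid Y^{t-1}] = H(\boldsymbol{\rho}(Y^{t-1})) - \mathrm{EJS}(\boldsymbol{\rho}(Y^{t-1}), \gamma_t),
\end{align*}
(cf.\ \cite[Lemma~4]{Naghshvar2}), which holds for any posterior regardless of prior. Combined with \eqref{EJS1}, this makes $H(\boldsymbol{\rho}(Y^t)) + tC$ a submartingale in $t$, so Doob's optional stopping theorem, together with a one-step overshoot estimate controlled by $C_2$, yields
$\mathbb{E}[\tau_1] \leq H(P_{S^k})/C + (\log\log(q^k/\epsilon) + O(1))/C.$
The generalized prior enters \emph{only} through the initial entropy $H(\boldsymbol{\rho}(Y^0)) = H(P_{S^k})$, which in the uniform case of \cite{Naghshvar2} reduces to $\log q^k$.

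For the second phase, I would employ the log-odds Lyapunov function $U(t) \triangleq \log\frac{\max_i \rho_i(Y^t)}{1 - \max_i \rho_i(Y^t)}$. The hypothesis \eqref{EJS2} and a bounded-jump estimate in terms of $C_2$ together yield a submartingale with drift at least $C_1(1 - \tilde\delta)$; optional stopping at $\eta_k$ then gives $\mathbb{E}[\eta_k - \tau_1] \leq \log(1/\epsilon)/C_1 + O(1)/C_1$. Summing the two phase bounds produces \eqref{eta_k_ub}, with $H(P_{S^k})$ replacing $\log q^k$ in the leading term and the residual $\log q^k$ inside the $\tilde\delta$ threshold retained, since that quantity governs the alphabet size and the confirmation-phase overshoot, not the initial uncertainty.

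The main obstacle is bookkeeping: $\log q^k$ plays two distinct roles in \cite[Appendix~II]{Naghshvar2}, as the initial entropy on one hand and as the threshold-calibration parameter $\tilde\delta$ on the other, and I must confirm that only the first role is affected by the substitution to $H(P_{S^k})$. Because the EJS drift identity and the overshoot constant $C_2$ are both prior-agnostic, no new analytic estimate is required; the proof reduces to reruning the Naghshvar-Javidi-Wigger optional-stopping calculation with $H(P_{S^k})$ at the initial condition and leaving every other quantity as in the original derivation.
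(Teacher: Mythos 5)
Your proposal takes essentially the same route as the paper. The paper's entire proof consists of the single observation that in \cite[Appendix~II, Eq.~(71)]{Naghshvar2} the message prior only enters through the initial value $H(\boldsymbol\rho(Y^0))$ of the entropy Lyapunov function, so substituting the uniform prior with a generic $P_{S^k}$ simply replaces $\log q^k$ by $H(P_{S^k})$ in the leading term, while $\log q^k$ survives in the $\log\log$ overshoot term and in the threshold $\tilde\delta$ because those depend on the alphabet size, not the initial uncertainty. You rediscovered exactly this by unfolding the two-phase optional-stopping argument (entropy submartingale before $\tau_1$, log-odds submartingale afterward, overshoot controlled by $C_2$) and flagging the dual role of $\log q^k$; the bookkeeping concern you raise is precisely the point the paper's proof treats as the only thing needing verification, so the two arguments coincide.
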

Since the MaxEJS code satisfies \eqref{EJS1}--\eqref{EJS2} for all non-degenerate DMCs by \cite[Proposition 2]{Naghshvar2}, and the SED code \cite[Sec. V-B]{Naghshvar2} satisfies \eqref{EJS1}--\eqref{EJS2} for non-degenerate symmetric binary-input DMCs by \cite[Proposition 4]{Naghshvar2}, we conclude from \eqref{eta_k_ub} and \eqref{infty_entropy} that they satisfy \eqref{lambda_k_ub0}.

\subsection{A DSS with $f=\infty$}\label{pf_achieve_B}
Fixing a $(q,\{t_n\}_{n=1}^{\infty})$ DSS with $f=\infty$, we show that $E(R)$ \eqref{converse2} is achievable by a buffer-then-transmit code that buffers the arriving symbols at times $t=1,\dots,t_k$ and operates as a JSCC reliability function \eqref{converse2}-achieving code with block encoding for $k$ symbols $S^k$ of a (fully accessible) DS with prior $P_{S^k}$ at times $t\geq t_k+1$ (e.g., the MaxEJS code \cite{Naghshvar2}). To this end, we show an achievability (upper) bound on the expected stopping time of the buffer-then-transmit code.

We denote by $\eta_k'$ the stopping time of the buffer-then-transmit code. We denote by $\eta_k$ the stopping time of a code with block encoding that achieves the JSCC reliability function \eqref{converse2} for a fully accessible source, and we denote by $\epsilon_k$ its error probability at $\eta_k$ \eqref{error_constraint}. Since the decoding starts after time $t_k$, we have
\begin{align}\label{eta_k0}
   \eta_k' =  t_k + \eta_k.
\end{align}

We invoke Lemma~\ref{lemma_s1} with $\epsilon\leftarrow \epsilon_k$ to upper bound $\mathbb E[\eta_k]$ on the right side of \eqref{eta_k0} and obtain an achievability bound on the expected decoding time $\mathbb E[\eta_k']$ of the buffer-then-transmit code:
\begin{align}\label{achieve_sub0_0}
   \mathbb E[\eta_k'] \leq \left(\frac{H(P_{S^k})}{C} + \frac{\log \frac{1}{\epsilon_k}}{C_1}\right)(1+o(1)) +  t_k.
\end{align}
For any DSS satisfying assumptions in Theorem~\ref{thm_2}, plugging \eqref{achieve_sub0_0} into \eqref{reliabilityfunc2}, we obtain \eqref{NC_reliability}. Since $f=\infty$, the achievability bound \eqref{NC_reliability} is equal to \eqref{converse2}. 

\subsection{A DSS with $f<\infty$}\label{pf_achieve_C}
Fixing a $(q,\{t_n\}_{n=1}^{\infty})$ DSS with $f<\infty$, we show that $E(R)$ \eqref{converse2} is achievable by a code with instantaneous encoding that implements the instantaneous encoding phase at times $t=1,2,\dots,t_k$ and operates as a JSCC reliability function \eqref{converse2}-achieving code with block encoding for $k$ symbols $S^k$ of a (fully accessible) DS with prior $P_{S^k|Y^{t_k}}$ at times $t\geq t_k+1$, where $Y_1,\dots, Y_{t_k}$ are the channel outputs generated in the instantaneous encoding phase. To this end, we will use Lemmas~\ref{lemma_s2}--\ref{lemma_s4}, stated below, together with Lemma~\ref{lemma_s1} in Appendix~\ref{pf_achieve_A} to obtain an achievability (upper) bound on the expected stopping time of the code. 

We fix an error probability $\epsilon_k$ \eqref{error_constraint}. We denote by $\eta_k$ the stopping time that ensures $\epsilon_k$ of a JSCC reliability function-achieving code with block encoding for $k$ symbols with prior $P_{S^k|Y^{t_k}}$. The stopping time $\eta_k'$ of the code with instantaneous encoding described above is
\begin{align}\label{eta_k'}
   \eta_k' = t_k + \eta_k
\end{align}
and its error probability is $\epsilon_k$.

The \emph{directed information} $I(A^n\rightarrow B^n)$ from a sequence $A^n$ to a sequence $B^n$ is defined as \cite{Massey}
\begin{equation}\label{DI}
    I(A^n\rightarrow B^n) = \sum_{i=1}^n I(A^i;B_i|B^{i-1}).
\end{equation}
The directed information captures the information due to the causal dependence of $B^n$ on $A^n$.

To upper bound the expected decoding time $\mathbb E[\eta_k']$, it suffices to upper bound $\mathbb E[\eta_k]$ \eqref{eta_k'}. Lemmas~\ref{lemma_s2}--\ref{lemma_s4}, stated below, show the behavior of the mutual information $I(S^k;Y^{t_k})$ as $k\rightarrow\infty$ generated by the instantaneous encoding phase in Section~\ref{belief_phase}.
\begin{lemma}\label{lemma_s2}
Fix a $(q,\{t_n\}_{n=1}^{\infty})$ DSS, and fix a non-degenerate DMC with capacity $C$ \eqref{capacity} and the maximum KL divergence $C_1$ \eqref{C1}. The instantaneous encoding phase that operates at times $t=1,2,\dots,t_k$ in Section~\ref{belief_phase} gives rise to
\begin{align}\label{lemma_I_eq}
    I(S^k; Y^{t_k}) = t_kC - I(X^{t_k}\rightarrow Y^{t_k}|S^k).
\end{align}
\end{lemma}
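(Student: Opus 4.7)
The plan is to apply the chain rule to $I(S^k;Y^{t_k})$ and exploit two structural consequences of the instantaneous encoding phase: (i) by \eqref{PXYPX*}, the channel input marginal conditional on past outputs equals the capacity-achieving distribution, $P_{X_t|Y^{t-1}}=P_X^*$, hence the induced channel output marginal satisfies $P_{Y_t|Y^{t-1}}=P_Y^*$; and (ii) the channel is memoryless, so $Y_t-X_t-(S^k,X^{t-1},Y^{t-1})$ is a Markov chain regardless of how $X_t$ is produced (the encoder randomization does not affect this).

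First, I would write
\begin{align*}
I(S^k;Y^{t_k}) = \sum_{i=1}^{t_k} I(S^k;Y_i\mid Y^{i-1}) = \sum_{i=1}^{t_k}\bigl[H(Y_i\mid Y^{i-1}) - H(Y_i\mid S^k,Y^{i-1})\bigr].
\end{align*}
Using consequence (i), $H(Y_i\mid Y^{i-1}) = H(P_Y^*)$ for every $i=1,\dots,t_k$. For the second term, I would insert $X^i$ and write
\begin{align*}
H(Y_i\mid S^k,Y^{i-1}) = H(Y_i\mid X^i,S^k,Y^{i-1}) + I(X^i;Y_i\mid S^k,Y^{i-1}).
\end{align*}
By consequence (ii), $H(Y_i\mid X^i,S^k,Y^{i-1}) = H(Y_i\mid X_i)$, and because the marginal of $X_i$ is $P_X^*$ (again by \eqref{PXYPX*}), averaging over $X_i$ yields $H(Y_i\mid X_i) = H(Y\mid X)$ under the joint law $P_X^*P_{Y|X}$.

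Combining the two pieces and using $C = H(P_Y^*) - H(Y\mid X)$, each term in the chain-rule sum reduces to
\begin{align*}
I(S^k;Y_i\mid Y^{i-1}) = C - I(X^i;Y_i\mid S^k,Y^{i-1}).
\end{align*}
Summing over $i=1,\dots,t_k$ and recognizing the right-hand sum as the conditional directed information \eqref{DI} with $A^n\leftarrow X^{t_k}$, $B^n\leftarrow Y^{t_k}$ (conditioned on $S^k$) yields \eqref{lemma_I_eq}. The only point requiring care is verifying the Markov chain $Y_i-X_i-(X^{i-1},S^k,Y^{i-1})$ despite the encoder randomization; this follows directly from the DMC assumption, since given the realized $X_i$, the distribution of $Y_i$ is $P_{Y|X}(\cdot\mid X_i)$ independently of everything else. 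No convergence or asymptotic argument is needed—this is an exact identity that holds for every $k$.
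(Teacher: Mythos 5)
Your proof is correct and follows essentially the same route as the paper's: both start from the chain rule $I(S^k;Y^{t_k})=\sum_{t=1}^{t_k} I(S^k;Y_t\mid Y^{t-1})$, both invoke the Markov chain induced by DMC memorylessness together with the capacity-matching property \eqref{PXYPX*}, and both arrive at the per-step identity $I(S^k;Y_t\mid Y^{t-1})=C-I(X^t;Y_t\mid Y^{t-1},S^k)$ before summing. The only cosmetic difference is that the paper obtains this identity by expanding $I(S^k,X^t;Y_t\mid Y^{t-1})$ via the chain rule in two different orders and dropping the term that vanishes by the Markov chain, whereas you decompose $H(Y_t\mid Y^{t-1})$ and $H(Y_t\mid S^k,Y^{t-1})$ directly and evaluate $H(P_Y^*)$ and $H(Y\mid X)$ separately.
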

\begin{proof}
Appendix~\ref{pf_lemma_s2}.
\end{proof}
Lemma~\ref{lemma_assump_b}, stated next, displays the implications of assumption $(\mathrm{b})$ in Theorem~\ref{thm_2}. Given a DSS, we extract all the \emph{distinct} symbol arriving times from $t_1\leq t_2\leq \dots$, and we denote the sequence of distinct symbol arriving times by
\begin{align}\label{distinct_time}
    d_1<d_2<\dots
\end{align}
For example, if a DSS emits a source symbol every $\lambda\geq 1$ channel uses \eqref{periodic}, then the symbol arriving times are equal to the distinct symbols arriving times, i.e., $t_n = d_n$, and Lemma~\ref{lemma_assump_b} below trivially holds.
\begin{lemma}\label{lemma_assump_b} Fix a $(q,\{t_n\}_{n=1}^{\infty})$ DSS with $f<\infty$ and $f$ satisfying assumption $(\mathrm{b})$ in Theorem~\ref{thm_2}. Then,
\begin{itemize}
 \item[(i)] The time interval between consecutive symbol arriving times satisfies
 \begin{align}\label{assump_b_0}
     t_{n+1}- t_n = o(n), n=1,2,\dots;
 \end{align}
 \item[(ii)] The DSS has an infinite number of distinct symbol arriving times $d_{n'}$, $n'=1,2,\dots$.
\end{itemize}
\end{lemma}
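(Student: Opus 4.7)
The plan is to derive both assertions from the single observation that assumptions $(\mathrm{a})$--$(\mathrm{b})$, together with the hypothesis $f<\infty$, guarantee $f\in(0,\infty)$; after that, both parts follow directly from the defining limit $f=\lim_{n\to\infty} n/t_n$ in \eqref{freq}. To see $f>0$, note that assumption~$(\mathrm{a})$ gives $\bunderline H>0$, and the chain $H(P_Y^*)\geq H(Y|X)\geq \log(1/p_{\max})$ makes the right-hand side of \eqref{assump_b} non-negative, so $f$ is strictly larger than a non-negative quantity. Combined with $f<\infty$, this yields $0<f<\infty$, equivalently $t_n/n\to 1/f \in (0,\infty)$.

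For part~(i), my plan is to decompose
\[
\frac{t_{n+1}-t_n}{n}=\frac{n+1}{n}\cdot \frac{t_{n+1}}{n+1}-\frac{t_n}{n}.
\]
Since $t_n/n\to 1/f$ and $t_{n+1}/(n+1)\to 1/f$ while $(n+1)/n\to 1$, both summands on the right-hand side converge to $1/f$, and their difference converges to $0$. Monotonicity of the arrival times ensures that the ratio is non-negative, which gives exactly $t_{n+1}-t_n=o(n)$.

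For part~(ii), my plan is a contradiction argument. Suppose that the sequence of distinct arrival times $d_1<d_2<\dots$ were finite, with largest element $d_M<\infty$. Then the monotonicity of $\{t_n\}$ would force $t_n\leq d_M$ for every $n$, and hence $n/t_n\geq n/d_M\to\infty$, contradicting $f<\infty$. Consequently, infinitely many distinct arrival times must exist.

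I do not anticipate any substantive obstacle: both statements are elementary consequences of the positivity and finiteness of the arrival-rate limit $f$, and no further structure of the DSS or the channel is needed beyond confirming $0<f<\infty$. The only point warranting care is the verification that the threshold on the right-hand side of \eqref{assump_b} is non-negative, which is precisely what pins $f$ strictly above zero and, in turn, makes the asymptotic equivalence $t_n\sim n/f$ available in the proof of~(i).
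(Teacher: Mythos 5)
Your proof is correct and follows the same route as the paper: establish $0<f<\infty$ from assumption~$(\mathrm{b})$ (whose right-hand side is non-negative because $H(P_Y^*)\geq \log\tfrac{1}{p_{\max}}$), then derive both claims from $t_n/n\to 1/f$. Your explicit telescoping decomposition for part~(i) makes precise what the paper's terse Cauchy-sequence remark leaves implicit, and your contradiction argument for part~(ii) is a cleaner rendering of the same underlying observation---that $f<\infty$ forces $t_n\to\infty$---which the paper expresses via linear sandwiching bounds.
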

\begin{proof}
(i) Assumption $(\mathrm{b})$ and $f<\infty$ ensure that $f\in\left(\frac{1}{\bunderline{H}}\left(H(P_Y^*)-\log\frac{1}{p_{\max}}\right),\infty\right)$. Thus, $\{\frac{t_n}{n}\}, n=1,2,\dots$ is a Cauchy sequence, and \eqref{assump_b_0} follows.

(ii) The DSS has an infinite number of distinct symbol arriving times since $0<f<\infty$ implies that there exist two positive functions $g_1,g_2$ with $g_1(n)=\Omega(n)$ and $g_2(n)=O(n)$ such that the symbol arriving time is bounded between $g_1(n)\leq t_n\leq g_2(n)$, and the symbol arriving interval is constrained by \eqref{assump_b_0}.
\end{proof}

Lemma~\ref{lemma_s4}, stated next, shows the asymptotic behavior of $I(X^{t_k}\rightarrow Y^{t_k}|S^k)$ in \eqref{lemma_I_eq}.
\begin{lemma}\label{lemma_s4}
Fix a $(q,\{t_n\}_{n=1}^{\infty})$ DSS that satisfies $(\mathrm{a})$--$(\mathrm{b})$ and $f<\infty$, and fix a non-degenerate DMC with capacity $C$ \eqref{capacity} and the maximum KL divergence $C_1$ \eqref{C1}. The instantaneous encoding phase that operates at times $t=1,2,\dots,t_k$ in Section~\ref{belief_phase} satisfies
\begin{align}\label{lemma_I_eq2}
    I(X^{t_k}\rightarrow Y^{t_k}|S^k) = o(t_k),
\end{align}
where $\lim_{k\rightarrow\infty}\frac{o(t_k)}{t_k} = 0$.
\end{lemma}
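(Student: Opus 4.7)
The plan is to upper bound $I(X^{t_k}\to Y^{t_k}|S^k)$ by the total randomization entropy of the instantaneous encoding phase, and then to show this sum is sublinear in $t_k$. Since $Y_t$ depends on $(X^t, Y^{t-1}, S^k)$ only through $X_t$, and since the randomization rule \eqref{PXZY} makes $X_t - (Z_t, Y^{t-1}) - S^k$ a Markov chain, the summand of the directed information reduces to $I(X^t; Y_t | Y^{t-1}, S^k) = I(X_t; Y_t | Y^{t-1}, S^k) \leq H(X_t | Y^{t-1}, S^k) = H(X_t | Z_t, Y^{t-1})$, so $I(X^{t_k}\to Y^{t_k}|S^k) \leq \sum_{t=1}^{t_k} H(X_t | Z_t, Y^{t-1})$.

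Next, I would bound each randomization entropy in terms of the group-prior gap $\Delta(y^{t-1}) \triangleq \sum_{z \in \overline{\mathcal X}(y^{t-1})} (\pi_z(y^{t-1}) - P_X^*(z)) = \tfrac12 \|\boldsymbol\pi(y^{t-1}) - P_X^*\|_1$. Plugging the randomization probabilities from \eqref{PXZY} into the entropy formula and using $H_2(p) \leq p\log(e/p)$ yields the pointwise bound $\sum_{z \in \mathcal X} \pi_z(y^{t-1}) H(X_t | Z_t = z, y^{t-1}) \leq g(\Delta(y^{t-1}))$, where $g(\Delta) \triangleq 3\Delta \log|\mathcal Y| + \Delta \log(e|\mathcal X|/\Delta)$ is continuous, concave on $[0, |\mathcal X|]$, and satisfies $g(0) = 0$. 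The partitioning rule \eqref{SD} then gives $\pi_z(y^{t-1}) - P_X^*(z) \leq \min_{i \in \mathcal G_z(y^{t-1})} \theta_i(y^{t-1}) \leq M(y^{t-1}) \triangleq \max_i \theta_i(y^{t-1})$ for every $z \in \overline{\mathcal X}(y^{t-1})$, so $\Delta(y^{t-1}) \leq |\mathcal X|\, M(y^{t-1})$.

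The hard part will be showing $\mathbb E[M(Y^{t-1})] \to 0$ as $t \to \infty$. I would split the source alphabet $[q]^{N(t)}$ into the typical set $T_n \triangleq \{i \colon P_{S^n}(i) \leq e^{-n\bunderline H}\}$ (with $n = N(t)$) and its complement. The atypical part vanishes under assumption $(\mathrm{a})$: $\mathbb E[\max_{i \in T_n^c} \theta_i(Y^{t-1})] \leq \mathbb E[\sum_{i \in T_n^c} \theta_i(Y^{t-1})] = \mathbb P[S^{N(t)} \in T_n^c] \to 0$. For the typical part, the Bayes expression $\theta_i(y^{t-1}) = P_{S^{N(t)}}(i)\, P(y^{t-1}|i)/P(y^{t-1})$, combined with the channel bound $P(y^{t-1}|i) \leq p_{\max}^{t-1}$ and the identity $P(y^{t-1}) = \prod_s P_Y^*(y_s)$ implied by \eqref{PXYPX*}, gives $\theta_i(y^{t-1}) \leq e^{-N(t)\bunderline H} \prod_{s=1}^{t-1} (p_{\max}/P_Y^*(y_s))$ uniformly in $i \in T_n$. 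A law-of-large-numbers argument on the i.i.d.\ sequence $Y_1, \ldots, Y_{t-1} \sim P_Y^*$ yields $\prod_s 1/P_Y^*(Y_s) \leq e^{(t-1)(H(P_Y^*) + \epsilon)}$ with probability tending to one, so on this typical set of $Y^{t-1}$,
\begin{equation*}
\max_{i \in T_n} \theta_i(Y^{t-1}) \leq \exp(-N(t)\bunderline H + (t-1)(H(P_Y^*) - \log(1/p_{\max}) + \epsilon)),
\end{equation*}
and assumption $(\mathrm{b})$, namely $f \bunderline H > H(P_Y^*) - \log(1/p_{\max})$, makes this exponent tend to $-\infty$ for $\epsilon$ small enough. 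The atypical-$Y^{t-1}$ contribution is absorbed via the trivial bound $\max_i \theta_i \leq 1$ multiplied by a vanishing probability.

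To conclude, concavity of $g$ and Jensen's inequality give $h_t \triangleq H(X_t | Z_t, Y^{t-1}) \leq \mathbb E[g(\Delta(Y^{t-1}))] \leq g(\mathbb E[\Delta(Y^{t-1})]) \leq g(|\mathcal X|\, \mathbb E[M(Y^{t-1})]) \to 0$. Because the instantaneous encoding phase is $k$-agnostic, $h_t$ depends only on $t$, and Lemma~\ref{lemma_assump_b} ensures $t_k \to \infty$ under assumption $(\mathrm{b})$. Cesàro's theorem therefore yields $\sum_{t=1}^{t_k} h_t = o(t_k)$, which delivers $I(X^{t_k}\to Y^{t_k}|S^k) = o(t_k)$. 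The crux of the argument is the third paragraph, where assumptions $(\mathrm{a})$ and $(\mathrm{b})$ must cooperate: $(\mathrm{a})$ controls the atypical source sequences, and $(\mathrm{b})$ forces the exponential likelihood-ratio bound on the typical set to decay to zero.
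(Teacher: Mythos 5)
Your proof is correct and follows the paper's overall strategy: bound the directed information by the cumulative randomization entropy $\sum_{t=1}^{t_k} H(X_t\mid Z_t,Y^{t-1})$, relate each term to the group-prior gap via the partitioning rule \eqref{SD}, and show the source priors vanish under assumptions $(\mathrm a)$--$(\mathrm b)$ using the Bayes decomposition $\theta_i(y^{t-1}) \le P_{S^{N(t)}}(i)\,p_{\max}^{t-1}/\prod_s P_Y^*(y_s)$ together with the law of large numbers on the i.i.d.\ $Y_s\sim P_Y^*$, concluding with a Ces\`aro argument.

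The one genuine divergence is in how the partitioning rule is converted into a vanishing quantity. You control $M(y^{t-1})=\max_{i}\theta_i(y^{t-1})$, which forces the typical/atypical split of $[q]^{N(t)}$ so that assumption $(\mathrm a)$ (which constrains the information of the \emph{random true} sequence, not all sequences) can be applied — the atypical part is killed by $\mathbb P[S^{N(t)}\in T_n^c]\to 0$ and the typical part by the deterministic bound $P_{S^{N(t)}}(i)\le e^{-N(t)\bunderline H}$. The paper instead uses the fact that $S^{N(t)}\in\mathcal G_{Z_t}(Y^{t-1})$, so in the expectation over $(Z_t,Y^{t-1})$ the relevant minimum is dominated by $\theta_{S^{N(t)}}(Y^{t-1})$, the prior of the true realization itself; assumption $(\mathrm a)$ then applies directly in probability, combined with the LLN event through a Fr\'echet-type inequality. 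Your route trades that one-line trick for a typical-set decomposition of the source alphabet; both work, with comparable effort.

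Two small points: (i) the factor $\log|\mathcal Y|$ in your $g(\Delta)$ appears to be a slip for $\log|\mathcal X|$ (or $\log(|\mathcal X|-1)$), since the randomization alphabet is $\mathcal X$, not $\mathcal Y$; this is immaterial because only $g(0)=0$, continuity, and concavity are used. (ii) When you assert $\Delta(y^{t-1})\le|\mathcal X|M(y^{t-1})$ and then apply Jensen to $g$, you also implicitly need $g$ nondecreasing near the origin, which holds for your $g$ but is worth stating, since Jensen gives $g(\mathbb E\Delta)$, not automatically an upper bound in $\mathbb E[M]$.
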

\begin{proof}
Appendix~\ref{pf_lemma_s4}.
\end{proof}

Using Lemmas \ref{lemma_s1}, \ref{lemma_s2}--\ref{lemma_s4}, we obtain an achievability bound on the expected decoding time $\mathbb E[\eta_k]$ \eqref{eta_k'}:
\begin{subequations}\label{achieve_sub}
\allowdisplaybreaks
\begin{align}\label{achieve_sub0}
   \mathbb E[\eta_k'] 
   &\leq \left(\frac{H(S^k|Y^{t_k})}{C} + \frac{\log \frac{1}{\epsilon_k}}{C_1}\right)(1+o(1)) +  t_k\\ \label{achieve_sub1}
   &= \left(\frac{H(S^k) - I(S^k; Y^{t_k})}{C} + \frac{\log \frac{1}{\epsilon_k}}{C_1}\right)(1+o(1)) +  t_k\\ \label{achieve_sub2}
   & = \left(\frac{H(S^k)}{C} + \frac{\log \frac{1}{\epsilon_k}}{C_1}\right)(1+o(1))
\end{align}
\end{subequations}
where \eqref{achieve_sub0} holds by upper bounding $\mathbb E[\eta_k]$ in \eqref{eta_k'} using \eqref{lambda_k_ub0} with $P_{S^k}\leftarrow P_{S^k|Y^{t_k}=y^{t_k}}$ and taking an expectation with respect to $Y^{t_k}$; \eqref{achieve_sub1} holds by expanding $H(S^k|Y^{t_k})$ in \eqref{achieve_sub0}; \eqref{achieve_sub2} holds by plugging Lemmas~\ref{lemma_s2}~and~\ref{lemma_s4} into $I(S^k; Y^{t_k})$ in \eqref{achieve_sub1} and using the fact that $\frac{o(t_k)}{H(S^k)}\leq \frac{o(t_k)}{t_k}\frac{1}{fH} = o(1)$, true due to the assumptions that the entropy rate $H$ and the symbol arriving rate $f$ are both positive. Plugging the achievability bound \eqref{achieve_sub} into \eqref{reliabilityfunc2}, we conclude that the code with instantaneous encoding achieves \eqref{converse2}. 

\subsection{Proof of Lemma~\ref{lemma_s2}}\label{pf_lemma_s2}
We first write the mutual information $I(S^k, X^t; Y_t|Y^{t-1})$ in two ways:
\begin{subequations}\label{I_0}
\begin{align}\label{I1}
  I(S^k, X^t; Y_t|Y^{t-1}) &=  I(S^k;Y_t|Y^{t-1}) + I(X^t; Y_t| Y^{t-1}, S^k)\\ \label{I2}
  &= I(X^t; Y_t|Y^{t-1}) + I(S^k; Y_t| Y^{t-1}, X^t),
\end{align}
\end{subequations}
where the second term on the right side of \eqref{I2} is equal to $0$ since $Y_t-(Y^{t-1},X^t) - S^k$ is a Markov chain. Thus,
\begin{align}\label{II1}
    I(S^k;Y_t|Y^{t-1}) = I(X^t; Y_t|Y^{t-1}) -I(X^t; Y_t| Y^{t-1}, S^k).
\end{align}

We expand $I(S^k; Y^{t_k})$ on the left side of \eqref{lemma_I_eq} as
\begin{subequations}
\begin{align}\label{III1}
    I(S^k; Y^{t_k}) & = \sum_{t=1}^{t_k}I(S^k; Y_t|Y^{t-1})\\ \label{III2}
                    & = \sum_{t=1}^{t_k} I(X^t; Y_t|Y^{t-1}) -I(X^t; Y_t| Y^{t-1}, S^k)\\ \label{III4}
                    & = t_kC - I(X^{t_k}\rightarrow Y^{t_k}|S^k),
\end{align}
\end{subequations}
where \eqref{III1} is by the chain rule; \eqref{III2} is by plugging \eqref{II1} into \eqref{III1}; \eqref{III4} is by applying the definition of the directed information \eqref{DI} to the second term of \eqref{III2} and plugging \eqref{PXYPX*} and the fact that $Y_i$, $i=1,\dots,t_k$ are i.i.d. according to $P_{Y}^*$ into the first term of \eqref{III2}. The channel outputs $Y_1,Y_2,\dots$ are independent since $Y_t - X_t - Y^{t-1}$ is a Markov chain and $X_t$ is independent of $Y^{t-1}$ \eqref{PXYPX*}. The channel outputs $Y_1,Y_2,\dots$ are identically distributed according to $P_Y^*$ since $X_1, X_2,\dots$ follow the capacity-achieving distribution $P_X^*$ \eqref{PXYPX*}.

\subsection{Proof of Lemma~\ref{lemma_s4}}\label{pf_lemma_s4}
To show \eqref{lemma_I_eq2}, we first upper bound the conditional directed information in \eqref{lemma_I_eq2} as a sum of conditional entropies, and upper bound each conditional entropy by a function of the source prior $\theta_{S^{N(t)}}(Y^{t-1})$. Then, we show that $\theta_{S^{N(t)}}(Y^{t-1})$ converges in probability to zero in time $t$ for $t\in[1,t_k]$ as $k\rightarrow\infty$. Finally, we show that the convergence of the source prior leads to the convergence of the entropy sequence and conclude \eqref{lemma_I_eq2}.

The conditional directed information in \eqref{lemma_I_eq2} can be upper bounded as
\begin{subequations}\label{DI_0}
\allowdisplaybreaks
\begin{align}\label{DI_1}
    I(X^{t_k}\rightarrow Y^{t_k}|S^k) &= \sum_{t=1}^{t_k}I(X^t; Y_t|Y^{t-1},S^k)\\ \label{DI_2}
    & \leq \sum_{t=1}^{t_k} H(X_t|Y^{t-1},S^k)\\ \label{DI_3}
    & = \sum_{t=1}^{t_k} H(X_t|Z_t,Y^{t-1}),
\end{align}
\end{subequations}
where \eqref{DI_1} is by the chain rule, and \eqref{DI_3} holds since $Z_t$ is a deterministic function of $(Y^{t-1},S^k)$ and $X_t-(Z_t,Y^{t-1}) - S^k$ is a Markov chain. 

We upper bound each term in the sum of \eqref{DI_3} using $\theta_{S^{N(t)}}(Y^{t-1})$. Given that $Z_{t} = z$, $Y^{t-1} = y^{t-1}$, if $z\in\underline{\mathcal{X}}(y^{t-1})$, we use \eqref{PXZY} to conclude
\begin{align}\label{HZYequal0}
    H(X_{t}|Z_{t} = z,Y^{t-1} = y^{t-1}) = 0.
\end{align} 
If $z\in\overline{\mathcal{X}}(y^{t-1})$, we rearrange terms in \eqref{SD} to obtain
\begin{subequations}\label{PIP2}
\begin{align}
    1-\frac{P_{X}^*(z)}{\pi_z(y^{t-1})} & \leq 1 - \frac{P_{X}^*(z)}{P_{X}^*(z) + \min_{i\in\mathcal G_z(y^{t-1})}\theta_i(y^{t-1})}\\
    &\leq  \frac{\min_{i\in\mathcal G_z(y^{t-1})}\theta_i(y^{t-1})}{\min_{x\in\mathcal X}P_{X}^*(x)}.
\end{align}
\end{subequations}
We upper bound $H(X_{t}|Z_{t} = z,Y^{t-1} = y^{t-1})$, $z\in\overline{\mathcal{X}}(y^{t-1})$ by
\begin{subequations}\label{HXZY}
\begin{align}\nonumber
    & H(X_{t}|Z_{t} = z,Y^{t-1} = y^{t-1}) \\\label{HXZY0}
    =~& \frac{P_{X}^*(z)}{\pi_z(y^{t-1})}\log\frac{\pi_z(y^{t-1})}{P_{X}^*(z)} + \sum_{x\in\underline {\mathcal X}(y^{t-1})}\frac{p_{z\rightarrow x}}{\pi_z(y^{t-1})}\log\frac{\pi_z(y^{t-1})}{p_{z\rightarrow x}}\\\label{HXZY1}
    \leq~& \frac{P_{X}^*(z)}{\pi_z(y^{t-1})}\log\frac{\pi_z(y^{t-1})}{P_{X}^*(z)} + \left(1-\frac{P_{X}^*(z)}{\pi_z(y^{t-1})}\right)\log\frac{|\mathcal X|-1}{1-\frac{P_{X}^*(z)}{\pi_z(y^{t-1})}}\\\label{HXZY2}
    =~&\left(1-\frac{P_{X}^*(z)}{\pi_z(y^{t-1})}\right)\log(|\mathcal X|-1) + h\left(1-\frac{P_{X}^*(z)}{\pi_z(y^{t-1})}\right)\\ \nonumber
    \leq~& \frac{\min_{i\in\mathcal G_z(y^{t-1})}\theta_i(y^{t-1})}{\min_{x\in\mathcal X}P_{X}^*(x)}\log(|\mathcal X|-1) \\\label{HXZY3}  +~& 2\sqrt{\frac{\min_{i\in\mathcal G_z(y^{t-1})}\theta_i(y^{t-1})}{\min_{x\in\mathcal X}P_{X}^*(x)}},
\end{align}
\end{subequations}
where \eqref{HXZY0} holds by \eqref{pxy} and \eqref{PXZY}; \eqref{HXZY1} holds since the sum in the second term on the right side of \eqref{HXZY0} is maximized if $p_{z\rightarrow x}$ is equiprobable on $\underline {\mathcal X}(y^{t-1})$, and $|\underline {\mathcal X}(y^{t-1})|\leq |\mathcal X|-1$; \eqref{HXZY2} holds by rearranging terms; \eqref{HXZY3} holds by applying the upper bound $h(p)\leq 2\sqrt{p}$ to the binary entropy function in \eqref{HXZY2} and plugging \eqref{PIP2} into \eqref{HXZY2}. Therefore, each term in \eqref{DI_3} is upper bounded as
\begin{subequations}\label{EHZY0}
\allowdisplaybreaks
\begin{align}\nonumber
   & H(X_{t}|Z_{t}, Y^{t-1}) \\\nonumber
   \leq~& \frac{\log(|\mathcal X|-1)}{\min_{x\in\mathcal X}P_{X}^*(x)}\mathbb E\left[\min_{i\in\mathcal G_{Z_{t}}(Y^{t-1})}\theta_i(Y^{t-1})\right] \\\label{EHZY}
   +~& \frac{2}{\sqrt{\min_{x\in\mathcal X}P_{X}^*(x)}}\mathbb E\left[\sqrt{\min_{i\in\mathcal G_{Z_{t}}(Y^{t_k-1})}\theta_i(Y^{t-1})}\right]\\ \nonumber
   \leq~& \frac{\log(|\mathcal X|-1)}{\min_{x\in\mathcal X}P_{X}^*(x)}\mathbb E\left[\theta_{S^{N(t)}}(Y^{t-1})\right]\\\label{EHZY2}
   +~& \frac{2}{\sqrt{\min_{x\in\mathcal X}P_{X}^*(x)}}\mathbb E\left[\sqrt{\theta_{S^{N(t)}}(Y^{t-1})}\right]\\\label{EHZY3}
    \leq~& \alpha\mathbb E\left[\sqrt{\theta_{S^{N(t)}}(Y^{t-1})}\right],
\end{align}
\end{subequations}
where 
\begin{align}
    \alpha\triangleq \max\left\{\frac{\log(|\mathcal X|-1)}{\min_{x\in\mathcal X}P_{X}^*(x)}, \frac{2}{\sqrt{\min_{x\in\mathcal X}P_{X}^*(x)}}\right\};
\end{align}
\eqref{EHZY} holds by \eqref{HZYequal0} and \eqref{HXZY}; \eqref{EHZY2} holds since $S^{N(t)}\in\mathcal G_{Z_{t}}(Y^{t_k-1})$.

To obtain the asymptotic behavior of $H(X_t|Z_t, Y^{t-1})$ in \eqref{EHZY0}, we proceed to analyze the asymptotic behavior of $\theta_{S^{N(t)}}(Y^{t-1})$. The source prior $\theta_{S^{N(t)}}(Y^{t-1})$ in \eqref{EHZY2} is upper bounded as
\begin{subequations}\label{theta_ub0}
\begin{align}\nonumber
    &\theta_{S^{N(t)}}(Y^{t-1})\\ \label{theta_ub} =~&  P_{S^{N(t)}}(S^{N(t)})\prod_{j=1}^{t-1}\frac{\sum_{x\in\mathcal X} P_{Y|X}(Y_j|x)P_{X_j|Z_j,Y^{j-1}}(x|Z_j, Y^{j-1})}{P_{Y}^*(Y_j)}\\\label{theta_ub2}
    \leq~& P_{S^{N(t)}}(S^{N(t)})\prod_{j=1}^{t-1}\frac{p_{\max}}{P_{Y}^*(Y_j)},
\end{align}
\end{subequations}
where \eqref{theta_ub} holds by \eqref{post_prior} and \eqref{prior_post}; \eqref{theta_ub2} holds since the numerator in the product term of \eqref{theta_ub} is upper bounded by $p_{\max}$ \eqref{P_max}. Given a DSS in Lemma~\ref{lemma_s4} with distinct symbol arriving times $d_{n'}$, $n'=1,2,\dots$ \eqref{distinct_time} ($n'$ is not bounded due to Lemma~\ref{lemma_assump_b} (ii)), we denote the gap between the symbol arriving rate $f$ and the threshold on the right side of \eqref{assump_b} by 
\begin{align}\label{gamma1}
    \gamma \triangleq f - \frac{1}{\bunderline{H}}\left(H(P_Y^*)-\log\frac{1}{p_{\max}}\right)\in(0,\infty).
\end{align}
For any $t\in[d_{n'},d_{n'+1})$, $n'=1,2,\dots$, the source prior $\theta_{S^{N(t)}}(Y^{t-1})$ \eqref{theta_ub0} satisfies
\begin{subequations}\label{PSNT}
\allowdisplaybreaks
\begin{align}\label{PSNT0}
&\mathbb P\left[\frac{1}{t}\log\theta_{S^{N(t)}}(Y^{t-1})\leq -\gamma \bunderline{H} \right]\\\nonumber
    \geq~&  \mathbb P\Bigg[ -\frac{1}{t}\left(\log \frac{1}{P_{S^{N(d_{n'})}}\left(S^{N(d_{n'})}\right)}\right) + \frac{t-1}{t}\log p_{\max}\\\label{PSNT1} &+\frac{1}{t}\sum_{j=1}^{t-1}\log\frac{1}{P_Y^*(Y_j)} \leq -\gamma \bunderline{H} \Bigg]\\\nonumber
    \geq ~& \mathbb P\Bigg[\frac{N(d_{n'})}{d_{n'+1}-1}\left(\frac{1}{N(d_{n'})}\log \frac{1}{P_{S^{N(d_{n'})}}\left(S^{N(d_{n'})}\right)}\right)\\ \nonumber &\geq \log p_{\max}+H(P_Y^*)+\gamma \bunderline{H},\\\label{PSNT2}
    &\frac{t-1}{t}\log p_{\max} +\frac{1}{t}\sum_{j=1}^{t-1}\log\frac{1}{P_Y^*(Y_j)} = \log p_{\max}+H(P_Y^*)\Bigg]\\\nonumber
    \geq ~&\mathbb P\Bigg[\frac{N(d_{n'})}{d_{n'+1}-1}\left(\frac{1}{N(d_{n'})}\log \frac{1}{P_{S^{N(d_{n'})}}\left(S^{N(d_{n'})}\right)}\right)\\\label{PSNT3} & \geq  \log p_{\max}+H(P_Y^*)+\gamma \bunderline{H}\Bigg]\\\label{PSNT4}
      +~&\mathbb P\left[\frac{t-1}{t}\log p_{\max}+ \frac{1}{t}\sum_{j=1}^{t-1}\log\frac{1}{P_Y^*(Y_j)} = \log p_{\max}+H(P_Y^*)\right]\\ \label{PSNT4'}
      -~&1\\\label{PSNT5}
\rightarrow~& 1,
\end{align}
\end{subequations}
as $n'\rightarrow\infty$, where \eqref{PSNT1} holds by plugging \eqref{theta_ub2} into \eqref{PSNT0} and by replacing $N(t)\leftarrow N(d_{n'})$ since $t\in[d_{n'},d_{n'+1})$; \eqref{PSNT2} holds since $t\leq d_{n'+1}-1$ and the event in \eqref{PSNT1} is implied by the events in \eqref{PSNT2}; \eqref{PSNT3}--\eqref{PSNT4'} hold by applying Fr\'{e}chet inequalities \cite{Frechet} to the probability in \eqref{PSNT2}; \eqref{PSNT5} holds since both probabilities in \eqref{PSNT3}--\eqref{PSNT4} converge to $1$ as $n'\rightarrow\infty$: the probability in \eqref{PSNT3} converges to $1$ as $n'\rightarrow\infty$ by Lemma~\ref{lemma_assump_b} (i), the fact that $\liminf_{n'\rightarrow\infty}\frac{N(d_{n'})}{d_{n'}}\geq f$ since $\left\{\frac{N(d_{n'})}{d_{n'}}\right\}_{n'=1}^{\infty}$ is a subsequence of $\left\{\frac{n}{t_n}\right\}_{n=1}^{\infty}$, the lower bound on the symbol arriving rate (assumption $(\mathrm{b})$), the lower bound on the information in $S^{N(d_{n'})}$ (assumption $(\mathrm{a})$), and the fact that $N(d_{n'})\rightarrow\infty$ as $n'\rightarrow\infty$ since $N(d_{n'})\geq n'$; the probability in \eqref{PSNT4} converges to $1$ since the sum over the logarithms of i.i.d. random variables $Y_1,Y_2,\dots$ (they are i.i.d. by the argument below \eqref{III4}) in \eqref{PSNT4} converges to $H(P_{Y}^*)$ by the law of large numbers, and $t\rightarrow\infty$ as $n'\rightarrow\infty$ due to $t\geq d_{n'}$. Rearranging terms in \eqref{PSNT0}, we we conclude that for any $\delta \in(0,1)$, there exists $n_{\delta}\in\mathbb Z_+$, such that for all $n'\geq n_{\delta}$, $t\in[d_{n'}, d_{n'+1})$, the probability in \eqref{expectation_theta} satisfies 
\begin{align}\label{PSNT0_m}
    \mathbb P[\theta_{S^{N(t)}}(Y^{t-1})\leq e^{-\gamma \bunderline{H} t}]> 1-\delta.
\end{align}

We analyze the asymptotic behavior of $H(X_t|Z_t, Y^{t-1})$ in \eqref{EHZY0} using \eqref{PSNT0_m}. 
Using the boundedness of the source prior $\theta_{S^{N(t)}}(Y^{t-1})\in[0,1]$, we upper bound the expectations in the right side of \eqref{EHZY3} as
\begin{align}\nonumber
    \mathbb E\left[\sqrt{\theta_{S^{N(t)}}(Y^{t-1})}\right]
    &\leq \mathbb P\left[\theta_{S^{N(t)}}(Y^{t-1})> e^{-\gamma \bunderline{H} t}\right] \\\label{expectation_theta}
    &+  e^{-\frac{\gamma \bunderline{H}}{2} t} \mathbb P\left[\theta_{S^{N(t)}}(Y^{t-1})\leq  e^{-\gamma \bunderline{H} t}\right]\\\label{expectation_ub}
    &<\delta + e^{-\frac{\gamma \bunderline{H}}{2} d_{n_{\delta}}} (1-\delta),
\end{align}
$\forall t\in[d_{n'},d_{n'+1})$, where \eqref{expectation_ub} holds due to \eqref{PSNT0_m} and the fact that the function $f(p)=p+\beta (1-p)$, $\beta<1$, is monotonically increasing on $p\in[0,1]$.

Plugging \eqref{expectation_ub} into \eqref{EHZY3}, we conclude that for all $n'\geq n_{\delta}$, it holds that $\forall t\in[d_{n'},d_{n'+1})$,
\begin{align}\label{H_asymptotic}
    H(X_t|Z_t, Y^{t-1})< \alpha\left(\delta + e^{-\frac{\gamma \bunderline{H}}{2} d_{n_{\delta}}} (1-\delta)\right).
\end{align}

We proceed to show \eqref{lemma_I_eq2} using \eqref{H_asymptotic}. Dividing both sides of \eqref{DI_0} by $t_k$ and taking $k\rightarrow\infty$, we upper bound the left side of \eqref{DI_0} as
\begin{subequations}\label{limsup_I0}
\begin{align}\nonumber
    &\limsup_{k\rightarrow\infty}\frac{1}{t_k}I(X^{t_k}\rightarrow Y^{t_k}|S^k)\\\label{limsup_I1}
    \leq ~&\limsup_{k\rightarrow\infty}\frac{1}{t_k}\sum_{t=1}^{t_k} H(X_t|Z_t, Y^{t-1})\\\nonumber
    < ~& \limsup_{k\rightarrow\infty}\frac{1}{t_k}\Bigg(|t_k-d_{n_{\delta}}|\alpha\left(\delta + e^{-\frac{\gamma \bunderline{H}}{2} d_{n_{\delta}}} (1-\delta)\right)\\\label{limsup_I2}
    &+d_{n_{\delta}}\log|\mathcal X|\Bigg)\\\label{limsup_I3}
    =~& \alpha\left(\delta + e^{-\frac{\gamma \bunderline{H}}{2} d_{n_{\delta}}} (1-\delta)\right),
\end{align}
\end{subequations}
where \eqref{limsup_I1} holds by \eqref{DI_3}; \eqref{limsup_I2} holds by upper bounding $H(X_t|Z_t, Y^{t-1})\leq \log|\mathcal X|$ for $t\leq d_{n_{\delta}}$ and upper bounding $H(X_t|Z_t, Y^{t-1})$ by \eqref{H_asymptotic} for $t>d_{n_{\delta}}$; \eqref{limsup_I3} holds since Lemma~\ref{lemma_assump_b} (i) implies that $d_{n_{\delta}}<\infty$ for some $n_{\delta}\in \mathbb Z_+$, and $f<\infty$ implies that $t_k\rightarrow\infty$ as $k\rightarrow\infty$.

Since $\delta$ can be made arbitrarily small while $d_{n_{\delta}}$ can be made arbitrarily large, we conclude \eqref{lemma_I_eq2}. 

\section{Decoding before the final arrival time}\label{decoding_after_tk}
For transmitting the first $k$ source symbols of a $(q,\{t_n\}_{n=1}^{\infty})$ DSS with $p_{S,\max}<1$, we show that if we decode before the final arrival time $t_k$, then the error probability $\mathbb P[S^k\neq \hat S^k_t]$, $t< t_k$ will not vanish with $k$ for any code with instantaneous encoding. 

For any $t< t_k$, $y^t\in\mathcal Y^t$, we lower bound the conditional error probability as
\begin{align}\label{P_lb}
   \mathbb P[S^k\neq \hat S^k_t|Y^t = y^t] \geq 1-\max_{i\in[q]^k}P_{S^k|Y^t}(i|y^t),
\end{align}
where the equality is attained by the MAP decoder.
Taking an expectation of both sides of \eqref{P_lb}, we obtain
\begin{subequations}\label{error_prob_expect}
\allowdisplaybreaks
\begin{align}\nonumber
    &\mathbb P[S^k\neq \hat S^k_t]\\\label{error_prob_expect1}  \geq~& 1-\mathbb E\left[\max_{i\in[q]^k}P_{S^k|Y^t}(i|Y^t)\right] \\\label{error_prob_expect2}
     =~&1-\mathbb E\left[\max_{i\in[q]^k}\sum_{j\in[q]^{N(t)}}P_{S^k|S^{N(t)}}(i|j)P_{S^{N(t)}|Y^t}(j|Y^t)\right]\\\label{error_prob_expect3}
    \geq~& 1 - \max_{i\in[q]^k,j\in[q]^{N(t)}}P_{S^k|S^{N(t)}}(i|j)\\\label{error_prob_expect4}
    \geq~& 1 - \prod_{n=N(t)+1}^{k}\max_{s\in[q],s'\in[q]^{n-1}}P_{S_n|S^{n-1}}(s|s')\\\label{error_prob_expect5}
    \geq~& 1 - (p_{S,\max})^{k-N(t)}\\\label{error_prob_expect6}
    >~&0,
\end{align}
\end{subequations}
where \eqref{error_prob_expect2} holds since $S^k-S^{N(t)}-Y^t$ is a Markov chain; \eqref{error_prob_expect3} holds by upper bounding $P_{S^k|S^{N(t)}}(i|j)$ in \eqref{error_prob_expect2} by its maximum; \eqref{error_prob_expect4} holds by writing $P_{S^k|S^{N(t)}}(\cdot|\cdot)$ as a product of probabilities $\{P_{S_n|S^{n-1}}(\cdot|\cdot)\}_{n=N(t)+1}^k$ and maximizing each term in the product; \eqref{error_prob_expect5} holds by upper bounding each term in the product by $p_{S,\max}$ \eqref{PSMAX}; \eqref{error_prob_expect6} holds by the assumption $p_{S,\max}<1$.

\section{Proof of Remark~\ref{rmk_drop_random}}\label{pf_drop_random}
We show that after the instantaneous encoding phase drops the randomization step \eqref{XbarXunder}--\eqref{PXZY} and only transmits $Z_t$ \eqref{Zt1} as the channel input,
it continues to satisfy the sufficient condition in \eqref{pre_coding_condition} under assumption $(\mathrm{b}^\prime)$. To this end, we first write $I(S^k; Y^{t_k})$ in \eqref{pre_coding_condition} as a sum of mutual informations. Then, we show that all source priors converge pointwise to zero in time during the symbol arriving period $[1,t_k]$ as $k\rightarrow\infty$; this implies that group priors converge pointwise to the capacity-achieving probabilities. Finally, we show that the convergence of the group priors implies that the summands of $I(S^k; Y^{t_k})$ converge to the capacity $C$ and conclude \eqref{pre_coding_condition}. Given channel outputs $y^{t-1}\in\mathcal Y^{t-1}$, we denote the source sequence in $[q]^{N(t)}$ that has the maximum source prior by
\begin{align}\label{pf_drop_random_eq3}
    i^* \triangleq \argmax_{i\in[q]^{N(t)}}\theta_i(y^{t-1}).
\end{align}

To expand $I(S^k; Y^{t_k})$ in \eqref{pre_coding_condition}, we first notice that \eqref{I_0}--\eqref{III2} continue to hold, thus $I(S^k; Y^{t_k})$ is equal to \eqref{III2}. The second term on the right side of \eqref{III2} is equal to zero since $X_t$ is a deterministic function of $(Y^{t-1},S^k)$, thus,
\begin{align}\label{pf_drop_random_eq2}
    I(S^k; Y^{t_k}) = \sum_{t=1}^{t_k}I(X_t; Y_t|Y^{t-1}).
\end{align}

We proceed to analyze the asymptotic behavior of $\theta_{i^*}(y^{t-1})$ \eqref{pf_drop_random_eq3}. Since the encoder drops the randomization step \eqref{XbarXunder}--\eqref{PXZY} and only transmits $Z_t$ \eqref{Zt1} as the channel input, the posterior update \eqref{prior_post} becomes \eqref{prior_post2}.
Upper bounding $P_{S^{N(t)}|S^{N(t-1)}}(\cdot|\cdot)$ in the prior update \eqref{post_prior} by the maximum symbol arriving probability $p_{S,\max}^{N(t)-N(t-1)}$ \eqref{PSMAX}, and upper bounding the numerator by $p_{\max}$ and the denominator by $p_{\min}$ in the fraction on the right side of \eqref{prior_post2}, we obtain an upper bound on the source prior $\theta_{i^*}(y^{t-1})$ as
\begin{align}\label{sp_ub}
    \theta_{i^*}(y^{t-1}) \leq p_{S,\max}^{N(t)}\left(\frac{p_{\max}}{p_{\min}}\right)^{t-1}
\end{align}
for all $i\in[q]^{N(t)}$, $y^{t-1}\in\mathcal Y^{t-1}$.
Given a DSS that satisfies assumption $(\mathrm{b}^\prime)$ with $f<\infty$ and distinct symbol arriving times $d_{n'}$, $n'=1,2,\dots$ \eqref{distinct_time} ($n'$ is not bounded due to Lemma~\ref{lemma_assump_b}), similar to \eqref{gamma1}, we denote the gap between the symbol arriving rate $f$ and the threshold in assumption $(\mathrm{b}^\prime)$ by
\begin{align}\label{def_gamma'}
    \gamma' \triangleq f - \frac{1}{\log\frac{1}{p_{S,\max}}}\left(\log\frac{1}{p_{\min}}-\log\frac{1}{p_{\max}}\right).
\end{align}
For any $t\in[d_{n'}, d_{n'+1})$, $n'=1,2,\dots$, the source prior $\theta_{i^*}(y^{t-1})$ for any $i^*\in[q]^{N(t)}$, $y^{t-1}\in\mathcal Y^{t-1}$ in \eqref{sp_ub} satisfies 
\begin{subequations}\label{pf_drop_random_eq5}
\allowdisplaybreaks
\begin{align}\nonumber
    &\limsup_{n'\rightarrow\infty}\frac{1}{t}\log \theta_{i^*}(y^{t-1})\\\label{pf_drop_random_eq5a}
    \leq~& -\left(\liminf_{n'\rightarrow\infty}\frac{N(t)}{t}\log \frac{1}{p_{S,\max}}\right) + \log\frac{p_{\max}}{p_{\min}}\\\label{pf_drop_random_eq5b}
     \leq~& -\left(\liminf_{n'\rightarrow\infty}\frac{N(d_{n'})}{d_{n'+1}-1}\log \frac{1}{p_{S,\max}}\right) + \log\frac{p_{\max}}{p_{\min}}\\ \label{pf_drop_random_eq5c}
     \leq~& - f\log\frac{1}{ p_{S,\max}} + \log\frac{p_{\max}}{p_{\min}}\\\label{pf_drop_random_eq5d}
     =~& -\gamma'\log \frac{1}{p_{S,\max}},
\end{align}
\end{subequations}
where \eqref{pf_drop_random_eq5a} is by taking the logarithm, dividing by $t$, and taking $n'$ to infinity on both sides of \eqref{sp_ub}; \eqref{pf_drop_random_eq5b} holds since $\frac{N(d_{n'})}{d_{n'+1}-1}\leq  \frac{N(t)}{t}$ for all $t\in[d_{n'},d_{n'+1})$; \eqref{pf_drop_random_eq5c} holds due to Lemma~\ref{lemma_assump_b} (i) and the fact that $\left\{\frac{N(d_{n'})}{d_{n'}}\right\}_{n'=1}^{\infty}$ is a subsequence of $\left\{\frac{n}{t_n}\right\}_{n=1}^{\infty}$; \eqref{pf_drop_random_eq5d} holds by plugging \eqref{def_gamma'} into \eqref{pf_drop_random_eq5c}. Rearranging terms of \eqref{pf_drop_random_eq5}, we conclude that the maximum source prior \eqref{pf_drop_random_eq3} converges pointwise: for any $y^{t-1}\in\mathcal Y^{t-1}$,
\begin{align}\label{as_convergence}
    \lim_{n'\rightarrow\infty}\theta_{i^*}(y^{t-1})  = 0,~\forall t\in[d_{n'},d_{n'+1}),
\end{align}
where $t\rightarrow\infty$ for any $t\in[d_{n'},d_{n'+1})$ as $n'\rightarrow\infty$.

The convergence of the source prior \eqref{as_convergence} implies the convergence of the group prior. The partitioning rule in \eqref{SD} ensures that the group prior $\pi_x(y^{t-1})$, $\forall x\in\mathcal X$ is simultaneously upper and lower bounded as
\begin{subequations}\label{pf_drop_random_eq4}
\begin{align}\label{pf_drop_random_eq4a}
     P_{X}^*(x) + \theta_{i^*}(y^{t-1}) &\geq \pi_x(y^{t-1}) \\\label{pf_drop_random_eq4b}
    & \geq P_{X}^*(x) - |\mathcal X|\theta_{i^*}(y^{t-1}),
\end{align}
\end{subequations}
where the upper bound \eqref{pf_drop_random_eq4a} holds by \eqref{SD} and \eqref{pf_drop_random_eq3};  the lower bound \eqref{pf_drop_random_eq4b} holds since all $|\mathcal X|$ group priors are upper bounded by \eqref{pf_drop_random_eq4a}. From \eqref{as_convergence} and \eqref{pf_drop_random_eq4}, we conclude that for all $x\in\mathcal X$, $y^{t-1}\in\mathcal Y^{t-1}$, 
\begin{align}\label{as_group_convergence}
    \lim_{n'\rightarrow\infty}\pi_x(y^{t-1}) = P_X^*(x), ~t\in[d_{n'},d_{n'+1}).
\end{align}

Next, we show the convergence of the group prior \eqref{as_group_convergence} implies the convergence of the mutual information $I(X_t; Y_t|Y^{t-1})$ in the sum of \eqref{pf_drop_random_eq2}. We expand the mutual information $I(X_t; Y_t|Y^{t-1})$ as
\begin{align}\nonumber
    &I(X_t; Y_t|Y^{t-1})
    = \sum_{y^{t-1}\in\mathcal Y^{t-1}}P_{Y^{t-1}}(y^{t-1})\\ \label{IXYYt-1_a} &\sum_{y\in\mathcal Y}\sum_{x\in\mathcal X}P_{Y|X}(y|x)\pi_{x}(y^{t-1})\log\frac{P_{Y|X}(y|x)}{\sum_{x'\in\mathcal X}P_{Y|X}(y|x')\pi_{x'}(y^{t-1})},
\end{align}
which achieves the channel capacity $C$ if $\pi_x(y^{t-1}) = P_X^*(x)$ for all $x\in\mathcal X$, $y^{t-1}\in\mathcal Y^{t-1}$. Using \eqref{as_group_convergence} and \eqref{IXYYt-1_a}, we conclude 
\begin{align}\label{In'C}
     \lim_{n'\rightarrow\infty}I(X_t; Y_t|Y^{t-1}) = C, ~t\in[d_{n'},d_{n'+1}).
\end{align}
Since $I(X_t; Y_t|Y^{t-1})\leq C$, one can write the equivalent of \eqref{In'C} as: for all $\epsilon>0$, there exists a $n_{\epsilon}\in\mathbb N$, such that for all $n'\geq n_{\epsilon}$, it holds that
\begin{align}\label{CI_convergence}
    I(X_t; Y_t|Y^{t-1}) >C- \epsilon, ~\forall t\in[d_{n'},d_{n'+1}).
\end{align}

We proceed to show \eqref{pre_coding_condition} using \eqref{pf_drop_random_eq2} and \eqref{CI_convergence}. Dividing both sides of \eqref{pf_drop_random_eq2} by $t_k$ and taking $k\rightarrow\infty$, we lower bound the left side of \eqref{pf_drop_random_eq2} as
\begin{subequations}
\begin{align}\label{pf_drop_random_eq6a}
  \lim_{k\rightarrow\infty}\frac{1}{t_k} I(S^k; Y^{t_k}) &=  \lim_{k\rightarrow\infty}\frac{1}{t_k}\sum_{t=1}^{t_k}I(X_t; Y_t|Y^{t-1})\\ \label{pf_drop_random_eq6b}
    & > \lim_{k\rightarrow\infty}\frac{1}{t_k}(t_k-d_{n_{\epsilon}})(C-\epsilon)\\\label{pf_drop_random_eq6c}
    & = C - \epsilon,
\end{align}
\end{subequations}
where \eqref{pf_drop_random_eq6b} holds by lower bounding $I(X_t; Y_t|Y^{t-1})$ by \eqref{CI_convergence} for $t> d_{n_{\epsilon}}$, and lower bounding  $I(X_t; Y_t|Y^{t-1})$ by zero for $t\leq d_{n_{\epsilon}}$.

Since $\epsilon$ in \eqref{pf_drop_random_eq6c} can be made arbitrarily small, and $ \lim_{k\rightarrow\infty}\frac{1}{t_k} I(S^k; Y^{t_k})\leq C$ by data processing, we conclude by the squeeze theorem that under assumption $(\mathrm{b}^\prime)$, the instantaneous encoding phase satisfies \eqref{pre_coding_condition} even if it does not randomize the channel input.

\section{The approximating instantaneous SED rule ensures \eqref{pi0pi1_2}}\label{approximate_SED}
We show that the approximating instantaneous SED rule in step~(iii$^\prime$) ensures \eqref{pi0pi1_2}. Since the left side of \eqref{pi0pi1_2} is equal to the minimum value on the right side of \eqref{nstar}, it suffices to show that the latter is upper bounded by $\theta_{\mathcal S_j}(y^{t-1})$.

We denote
\begin{subequations}\label{cn}
\begin{align}
    c_n &\triangleq (\pi_0(y^{t-1})- n\theta_{\mathcal S_j}(y^{t-1})) - (\pi_1(y^{t-1})+ n\theta_{\mathcal S_j}(y^{t-1}))\\
        &= 2\pi_0(y^{t-1}) - 1 - 2 n\theta_{\mathcal S_j}(y^{t-1}),
\end{align}
\end{subequations}
and we rewrite the minimization problem in \eqref{nstar} as
\begin{align}\label{cn_problem}
    \min_{n\in\{\bunderline{n}, \bar n\}}|c_n|.
\end{align}
By definitions of $\bunderline n$ \eqref{under_n} and $\bar{n}$ \eqref{upper_n}, it holds that $\bar n - \bunderline n = 1$. Thus
\begin{align}\label{diff_cn}
    c_{\bunderline n} - c_{\bar n} = 2\theta_{\mathcal S_j}(y^{t-1}).
\end{align}
Since $c_{\bunderline n}\geq 0$ and $c_{\bar n}\leq 0$, we conclude from \eqref{diff_cn} that
\begin{align}
    \min\{c_{\bunderline n},|c_{\bar n}|\}\leq \theta_{\mathcal S_j}(y^{t-1}),
\end{align}
which means that \eqref{cn_problem} is upper bounded by $\theta_{\mathcal S_j}(y^{t-1})$.

\section{Cardinality of common randomness}\label{Cara}
We adapt the proof in \cite[Theorem 19]{PPV} to our codes with instantaneous encoding to show that for any $\langle k,R,\epsilon\rangle$ code with instantaneous encoding that allows $|\mathcal U|=\infty$, there exists a $\langle k,R,\epsilon\rangle$ code with instantaneous encoding that allows $|\mathcal U|\leq 2$. Fixing a source length $k$, for $u=1,2,\dots,\infty$, we define $\mathcal G_u\subseteq \mathbb R^2$ as
\begin{align}\nonumber
    \mathcal G_u\triangleq \{(R,\epsilon): \exists~ \langle k,R,\epsilon\rangle ~&\text{code with instantaneous encoding}\\ &\text{that allows}~|\mathcal U|\leq u\}.
\end{align}

We show that $\mathcal G_1$ is a connected set. To see this, we arbitrarily select two elements in $\mathcal G_1$, denoted by $\Lambda_1\triangleq (R_1,\epsilon_1)$ and $\Lambda_2\triangleq (R_2,\epsilon_2)$. We denote $\Lambda_3 \triangleq (\min\{R_1,R_2\},\max\{\epsilon_1,\epsilon_2\})$. According to the rate and the error constraints in \eqref{time_constraint}--\eqref{error_constraint}, $\Lambda_i\in \mathcal G_1$, $i\in\{1,2\}$, indicates that all elements $(R,\epsilon)$ that simultaneously satisfy $R\leq R_i$ and $\epsilon\geq \epsilon_i$ belong to $\mathcal G_1$ (see the shaded region in Fig.~\ref{Fig_cara}). As a result, the line segments $L_i\triangleq \{\lambda \Lambda_i + (1-\lambda)\Lambda_3,\lambda\in[0,1]\}$, $i=1,2$, belong to $\mathcal G_1$, and the arc $L_1\cup L_2$ joins $\Lambda_1$ and $\Lambda_2$. 

\begin{figure}[h!]
\centering
\includegraphics[trim = 40mm 220mm 80mm 30mm, clip, width=10cm]{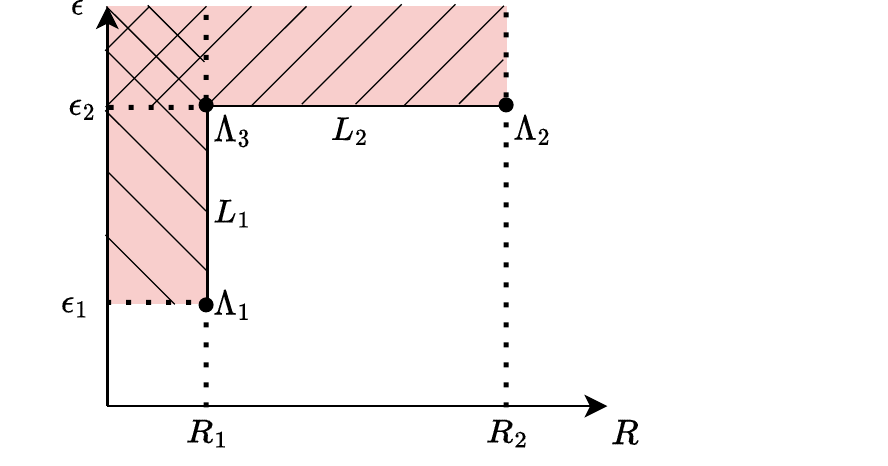}
\caption{Elements $\Lambda_1$ and $\Lambda_2$ are jointed by the arc $L_1\cup L_2$.} 
\label{Fig_cara}
\end{figure}

Since $\mathcal G_1\subseteq \mathbb R^2$, $\mathcal G_1$ is a connected set, and $\mathcal G_{\infty}$ is a convex hull of $\mathcal G_1$, by Fenchel-Eggleston-Carath\'{e}odory’s theorem  for connected sets \cite[Theorem 18(ii)]{Convexity}, any element in $\mathcal G_{\infty}$ can be represented as a convex combination of $2$ elements in $\mathcal G_1$, in other words, $\mathcal G_2 = \mathcal G_{\infty}$. 

\section{Zero-error code for degenerate DMCs}\label{zero-error}
In Appendix~\ref{zero-error_A}, we present our zero-error code with instantaneous encoding and common randomness for transmitting $k$ symbols of a DSS over a degenerate DMC. In Appendix~\ref{zero-error_B}, we present the proof that the code in Appendix~\ref{zero-error_A} achieves zero error for any rate asymptotically below $\frac{C}{H}$.

For a degenerate DMC \eqref{degenerate} in Theorem~\ref{thm_ze}, we denote by $P_{Y|X}\colon \mathcal X\rightarrow\mathcal Y$ its single-letter transition probability and denote by $P_{X}^*$ its capacity-achieving distribution. We relabel $x$ in \eqref{degenerate_a} by $\mathrm{ACK}$, and relabel $x'$ in \eqref{degenerate_b} by $\mathrm{NACK}$. We denote by $E_G(P_{Y|X}, R_c)$ Gallager's error exponent \cite{Gallager_e}, where $R_c$ is the channel coding rate in nats per channel use\footnote{For the consistency of notation, we use the same unit (i.e., nats per channel use) for $R_c$ as that in Gallager's paper \cite{Gallager_e}. The unit of all other rates in this paper is symbols per channel use.}. 
We denote by $R(\ell)$ the rate of the code used in the communication phase of the $\ell$-th block, and we denote by $\hat S^k(\ell)$ the estimate formed at the end of the communication phase of the $\ell$-th block.

\subsection{Zero-error code with instantaneous encoding and common randomness}\label{zero-error_A}
Similar to \cite{Burnashev}\cite{Yamamoto}\cite{Caire}\cite{Truong}, our code is divided into blocks. Each block contains a communication phase and a confirmation phase. The first block is different from the blocks after it, since it uses a Shannon limit-achieving code in the communication phase, whereas the blocks after the first block use random coding for all source sequences in alphabet $[q]^k$. We introduce the first block and the $\ell$-th block, $\ell\geq 2$, respectively. 

The first block is transmitted according to steps i)--ii) below. See Fig.~\ref{Fig_block_code} $(\mathrm{a})$ below for the diagram of the time division of transmitted blocks. See Fig.~\ref{Fig_block_code} $(\mathrm{b})$--(c) for the diagram of the first block.

i) Communication phase. The first $k$ symbols $S^k$ of the DSS in Theorem~\ref{thm_ze} is transmitted via a Shannon limit-achieving code with instantaneous encoding and common randomness at rate $R(1)<\frac{C}{H}$ symbols per channel use. (Such a code has been presented in the proof sketch of Theorem~\ref{thm_ze}. Namely, if $f=\infty$, we use a buffer-then-transmit code that implements the block encoding scheme in \cite[Theorem~2]{V}; if $f<\infty$, we precede the block encoding scheme in \cite[Theorem~2]{V} by an instantaneous encoding phase that satisfies \eqref{pre_coding_condition}.) At the end of the communication phase, the decoder yields an estimate $\hat S^k(1)$ of the source $S^k$ using the channel outputs that it has received in this phase.

ii) Confirmation phase. The encoder knows $\hat S^k(1)$ since it knows the channel outputs through the noiseless feedback. The encoder repeatedly transmits $\mathrm{ACK}$ if $S^k=\hat S^k(1)$, and transmits $\mathrm{NACK}$ if $S^k\neq \hat S^k(1)$, for $n_k$ channel uses. We pick $n_k$ as
\begin{align}\label{nk_1}
 n_k=\delta k,
\end{align}
where $\delta\in(0,1)$ can be made arbitrarily small. 
At the end of the confirmation phase, if the decoder receives a $y$, then it terminates the transmission and output $\hat S^{k}_{\eta_k} = \hat S^{k}(1)$; otherwise, the encoder transmits the next block.

\begin{figure}[h!]
\centering
\includegraphics[trim = 25mm 210mm 30mm 20mm, clip, width=10cm]{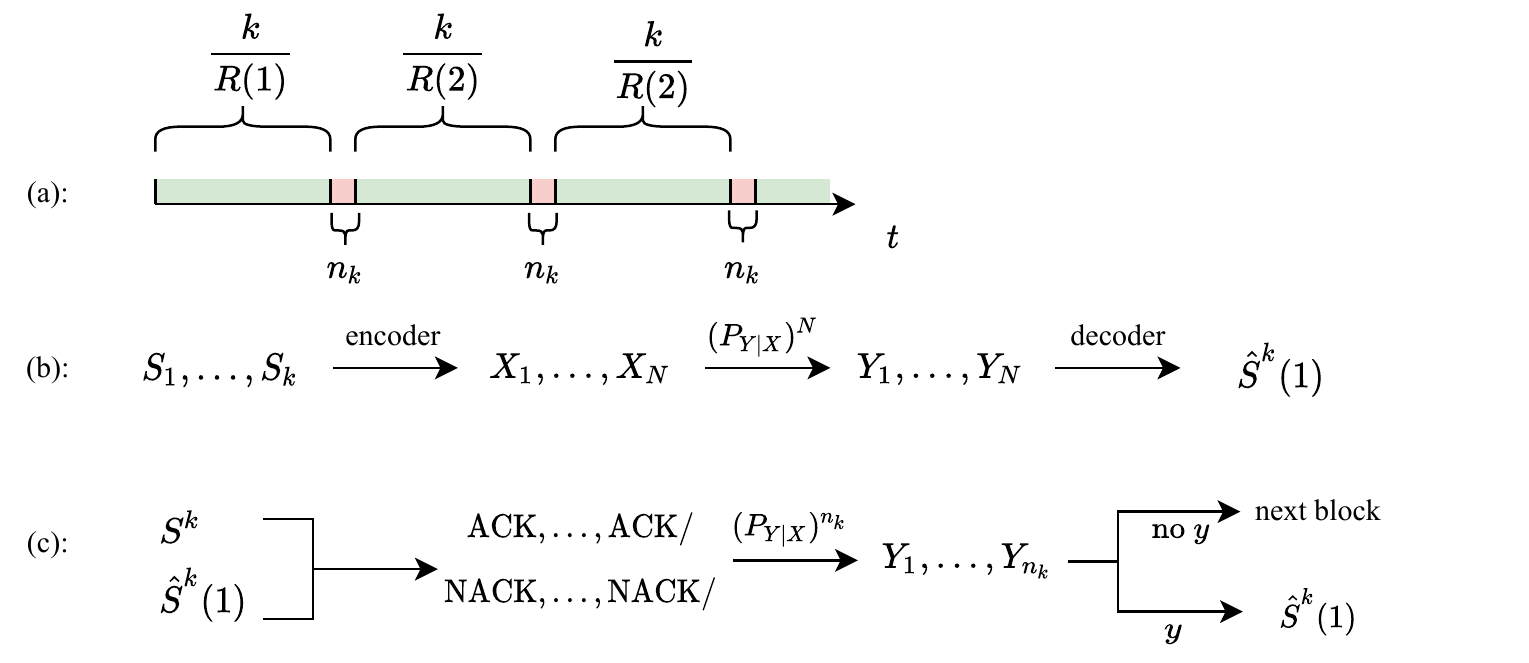}
\caption{(a) Time division of the transmitted blocks. The green regions represent the communication phases, and the red regions represent the confirmation phases. The \emph{expected} length of the first communication phase is $\frac{k}{R(1)}$. The length of the $\ell$-th communication phase, $\ell\geq 2$, is $\frac{k}{R(2)}$ since the random coding scheme has a fixed length. The length of the confirmation phase is $n_k$ \eqref{nk_1}. (b) Communication phase of the first block. The codeword length $N$ can be random with expectation $\mathbb E[N]=\frac{k}{R(1)}$. (c) Confirmation phase of the first block.} 
\label{Fig_block_code}
\end{figure}

The $\ell$-th block, $\ell\geq 2$, is transmitted according to steps iii)--iv) below.

iii) Communication phase. For every sequence in the alphabet $[q]^k$ of $S^k$, the encoder generates a codeword via random coding according to the capacity-achieving distribution $P_X^*$ at rate $R(2)<\frac{C}{\log q}$ symbols per channel use. At the end of the communication phase, the maximum likelihood (ML) decoder yields an estimate $\hat S^k(\ell)$ of the source symbols $S^k$ using the channel outputs that it has received in this phase.

iv) Confirmation phase. The encoder, the decoder, and the stopping rule are the same as those in the first block with $\hat S^k(1)\leftarrow \hat S^k(\ell)$.

The random codebook is refreshed in every retransmitted block and is known by the decoder. This gives rise to the following observations:\\
1) The codewords transmitted in the communication phases of the $\ell=1,2,\dots$ blocks are independent from each other;\\
2) As a result of 1), the channel outputs of the $\ell=1,2,\dots$ blocks are independent from each other;\\
3) The codewords transmitted in the communication phase of the $\ell = 2,3,\dots$ blocks are i.i.d. random vectors. (The codeword in the first block is excluded since the first block need not use random coding in the communication phase);\\
4) As a result of 3), the channel outputs of the $\ell=2, 3, \dots$ blocks are i.i.d. random vectors.\\
We will use observations 2) and 4) in the proof below.

\subsection{Proof of Theorem~\ref{thm_ze}}\label{zero-error_B}
Fix any $R<\frac{C}{H}$. We show that by adjusting $R(1)$, the rate of the Shannon limit-achieving code in the communication phase of the first block, to $R$, the code in Appendix~\ref{zero-error_A} achieves zero error with rate converging to $R$ \eqref{RkR}.

We denote by $\eta_k$ and $T_k$ the stopping time and the number of blocks transmitted after the first block until the stopping time, respectively. We denote by $A_{\ell}$ the event that no $y$ is received in the confirmation phase of the $\ell$-th block. 

Since the decoder will never receive $y$ if $\mathrm{ACK}$ is transmitted in the confirmation phase, the error probability of the code in Appendix~\ref{zero-error_A} is zero, i.e.,
\begin{align}\label{error_zero}
    \mathbb P[S^k\neq \hat S^k(1+T_k)] = 0,
\end{align}
where $1+T_k$ represents the total number of blocks transmitted until the stopping time, and $T_k$ is almost surely finite as a result of Lemmas~\ref{lemma_ETk_ub}~and~\ref{lemma_channel_error} below. This confirms that the code in Section~\ref{zero-error_A} achieves zero error \eqref{error_constraint}. 

To analyze the behavior of the rate $R_k = \frac{k}{\mathbb E[\eta_k]}$, we first observe that since the expected length of the first block is $\frac{k}{R(1)}+\delta k$ and the (fixed) length of the $\ell$-th block, $\ell\geq 2$, is $\frac{k}{R(2)}+\delta k$, the expected decoding time $\mathbb E[\eta_k]$ is equal to 
\begin{align}\label{etaNa}
 \mathbb E[\eta_k] &= \frac{k}{R(1)}+\delta k + \mathbb E[T_k]\left(\frac{k}{R(2)}+\delta k\right).
\end{align}
We bound the expected number of blocks $T_k$ transmitted after the first block using Lemmas~\ref{lemma_ETk_ub}~and~\ref{lemma_channel_error}, stated next.

\begin{lemma}\label{lemma_ETk_ub}
The number of blocks $T_k$ transmitted after the first block satisfies
\begin{align}\label{lemma_ETk_ub_eq}
    \mathbb E[T_k] \leq \frac{\mathbb P[S^k\neq \hat S^k (1)]+(1-P_{Y|X}(y|\mathrm{ACK}))^{\delta k}}{1-\mathbb P[S^k\neq \hat S^k (\ell)]-(1-P_{Y|X}(y|\mathrm{ACK}))^{\delta k}}.
\end{align}
\end{lemma}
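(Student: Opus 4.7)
The plan is to write $\mathbb{E}[T_k] = \sum_{\ell=1}^\infty \mathbb{P}[T_k \geq \ell]$ and to bound each tail probability via conditional independence across blocks. Observe that $T_k \geq \ell$ iff no $y$ is received in any of the confirmation phases of blocks $1,2,\ldots,\ell$, i.e., $T_k\geq \ell \iff \bigcap_{j=1}^{\ell} A_j$. Letting $E_\ell \triangleq \{S^k \neq \hat S^k(\ell)\}$, the degenerate nature of the channel yields the exact conditional probability
\begin{align*}
\mathbb{P}[A_\ell \mid S^k=s] = \mathbb{P}[E_\ell \mid S^k=s] + \mathbb{P}[\bar E_\ell \mid S^k=s]\bigl(1-P_{Y|X}(y|\mathrm{ACK})\bigr)^{\delta k},
\end{align*}
because on $E_\ell$ the encoder transmits $\mathrm{NACK}$, which never produces $y$, while on $\bar E_\ell$ it transmits $\mathrm{ACK}$ for $\delta k$ consecutive uses. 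Writing $\beta \triangleq (1-P_{Y|X}(y|\mathrm{ACK}))^{\delta k}$, this gives $\mathbb{P}[A_\ell \mid S^k=s] \leq \mathbb{P}[E_\ell \mid S^k=s] + \beta$.

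The key structural step is to invoke observations 2) and 4) in Appendix~\ref{zero-error_A}: conditional on $S^k$, the channel outputs in the different blocks are mutually independent, and those in blocks $\ell \geq 2$ are i.i.d.\ (since the random codebook is refreshed and drawn from the same product distribution in each such block). Therefore, conditional on $S^k=s$, the events $A_1,A_2,\ldots$ are independent, and $A_2,A_3,\ldots$ are i.i.d. Setting $p_1(s) \triangleq \mathbb{P}[A_1\mid S^k=s]$ and $p(s) \triangleq \mathbb{P}[A_\ell \mid S^k=s]$ for $\ell \geq 2$, I obtain
\begin{align*}
\mathbb{E}[T_k \mid S^k=s] = \sum_{\ell=1}^\infty p_1(s)\,p(s)^{\ell-1} = \frac{p_1(s)}{1-p(s)}.
\end{align*}

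To finish, I need a bound on $p(s)$ that does not depend on $s$, so that $\frac{1}{1-p(s)}$ can be pulled out of the outer expectation. For $\ell \geq 2$, every codeword in the random codebook is drawn i.i.d.\ from $P_X^*$, hence the joint distribution of $(\hat S^k(\ell),Y^{\cdot},S^k,\mathrm{codebook})$ is invariant under relabeling of the source alphabet $[q]^k$; it follows that $\mathbb{P}[E_\ell\mid S^k=s]$ does not depend on $s$, and equals the unconditional error probability $\mathbb{P}[S^k\neq \hat S^k(\ell)]$. Consequently $p(s) \leq \mathbb{P}[S^k\neq \hat S^k(\ell)] + \beta$ uniformly in $s$, while taking the expectation of $p_1(S^k)$ yields $\mathbb{E}[p_1(S^k)] \leq \mathbb{P}[S^k\neq \hat S^k(1)] + \beta$. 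Combining these two bounds with the identity for $\mathbb{E}[T_k \mid S^k=s]$ and taking the outer expectation over $S^k$ gives exactly \eqref{lemma_ETk_ub_eq}.

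The main obstacle is the symmetry argument invoked in the last step: one must be careful that the random coding scheme in blocks $\ell \geq 2$ treats all source sequences symmetrically (so that $\mathbb{P}[E_\ell\mid S^k=s]$ is independent of $s$), which holds because the codebook is an i.i.d.\ random mapping $[q]^k\to \mathcal X^{k/R(2)}$ under $P_X^*$ and the ML decoder breaks ties in a symmetric fashion. The conditional independence of $A_1,A_2,\ldots$ given $S^k$ is a direct consequence of the refreshing of the codebook across blocks together with the memorylessness of the channel; once it is established, the geometric-series summation and the crude bounds above are routine.
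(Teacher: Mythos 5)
Your proposal is correct and takes essentially the same route as the paper: both reduce $\mathbb{E}[T_k]$ to a geometric-series computation using block-wise independence of the events $A_\ell$, then bound $\mathbb{P}[A_\ell]$ by $\mathbb{P}[S^k\neq\hat S^k(\ell)] + (1-P_{Y|X}(y|\mathrm{ACK}))^{\delta k}$. The only stylistic difference is that you condition on $S^k$ throughout and invoke the codebook-symmetry argument explicitly to justify that $\mathbb{P}[A_\ell\mid S^k=s]$ is constant in $s$ for $\ell\geq 2$, whereas the paper reaches the same unconditional independence directly via its observations 2) and 4) in Appendix~\ref{zero-error_A}.
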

\begin{proof}
Appendix~\ref{pf_lemma_ETk_ub}.
\end{proof}

\begin{lemma}\label{lemma_channel_error}
Given a DSS with entropy rate $H>0$ satisfying assumptions (a)--(b) in Theorem~\ref{thm_2}, the probability of erroneously decoding $S^k$ at the end of the communication phase of the $\ell$-th block is upper bounded as
\allowdisplaybreaks
\begin{subequations}\label{channel_error0}
\begin{align}\label{channel_error}
    &\mathbb P[S^k\neq \hat S^k(1)] \leq e^{-\frac{k}{R(1)}\left(\frac{C}{1+o(1)} - \frac{H\left(S^k\right)}{k}R(1)\right)},\\\label{channel_error_l}
   &\mathbb P[S^k\neq \hat S^k(\ell)] \leq e^{-\frac{k}{R(2)} E_G\left(P_{Y|X}, R(2)\log q\right)}, ~\ell=2,3,\dots
\end{align}
\end{subequations}
\end{lemma}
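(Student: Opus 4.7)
The plan is to establish the two bounds in \eqref{channel_error0} independently, each by invoking a classical result tailored to the code used in the corresponding communication phase.

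For \eqref{channel_error}, I will rely on the Shannon-limit-achieving joint source-channel code described in the proof sketch of Theorem~\ref{thm_ze} as the building block of the first communication phase: if $f=\infty$, it is the buffer-then-transmit wrapper around the stop-feedback code of \cite[Theorem~2]{V}, and if $f<\infty$, it is the instantaneous encoding phase of Section~\ref{belief_phase} followed by the same stop-feedback block encoding scheme (the latter case is why assumptions $(\mathrm a)$--$(\mathrm b)$ are invoked, since they guarantee \eqref{pre_coding_condition} via Lemma~\ref{lemma_s4}). As argued in the proof sketch of Theorem~\ref{thm_ze}, this composite code inherits the expected decoding time bound \eqref{lambda_k_ub0} of Lemma~\ref{lemma_s1} with $C_1$ replaced by $C$, namely
\begin{align}
\mathbb E[\eta_k^{(1)}] \leq \left(\frac{H(S^k)}{C} + \frac{\log(1/\epsilon)}{C}\right)(1+o(1)),
\end{align}
where $\epsilon = \mathbb P[S^k\neq \hat S^k(1)]$ and $\eta_k^{(1)}$ is the length of the first communication phase. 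Since the first communication phase is operated at rate $R(1)$ symbols per channel use, $\mathbb E[\eta_k^{(1)}] = k/R(1)$. Substituting this into the above inequality, rearranging to isolate $\log(1/\epsilon)$, and exponentiating both sides yields \eqref{channel_error}.

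For \eqref{channel_error_l}, the communication phase of any block $\ell\geq 2$ is a classical fixed-blocklength random channel code: $q^k$ codewords are drawn i.i.d.\ according to the capacity-achieving distribution $P_X^*$, the blocklength equals $n=k/R(2)$ channel uses, and the decoder is the ML decoder. The channel coding rate is therefore $R_c = (\log q^k)/n = R(2)\log q$ nats per channel use, which by assumption satisfies $R_c<C$. Directly applying Gallager's random-coding error-exponent bound~\cite{Gallager_e} gives
\begin{align}
\mathbb P[S^k\neq \hat S^k(\ell)] \leq e^{-n\, E_G(P_{Y|X},R_c)} = e^{-(k/R(2))\, E_G(P_{Y|X},\, R(2)\log q)}.
\end{align}

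Neither step presents a serious obstacle. The only subtlety lies in \eqref{channel_error}, where one must carefully track the $o(1)$ term supplied by Lemma~\ref{lemma_s1}; in particular, the rearrangement must preserve $(1+o(1))$ as a multiplicative factor on $C$ rather than absorbing it into $\log(1/\epsilon)$, so that the right-hand side of \eqref{channel_error} takes exactly the form stated. Assumption $H>0$ together with $R(1)<C/H$ ensures that the exponent on the right-hand side of \eqref{channel_error} is positive for all sufficiently large $k$, so that the bound becomes non-trivial in the asymptotic regime that will be used subsequently in the proof of Theorem~\ref{thm_ze}.
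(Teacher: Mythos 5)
Your proof is correct and takes essentially the same approach as the paper: \eqref{channel_error} is obtained by inverting the expected-decoding-time bound of Lemma~\ref{lemma_s1} with $C_1\leftarrow C$ for the composite Shannon-limit-achieving code of the first communication phase (following Appendices~\ref{pf_achieve_B}--\ref{pf_achieve_C}), and \eqref{channel_error_l} is obtained from Gallager's random-coding exponent for the fixed-length random code at rate $R(2)\log q < C$. One detail the paper spells out and you leave implicit: since $S^k$ need not be uniform on $[q]^k$, one must observe that Gallager's exponent bounds the error of each message individually (i.e., it holds under the maximum error probability criterion), so that the bound \eqref{channel_error_l} applies regardless of the source distribution $P_{S^k}$.
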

\begin{proof}
Since the block encoding scheme \cite[Theorem~2]{V} satisfies Lemma~\ref{lemma_s1} with $C_1\leftarrow C$, one can follow Appendices~\ref{pf_achieve_B}--\ref{pf_achieve_C} with $C_1\leftarrow C$ to upper bound the expected decoding time of the Shannon limit-achieving code in \cite[Theorem~2]{V} and thereby obtain \eqref{channel_error}.
The error probability \eqref{channel_error_l} holds since the random encoder together with the ML decoder attains Gallager's error exponent \cite{Gallager_e} for channel coding rate (nats per channel use) below $C$. This holds regardless of the distribution of the message because Gallager's error exponent holds under the maximum error probability criterion. 
\end{proof}
Plugging \eqref{lemma_ETk_ub_eq} and \eqref{channel_error0} into the right side of \eqref{etaNa}, we obtain the asymptotic behavior of the rate as
\begin{subequations}
\begin{align}
    \lim_{k\rightarrow \infty}R_k &=  \lim_{k\rightarrow \infty}\frac{k}{\mathbb E[\eta_k]}\\
    &\geq R(1)\frac{1}{1+R(1)\delta}.
\end{align}
\end{subequations}
Letting $R(1)$ be arbitrarily close to $\frac{C}{H}$ and taking $\delta$ to an arbitrarily small number, we conclude \eqref{RkR}.

\subsection{Proof of Lemma~\ref{lemma_ETk_ub}}\label{pf_lemma_ETk_ub}
We establish the pmf of $T_k$ using the probabilities $\mathbb P[A_{\ell}]$, $\ell=1,2,\dots$ The complementary cdf of $T_k$ is given by
\begin{align}\label{Tk1_b}
    \mathbb P[T_k > 0]~& = \mathbb P[A_1],
\end{align}
where \eqref{Tk1_b} holds by the definition of $A_1$ and the stopping rule of the code. We proceed to show the pmf at $T_k= t$, $t\geq 1$ conditioned on $T_k>0$: 
\begin{subequations}\label{Tkl}
\begin{align}\nonumber
    &\mathbb P[T_k = t|T_k>0] \\\label{Tkl_a}
    =~& \mathbb P[A_2\cap\dots\cap A_{t}\cap A^c_{t+1}|A_1]\\\label{Tkl_b}
    =~& \left(\prod_{i=2}^{t}\mathbb P[A_i|A_1,\dots, A_{i-1}]\right)\mathbb P[A^c_{t+1}|A_1,\dots,A_t]\\\label{Tkl_c}
    =~& (\mathbb P[A_2])^{t-1}(1-\mathbb P[A_2]),
\end{align}
\end{subequations}
where \eqref{Tkl_a} is by the stopping rule of the code; \eqref{Tkl_b} is by expanding \eqref{Tkl_a}; \eqref{Tkl_c} is by observations 2) and 4) in Appendix~\ref{zero-error_A}: observation 2) implies that event $A_i$ and its complementary event $A_i^c$ are both independent of $A_1,\dots,A_{i-1}$, $i\geq 2$, observation 4) implies that $\mathbb P[A_i] = \mathbb P[A_2]$, $i\geq 2$. Since the conditional pmf $\mathbb P[T_k = t|T_k>0]$ in \eqref{Tkl} follows a geometric distribution with success probability $1-\mathbb P[A_2]$, its mean is given by
\begin{align}\label{ETk_compute}
    \mathbb E[T_k|T_k>0] = \frac{1}{1-\mathbb P[A_2]}.  
\end{align}
Using \eqref{Tk1_b} and \eqref{ETk_compute}, we obtain $\mathbb E[T_k]$ as
\begin{align}\label{ETk_exact}
    \mathbb E[T_k] = \frac{\mathbb P[A_1]}{1-\mathbb P[A_2]}.
\end{align}
It remains to compute the probability of event $A_{\ell}$ in \eqref{ETk_exact} to conclude \eqref{lemma_ETk_ub_eq}. In the confirmation phase of the $\ell$-th block, $\ell=1,2,\dots$, conditioned on $S^k = \hat S^k(\ell)$, the probability of event $A_{\ell}$ is given by\footnote{For practical implementations, one can choose $\mathrm{ACK}$ as the channel input that achieves the maximum transition probability $\max_{x\in\mathcal X}P_{Y|X}(y|x)$ to increase the probability of receiving a $y$.}
\begin{subequations}\label{confirm_error}
\begin{align}
    \mathbb P[A_{\ell}|S^k = \hat S^k(\ell)] &=  (1-P_{Y|X}(y|\mathrm{ACK}))^{\delta k}.
\end{align}
\end{subequations}
The probability of event $A_\ell$ is upper bounded as
\begin{subequations}\label{PA1}
\begin{align}\nonumber
    \mathbb P[A_{\ell}] &= \mathbb P[A_{\ell}|S^k \neq \hat S^k(\ell)]\mathbb P[S^k \neq \hat S^k(\ell)] \\\label{PA1a}
    &+ \mathbb P[A_{\ell}|S^k = \hat S^k(\ell)]\mathbb P[S^k = \hat S^k(\ell)]\\\label{PA1b}
    &\leq \mathbb P[S^k \neq \hat S^k(\ell)] + \mathbb P[A_{\ell}|S^k = \hat S^k(\ell)],
\end{align}
\end{subequations}
where \eqref{PA1b} holds by upper bounding the first and the last probabilities on the right side of \eqref{PA1a} by $1$. Plugging the upper bound in \eqref{PA1b} into the right side of \eqref{ETk_exact}, we obtain \eqref{lemma_ETk_ub_eq}.

\ifCLASSOPTIONcaptionsoff
  \newpage
\fi







\end{document}